\renewcommand\footnotetextcopyrightpermission[1]{} 
\newcommand{\E}{\mathbb{E}}
\newtheorem*{remark}{Remark}
\begin{document}

\author{Tim Hellemans, Grzegorz Kielanski and Benny Van Houdt}
\affiliation{%
  \institution{University Of Antwerp}
  \streetaddress{Middelheimlaan 1}
  \city{Antwerp}
  \postcode{2000}
  \country{Belgium}}
  
  \title[]{Performance of Load Balancers with Bounded Maximum Queue Length in case of Non-Exponential Job Sizes}

\begin{abstract}
In large-scale distributed systems, balancing the load in an efficient way is crucial in order to achieve low latency. Recently, some load balancing policies have been suggested which are able to achieve a bounded maximum queue length in the large-scale limit. However, these policies have thus far only been studied in case of exponential job sizes. As job sizes are more variable in real systems, we investigate how the performance of these policies (and in particular the value of these bounds) is impacted by the job size distribution.

We present a unified analysis which can be used to compute the bound on the queue length 
in case of phase-type distributed job sizes for four 
load balancing policies. We find that in most cases, the bound on the maximum queue length can be expressed in closed form. In addition, we obtain job size (in)dependent bounds on the expected response time. 

Our methodology relies on the use of the cavity process. That is, we conjecture that the cavity process captures the behaviour of the real system as the system size grows large. For each policy, we illustrate the accuracy of the cavity process by means of simulation.
\end{abstract}

\fancyfoot{}
\maketitle
\thispagestyle{empty}

\section{Introduction}
Load balancing plays a crucial role in achieving low latency in
large-scale clusters. A well studied family of load balancing policies is referred to as \textit{power-of-$d$-choices} load balancing policies. Two of the most prominent examples of this family are the SQ($d$) policy (see e.g.~\cite{mitzenmacher2, vvedenskaya3}) and the LL($d$) policy (see e.g.~\cite{hellemansSIG18}). One of the main theoretical insights from the study of these policies is the sharp decay of the tail of the queue length distribution (see e.g.~\cite{bramsonAAP}). 
More recently, several load balancing policies have been introduced that achieve a
bounded maximum queue length in the large-scale limit (i.e., as the number of servers
tends to infinity). We can distinguish $3$ seminal papers 
in this area that are considered in this paper.

Hyperscalable load balancing was introduced in \cite{van2019hyper} (and further studied in \cite{van2021optimal, zhou2021asymptotically}). For this policy the dispatcher maintains
an upper bound on the actual queue lengths, updates these bounds at random times and assigns jobs greedily.
In this paper we additionally consider an analogous setting where servers initiate the updates
instead of the dispatcher.

In \cite{ying2017power} the authors consider a policy that reduces the overhead of SQ$(d)$
by gathering incoming jobs in large batches and assigning these batches in a water-filling
manner to a large set of randomly selected servers. We show that the analysis for this model is closely related to the analysis of the hyperscalable policy mentioned above.

In \cite{tsitsiklis2013power} the authors study the power of (even a little) resource pooling. This means that a fraction of the total processing power is centralized in a single (fast) server. This fast server is then configured to always steal work from the server which currently has the longest queue. While the methodology used to compute the bound on the queue length is similar for this model, its analysis in general is significantly different from the other models. 

In \cite{van2019hyper,ying2017power,tsitsiklis2013power}  simple expressions for the bounded maximum queue length in the large-scale limit were presented
for exponential job sizes. In contrast, it is well known that in real systems the job size distribution is much more variable. Often, a significant part of the total workload is offered by a small fraction of long jobs, while the remaining workload consists mostly of short jobs (e.g.~\cite{delgado2016job, delgado2015hawk, Sparrow}). Therefore, we focus on job sizes which have a phase-type distribution (further denoted by PH distribution). PH distributions are distributions with a modulating finite state Markov chain (see also \cite{latouche1}). While many of our results also apply for general job sizes (as indicated in the text), we keep our focus on PH distributions for ease of presentation. Moreover, any general positive-valued distribution can be approximated arbitrarily close with a PH distribution and there are various fitting tools available for PH distributions (see e.g.~\cite{Kriege2014,panchenko1}).

Our analysis relies on the methodology of the queue at the cavity \cite{bramsonLB}. The queue at the cavity is used to approximate the large system behavior and is equivalent to determining the unique fixed point of a fluid approximation. The idea is that as the number of servers grows large, the state (which includes the queue length) of the servers become independent and identically distributed. Therefore, when the number of servers is sufficiently large, the performance of the whole system can be well approximated by studying a single queue, which is called the queue at the cavity. While for exponential job sizes the convergence towards a fluid approximation was proven for the water filling and resource pooling policies in \cite{ying2017power, tsitsiklis2012power}, this was already highly challenging due to the discontinuities in
the drifts. Moreover, proving that the cavity method yields exact asymptotic results
for more general job size distributions is hard (see e.g.~\cite{bramsonLB_QUESTA}), often
due to the lack of monotonicity. Therefore, we focus on the analysis of the cavity queue and assume that it yields exact results as the number of servers tends
to infinity. Simulation experiments are presented in Section 
\ref{app:simulation} which support this assumption.

The rest of this work is structured as follows. In Section \ref{sec:model_description} we give a formal definition of all models we consider throughout this work. In Section \ref{sec:main_results} we briefly discuss the main results we obtained. In Sections \ref{sec:push}-\ref{sec:pooling} we give a detailed study of each policy, here we also provide further insights through analytical and numerical experimentation. We conclude in Section \ref{sec:future_work} and indicate future work directions.
	
\section{Model Description} \label{sec:model_description}
While we are considering $4$ separate models, it is worthwhile to introduce them all at once as their model descriptions have many commonalities. We consider a system with $N$ homogeneous servers which all process jobs at a constant rate equal to one (resp.~$1-p$ for resource pooling). Jobs arrive to a central dispatcher according to a Poisson process with arrival rate $\lambda \cdot N$. We assume that the size of a job has a PH distribution with parameters $(\alpha, S)$. Furthermore, we use the notation $n_s = |\alpha|$ 
to denote the number of phases and let $s^* = - S \textbf{1}$ with $\textbf{1}$ an $n_s \times 1$ column consisting of ones. Without loss of generality, we assume the mean job size is equal to one. Furthermore, for all policies, whenever a tie occurs of any sort, these are broken uniformly at random. Under this setting, we now distinguish $4$ distinct policies/models:
\begin{itemize}
	\item For the \textit{push policy} \cite{van2019hyper}, we assume there is some $\delta > 0$ such that the dispatcher probes a random server at a rate equal to $\delta N$. Whenever a server is probed, its queue length is saved at the dispatcher, this estimated queue length is then incremented by one whenever the dispatcher assigns a job to this queue. For each incoming job, the dispatcher assigns the job to a server which has the lowest estimated queue length.
	\item The \textit{pull policy} is similar to the push policy in the sense that the dispatcher keeps track of estimated queue lengths and assigns incoming jobs to the server with the smallest estimated queue length. However, queue length updates are now sent by the servers. Whenever a server finishes a job it sends its queue length to the dispatcher with probability $\delta_1$. Furthermore, when a server is idle it sends an update to the dispatcher at rate $\delta_0$,
	where $\delta_1$ and $\delta_0$ are such that the overall probe rate equals $\delta$.
	\item For the \textit{water filling policy} \cite{ying2017power}, jobs arrive at the dispatcher in batches (or are aggregated) which consist of $M$ tasks. The arrival process is a Poisson process with rate $\frac{N}{M} \lambda$, the batch size $M$ is assumed to be of order $\Theta(\log(N))$ and increasing as a function of $N$.
	
	Given probe rate $\delta > 0$, each batch of jobs selects $\frac{\delta}{\lambda} M$ queues and the $M$ jobs are assigned using \textit{water filling}. That is, the $M$ tasks are added one by one to the $\frac{\delta}{\lambda} M$ servers by assigning each job in the batch to the server with the shortest queue amongst the $\frac{\delta}{\lambda} M$ selected servers. E.g.~when $N= M = 3$ and the queue lengths are given by $(0,1,4)$, the queue lengths are increased to $(2,2,4)$ by one batch arrival.
	\item For the \textit{resource pooling policy} \cite{tsitsiklis2013power}, incoming jobs join the queue of a random server. There is also an additional parameter $p$ which signifies the fraction of centralized service. Each individual server works at a rate equal to $1-p$ while a central server steals a job of the server with the most jobs in its queue. More specifically, the centralized server generates tokens 
	at rate $pN$ and when a token is generated, it 
	instantaneously serves a job from one of the individual servers
	with the most number of jobs in its queue.
	The job selected by the centralized server is a pending job, unless
	there are no pending jobs.
\end{itemize}

Whenever we refer to a quantity related to the push policy we add a superscript $^\shortrightarrow$, for pull a superscript $^\shortleftarrow$, for water-filling 
a superscript $^w$ and finally for resource pooling superscript $^r$.

\begin{remark}
	As the total processing rate is equal to $N$ for all considered policies, each policy remains stable for all $\lambda < 1$, while being unstable for $\lambda \geq 1$. Therefore, we let $\lambda  \in [0,1)$ throughout the text.
\end{remark}

\section{Main Results} \label{sec:main_results}
For all considered policies we develop an analytical or numerical method which can be used to efficiently compute the stationary queue length (and response time) distribution
of the queue at the cavity. The accuracy of these policies is verified in Section \ref{app:simulation}. Furthermore, we have many additional analytical results which we summarize here. To this end, let $Z$ denote the job size distribution and $X$ an exponential random variable with rate $\delta$. We find that many of our results can be stated as a function of the probability that a job finishes service before the exponential timer (with rate $\delta$) expires:
\begin{equation} \label{eq:y}
	y = P[Z < X] = \alpha (\delta I - S)^{-1} s^*.
\end{equation}
We first compute a value $\tilde m \in [0,\infty)$ such that the maximum queue length 
of the queue at the cavity is given by $\lceil \tilde m \rceil$.
For the \textit{push policy}, we show that for any job size distribution, the maximum queue length depends only on the job size distribution via $y$ and is given by:
$$
\lceil \tilde m \rceil = \left \lceil \frac{\log\left[ \frac{1}{y} + \left( \frac{\lambda}{\delta (1-\lambda)} - 1 \right) \cdot \frac{1-y}{y} \right]}{\log(1/y)} \right \rceil .
$$
From this it is easy to see that we have vanishing waiting times when $\frac{\lambda}{1-\lambda} \leq \delta$ irrespective of the job size distribution.
Moreover, we are able to derive accurate bounds on the mean queue length $E[Q^{a \shortrightarrow}]$ given by:
$$
\lfloor \tilde m \rfloor - \lambda^\shortrightarrow_{\lfloor \tilde m \rfloor} / \delta \leq E[Q^{a \shortrightarrow}] \leq \lceil \tilde m \rceil - \lambda^\shortrightarrow_{\lceil\tilde m\rceil}/\delta.
$$
Here $\lambda^\shortrightarrow_m$ denotes the arrival rate at which the maximum queue length jumps from $m$ to $m+1$, this value is given by:
$$
\lambda^{\shortrightarrow}_m = \frac{\delta y (1-y^m)}{\delta y (1-y^m)+y^m(1-y)}.
$$
Furthermore, by noting that the value of $\tilde m$ is monotone in $y$ and $y \in [e^{-\delta}, 1]$, we can let $y$ tend to $e^{-\delta}$ and $1$. This way, we establish a tight upper and lower bound on the maximal queue length:
$$
\left\lceil \frac{1}{\delta} \log \left( 
1 +  \frac{1}{\delta}   \frac{ \lambda}{1-\lambda} (e^\delta-1)\right) \right\rceil
\leq 
\lceil \tilde m \rceil \leq 
\left\lceil \frac{\lambda}{(1-\lambda)\delta} \right\rceil.
$$
While the upper bound scales as $1/(1-\lambda)$, we find that for any fixed distribution the value of $\lceil \tilde m\rceil$ scales as $\log(1/(1-\lambda))$. 

We show that the water filling model coincides with the push policy for integer values of $\tilde m$. This allows one to show that all aforementioned results for the push policy also apply for the water filling policy. In addition to these results, we also find that there is a near closed form expression of the stationary distribution for this policy.

For the pull policy, we show that the maximum queue length is \textit{insensitive} to the job size distribution, and that it is given by (with $\delta = (1-\lambda)\delta_0 +
\lambda \delta_1)$:
$$
\lceil \tilde m \rceil = \lceil \log(1-\lambda \delta_1 / \delta) / \log(1-\delta_1) \rceil.
$$
From this, we can see that we have vanishing waiting times whenever $\lambda \leq \delta$. Moreover, it is easy to see that the value of $\lceil \tilde m \rceil$ is increasing as a function of $\delta_1$, in the extreme case of $\delta_1 = 0$ we find that the maximum queue length is given by $\lceil \lambda / \delta \rceil$ which remains bounded for any value of $\lambda$. For the pull policy, we find that the maximum queue length jumps up from $m$ to $m+1$ at the arrival rate:
$$
\lambda_m^\shortleftarrow = \frac{\delta_0 - \delta_0 (1-\delta_1)^m}{\delta_0 - \delta_0 (1-\delta_1)^m + \delta_1(1-\delta_1)^m}.
$$
Furthermore, we again obtain a similar bound on the mean queue length in Theorem \ref{th:bounds_EQ_pull}. 

For \textit{resource pooling} we find that the maximum queue length does depend on the complete job size distribution in a non-trivial way and provide an efficient numerical method to compute the maximum queue length and stationary distribution. However,  closed form expressions 
appear to only be feasible for exponential job sizes. We distinguish $3$ cases for the values of $\lambda$ and $p$:
\begin{itemize}
	\item When $\lambda \leq p$, the centralized server is able to finish all incoming work. In this case, the servers are always idle and there is no queueing.
	\item When $\frac{1}{2} \left( 1 + \lambda - \sqrt{1+ 2\lambda - 3 \lambda^2} \right) \leq p < \lambda$, all servers have at most one job in their queue and we find that the queue length distribution is insensitive to the job size distribution.
	\item When $p < \frac{1}{2} \left( 1 + \lambda - \sqrt{1+ 2\lambda - 3 \lambda^2} \right)$, the queue length depends on the job size distribution, the maximum queue length can be made arbitrarily large (by increasing the variability of the job sizes) and is lower bounded when job sizes are deterministic.
\end{itemize}

We now provide the analysis and numerical insights into each of the introduced models. As the methodology is similar for all considered policies, we provide all details for the \textit{push policy} in Section \ref{sec:push}, while we might skim over some subtleties for the other policies.

\section{Hyperscalable push policy} \label{sec:push}
In this section we study the queue at the cavity for the push policy with PH distributed job sizes. The accuracy of the queue at the cavity for the push policy is demonstrated 
by simulation in Section \ref{app:simulation}. As jobs are assigned in a greedy manner based on the estimated queue lengths, we find that
in the large-scale limit, all servers have an estimated queue length equal to $m$ or $m+1$ for some integer $m \geq 0$. As the estimated queue length is an upper bound on the actual queue length, the state of the queue at the cavity can be denoted as $(q,e,j)$, where $e \in \{m,m+1\}$
is the estimated queue length, $q \in \{0,1,\ldots,e\}$ is the actual queue length and $j \in \{1,\ldots,n_s\}$ is the service phase provided that $q >0$. When $q=0$, we can simply denote
the state as $(0,e)$. In other words, the queue at the cavity has 
\[ \Omega^\shortrightarrow = \{(0,m),(0,m+1)\} \cup \{(q,e,j) | e = m,m+1 ; q = 1,\ldots,e; j = 1,\ldots,n_s \},\]
as its state space.

\subsection{State transitions}\label{sec:trans}

By definition of the matrix $S$, entry $(j,j')$ of $S$ represents the rate at which
the state changes from $(q,e,j)$ to $(q,e,j')$ due to service phase changes.
Furthermore, $s^*_j$ is the service completion
rate in phase $j$, for $j=1,\ldots,n_s$. Thus, from state $(q,e,j)$ a jump occurs to state
$(q-1,e,j')$ at rate $s^*_j \alpha_{j'}$ if $q > 1$, as a service completion occurs at rate
$s_j^*$ and a new job starts service in phase $j'$ with probability $\alpha_{j'}$. Similarly
a jump occurs from state $(1,e,j)$ to state $(0,e)$ at rate $s_j^*$. 

The state can also change due to a probe event, the queue at the cavity is probed at rate
$\delta$. When a probe event occurs, the server informs the dispatcher about the actual queue length
and the dispatcher updates its estimate accordingly. This may seem to imply that a jump
occurs from state $(q,e,j)$ to state $(q,q,j)$. However, when $q < m$ the new estimated queue
length is below $m$ and in the large-scale limit this implies that the dispatcher instantaneously
assigns a batch of jobs such that the actual queue length becomes $m$. Hence, at rate $\delta$,
probe events cause a state change from $(q,e,j)$ to $(\max(q,m),\max(q,m),j)$. Likewise
when the state is $(0,e)$ a jump occurs to state $(m,m,j')$ at rate $\delta \alpha_{j'}$.

Finally, state changes also occur at some unknown rate $\nu$ when the dispatcher assigns a new incoming job to the queue at the cavity which has an estimated queue length equal to $m$. Let $\pi^{e\shortrightarrow}_{e^\prime}(m,\nu)$ denote the probability that the estimated queue length equals $e^\prime$ for $e^\prime=m,m+1$ 
and let $\pi^{a\shortrightarrow}_q(m,\nu)$ denote the probability that the actual queue length equals $q$ for $q=0,...,m+1$. At first glance it may appear that the rate $\nu$
at which the dispatcher changes the state due to new arrivals is such that $\nu \pi^{e\shortrightarrow}_m(m,\nu)$ should equal $\lambda$, as new arrivals are assigned randomly to a server with the lowest estimated queue length. However, keep in mind that part of the arrival rate is already consumed by the batch assignments that accompanied the probe events. The rate consumed
by these batch arrivals equals $\delta \sum_{q=0}^m (m-q) \pi^{a\shortrightarrow}_q(m,\nu)$ as $m-q$ jobs
are instantaneously assigned when a probe event reveals a server with queue length $q \leq m$.
The rate $\nu$ should therefore obey
\begin{align}\label{eq:nu_check_push} \nu \pi^{e\shortrightarrow}_m(m,\nu) = \lambda - \delta \sum_{q=0}^m (m-q) \pi^{a\shortrightarrow}_q(m,\nu).
\end{align}
We are now in a position to define the rate matrix $Q^{\shortrightarrow}(m,\nu)$ of the queue at the cavity on
the state space $\Omega^\shortrightarrow$:
\begin{align}\label{Qpush}
Q^{\shortrightarrow}(m,\nu) = \begin{bmatrix}
Q^{\shortrightarrow}_{0,0}(\nu) & Q^{\shortrightarrow}_{0,1}(\nu) & &  & Q^{\shortrightarrow}_{0,m} & \\
Q_{1,0}^{\shortrightarrow} & Q_{1,1}^{\shortrightarrow}(\nu) & Q_{1,2}^{\shortrightarrow}(\nu)  & & Q^{\shortrightarrow}_{1,m} & \\
 & Q^{\shortrightarrow}_{2,1} & Q^{\shortrightarrow}_{2,2}(\nu) & \ddots   & Q^{\shortrightarrow}_{2,m} & \\
&  &\ddots& \ddots &  \vdots &  \\
&  &&   Q^{\shortrightarrow}_{m,m-1}&  Q^{\shortrightarrow}_{m,m}(\nu) & Q^{\shortrightarrow}_{m,m+1}(\nu)  \\
&  &&  &  Q^{\shortrightarrow}_{m+1,m}& Q^{\shortrightarrow}_{m+1,m+1}(\nu)
\end{bmatrix},
\end{align}
where the matrix $Q_{q,q'}^{\shortrightarrow}(\nu)$ captures the changes from states with actual queue length $q$ to states with actual queue length $q'$.

We now define the matrices $Q_{q,q'}^{\shortrightarrow}(\nu)$ for all possible combinations of $q$ and $q'$.
Due to our discussion on the service completions we have 
\[Q^{\shortrightarrow}_{q,q-1} = \begin{bmatrix}
	s^*\alpha & 0 \\  0 & s^*\alpha
\end{bmatrix}, \ \ 
Q^{\shortrightarrow}_{1,0} = \begin{bmatrix}
	s^* & 0 \\ 0 & s^*
\end{bmatrix}
\mbox{  \ \ and \ \  }   
Q^{\shortrightarrow}_{m+1,m} = \begin{bmatrix}
	0 & s^*\alpha
\end{bmatrix},\] 
for $q =2,\ldots,m$.  
The block diagonal structure indicates that the estimated queue length $e \in \{m,m+1\}$
is not updated when a service completion occurs. Further note that when the
actual queue length $q=m+1$, then $e=m+1$ as well as $e \geq q$. The probe events that occur when
the queue length $q$ is below $m-1$ immediately increase $q$ to $m$, therefore
\[Q^{\shortrightarrow}_{q,m} = \begin{bmatrix}
	\delta I & 0 \\ \delta I & 0
\end{bmatrix} 
\mbox{ \ \ and \ \   }   
Q^{\shortrightarrow}_{0,m} = \begin{bmatrix}
	\delta \alpha & 0 \\ \delta \alpha & 0
\end{bmatrix},\] 
with $I$ the $n_s \times n_s$ identity matrix and $0 < q < m-1$. Note that $e=m$ after such a probe event (as the second block column is zero).
The job assignments at rate $\nu$ increase the queue length by one and can only occur when
the estimated queue length is $m$, hence for $q=1,\ldots,m-2,m$
\[Q^{\shortrightarrow}_{q,q+1}(\nu) = \begin{bmatrix}
	0 & \nu I \\ 0 & 0
\end{bmatrix}, \ \ 
Q^{\shortrightarrow}_{m-1,m}(\nu) = \begin{bmatrix}
	\delta I & \nu I \\ \delta I & 0
\end{bmatrix} 
\mbox{ \ \ and \ \  }   
Q^{\shortrightarrow}_{0,1}(\nu) = \begin{bmatrix}
	0 & \nu \alpha \\ 0 & 0
\end{bmatrix},\] 
where we note that the estimated queue length becomes $m+1$.
The matrix $Q^{\shortrightarrow}_{m-1,m}(\nu)$ captures both job assignments at rate $\nu$
when $e=m$ and probe events at rate $\delta$.
Finally the diagonal blocks capture changes in the service phase, therefore we have
\[Q^{\shortrightarrow}_{q,q}(\nu) = \begin{bmatrix}
	S-(\nu+\delta)I & 0 \\ 0 & S-\delta I
\end{bmatrix}, \ \
Q^{\shortrightarrow}_{m,m}(\nu) = \begin{bmatrix}
	S-\nu I & 0 \\ \delta I & S-\delta I
\end{bmatrix}
\mbox{  \ \ and \ \    }   
Q^{\shortrightarrow}_{0,0}(\nu) = \begin{bmatrix}
	-(\nu+\delta) & 0 \\ 0 & -\delta 
\end{bmatrix},\] 
for $q=1,\ldots,m-1$ and $Q^{\shortrightarrow}_{m+1,m+1} = S$, where the $\delta I$ in $Q^{\shortrightarrow}_{m,m}$
is due to the fact that $e$ is updated to $m$
if a probe arrives when the state is of the form $(m,m+1,j)$. 

Note that both $m$ and $\nu$ are unknown at this stage and we 
indicate how to determine both next.

\subsection{Finding $m$ and $\nu$}
To assess the performance in the large-scale limit we first need to determine the
unknowns $m$ and $\nu$. It is not hard to see that the probability that the queue
is empty, denoted as $\pi^{a\shortrightarrow}_0(m,\nu)$, decreases as $\nu$ increases. Indeed, if we number the
states lexicographically, the transitions with rate $\nu$ increase the state and
\cite{busicMOR}[Theorem 1] implies that the probability to be in the first $b$ states, for any $b$, decreases as $\nu$ increases. As the first two states correspond to an empty queue, setting $b=2$ yields the result.

Another observation is that if we set $\nu=0$, then all the states with $e=m+1$ are transient,
meaning all queues have an estimated queue length equal to $m$, that is, $\pi^{e\shortrightarrow}_m(m,\nu) = 1$ and
$\pi^{e\shortrightarrow}_{m+1}(m,\nu) = 0$. Further, setting $\nu=\infty$ implies
that the states with $e=m$ are transient and all queues have estimated queue length $m+1$,
that is, $\pi^{e\shortrightarrow}_m(m,\nu) = 0$ and
$\pi^{e\shortrightarrow}_{m+1}(m,\nu) = 1$.

Combining these two observations, we note that $\pi^{a\shortrightarrow}_0(m,\nu) > \pi^{a\shortrightarrow}_0(m',\nu')$
if $m < m'$ or $m=m'$ and $\nu < \nu'$. This implies that there exists a unique $(m,\nu)$ such that $\pi^{a\shortrightarrow}_0(m,\nu) = 1-\lambda$. We now derive an explicit expression for $m$ by
studying the Markov chain with $\nu=0$ characterized by $Q^{\shortrightarrow}(m,0)$. If we remove the transient states with $e=m+1$, this chain evolves on the state space
\[ \Omega_{(0)}^\shortrightarrow = \{0\} \cup \{(q,j) |  q = 1,\ldots,m; j = 1,\ldots,n_s \},\]
and has rate matrix $Q^{\shortrightarrow}_{(0)}(m)$ given by
\begin{align}\label{eq:Qpush0}
Q^{\shortrightarrow}_{(0)}(m) = \begin{bmatrix}
-\delta  & & & & & \delta \alpha \\
s^* & S-\delta I & &  & & \delta I\\
 & s^*\alpha & S-\delta I &  & & \delta I\\
&  &\ddots& \ddots & & \vdots   \\
& & & s^*\alpha & S-\delta I &\delta I\\
&  &&  & s^* \alpha &  S 
\end{bmatrix}.
\end{align}
Let $\pi^{a\shortrightarrow}_0(m)$ be the steady state probability that the cavity queue is empty and $\pi^{a\shortrightarrow}_q(m)$ the steady state probability that we are in a state
of the form $(q,j)$
(note that these are the same as $\pi^{a\shortrightarrow}_q(m,0)$ defined before).

\begin{proposition}\label{th:pim}
The steady state probabilities of $Q^{\shortrightarrow}_{(0)}(m)$ are such that for $i=1,\ldots,m$
\begin{align}\label{pim}
\sum_{q=0}^{i-1} \pi^{a\shortrightarrow}_q(m) = \frac{y^{m-i}/\delta}{1/\delta + y^{m-1} + (1-y^{m-1}) \alpha'(-S)^{-1}\mathbf{1}},
\end{align}
with $y$ as in \eqref{eq:y} and $\alpha' = \alpha (\delta I - S)^{-1}/\alpha (\delta I - S)^{-1}\mathbf{1}$. 
\end{proposition}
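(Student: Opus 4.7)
The plan is to solve the balance equations $\pi Q^{\shortrightarrow}_{(0)}(m) = 0$ directly, block by block. Decompose $\pi = (\pi_0, \pi_1, \ldots, \pi_m)$, with $\pi_0$ the scalar probability of state $0$ and each $\pi_q$ (for $q \geq 1$) a row vector of length $n_s$ indexed by the service phase. The column balance at state $0$ yields $\pi_1 s^* = \delta \pi_0$, and at column $(q,*)$ for $q = 1, \ldots, m-1$ it reads
\[
\pi_q(\delta I - S) = \pi_{q+1} s^* \alpha.
\]
Since the right-hand side has rank one, it forces $\pi_q$ to be a scalar multiple of $\alpha(\delta I - S)^{-1}$. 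Introducing $c_q := \pi_q s^*$ and post-multiplying the recursion by $s^*$ gives the scalar relation $c_q = y\,c_{q+1}$; combined with $c_1 = \delta \pi_0$, this yields $\pi_q = (\delta \pi_0/y^q)\,\alpha(\delta I - S)^{-1}$ for $q = 1, \ldots, m-1$.

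Passing to row sums, the identity $(\delta I - S)\mathbf{1} = \delta\mathbf{1} + s^*$ gives $\alpha(\delta I - S)^{-1}\mathbf{1} = (1-y)/\delta$, so $\pi_q\mathbf{1} = \pi_0(1-y)/y^q$. Summing the geometric series from $q = 1$ to $q = i-1$ telescopes cleanly to
\[
\sum_{q=0}^{i-1} \pi^{a\shortrightarrow}_q(m) = \frac{\pi_0}{y^{i-1}} \qquad (i = 1,\ldots,m),
\]
reducing the problem to the computation of $\pi_0$.

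For $\pi_0$, I turn to the column balance at $(m,*)$, namely $\pi_m(-S) = \pi_0\,\delta\alpha + \delta\sum_{q=1}^{m-1}\pi_q$, and post-multiply by $(-S)^{-1}\mathbf{1}$. Using $\alpha(-S)^{-1}\mathbf{1} = E[Z] = 1$ together with the rewriting $\alpha(\delta I - S)^{-1} = ((1-y)/\delta)\,\alpha'$ (an immediate consequence of the definition of $\alpha'$), and the geometric sum $\sum_{q=1}^{m-1}y^{-q} = (y^{-(m-1)}-1)/(1-y)$, the factors of $(1-y)$ cancel and one obtains
\[
\pi_m \mathbf{1} = \pi_0\,\delta\bigl[1 + (y^{-(m-1)} - 1)\,\alpha'(-S)^{-1}\mathbf{1}\bigr].
\]
Substituting into the normalization $\pi_0/y^{m-1} + \pi_m\mathbf{1} = 1$ and multiplying numerator and denominator by $y^{m-1}/\delta$ produces exactly the stated formula for $\pi_0$; dividing by $y^{i-1}$ then gives the claim for all $i$.

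The main obstacle is careful bookkeeping rather than any deep idea: one must distinguish $\alpha(\delta I - S)^{-1}$ from its normalization $\alpha'$ and keep track of the row-vector/column-vector identities such as $\alpha(\delta I - S)^{-1}\mathbf{1} = (1-y)/\delta$ and $\alpha(-S)^{-1}\mathbf{1} = 1$. The reason the answer comes out so clean is the rank-one structure of the coupling $\pi_{q+1}s^*\alpha$, which forces all $\pi_q$ with $q < m$ to be parallel to the common direction $\alpha(\delta I - S)^{-1}$; as a consequence, the PH distribution enters the final sum only through the two scalars $y$ and $\alpha'(-S)^{-1}\mathbf{1}$.
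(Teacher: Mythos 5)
Your proof is correct, and it takes the algebraic route that the paper only alludes to in the sentence following its proof (``The above result can also be deduced in an algebraic manner based on the balance equations''), whereas the paper itself argues probabilistically. The paper divides time into renewal cycles that start each time the chain leaves level $m$, identifies the denominator of \eqref{pim} as the mean cycle length (mean time away from level $m$, which is $1/\delta$, plus the mean time in level $m$, which is a mixture of a fresh service time and a residual PH time with initial vector $\alpha'$), and identifies the numerator $y^{m-i}/\delta$ as the expected time per cycle spent in the states with $q<i$. Your derivation instead solves $\pi Q^{\shortrightarrow}_{(0)}(m)=0$ block by block: the column balance $\pi_q(\delta I-S)=\pi_{q+1}s^*\alpha$, the rank-one observation forcing $\pi_q$ parallel to $\alpha(\delta I-S)^{-1}$, the scalar recursion $c_q=y\,c_{q+1}$, the identities $\alpha(\delta I-S)^{-1}\mathbf{1}=(1-y)/\delta$ and $\alpha(-S)^{-1}\mathbf{1}=1$, the level-$m$ balance, and the normalization all check out, and the final algebra reproduces \eqref{pim} exactly (including the boundary case $m=1$, where the intermediate sums are empty). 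The trade-off: your argument is mechanical and self-contained, and as a by-product it delivers the full phase-level stationary vectors $\pi_q=(\delta\pi_0/y^{q})\,\alpha(\delta I-S)^{-1}$ rather than only the marginals $\pi_q\mathbf{1}$; the paper's renewal argument gives less explicit information but explains \emph{why} the constants are what they are ($y^{m-i}$ as a visit probability, the denominator as a mean cycle length) and, as the subsequent Remark in the paper exploits, extends verbatim to arbitrary positive job size distributions, while your computation is tied to the phase-type representation.
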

\begin{proof} 
	Let $S_{<i} \subset \Omega_{(0)}^\shortrightarrow$ be the set of states with $q < i$. 
	We refer to the set of states of the form $(m,j)$ as level $m$ of the chain.
	We divide time into cycles that start whenever the chain leaves level $m$.
	Note that as these points in time are renewal points, 
	the sum $\sum_{q=0}^{i-1} \pi^{a\shortrightarrow}_q(m)$ can be expressed as the mean time
	the chain spends in the set $S_{<i}$ during a single cycle, divided by the mean cycle length.
	The mean cycle length is given by the mean time away from level $m$ plus the mean time in level $m$. 
	
	The time that the chain is away from level $m$ has an exponential distribution with
	parameter $\delta$, so the mean time away is $1/\delta$. It therefore suffices to
	argue that $ y^{m-1} + (1-y^{m-1}) \alpha'(-S)^{-1}\textbf{1}$ is the mean time spend in level $m$
	in order to show that $1/\delta + y^{m-1} + (1-y^{m-1}) \alpha'(-S)^{-1}\textbf{1}$ is the mean cycle
	length.
	
	When the service of a job starts, it completes before an exponential
	timer with parameter $\delta$ expires with probability 
	\[y=\int_0^\infty \alpha e^{St} s^* e^{-\delta t} dt = \alpha (\delta I - S)^{-1} s^*.\]
	Note that $y$ can also be expressed as
	\begin{align}
		y &= \int_0^{+\infty} P[Y < X|X=t]f_X(t)dt = \int_0^{+\infty} (1-\alpha e^{St}\textbf{1})\delta e^{-\delta t}dt = 1- \alpha(\delta I -S)^{-1}\textbf{1}\delta,
		\label{eq:ydef2}
	\end{align}
	where $X$ is exponential with parameter $\delta$ and $Y$ has a PH distribution with parameters $(\alpha,S)$.
	
	Given that the exponential timer expires first, the distribution of the service phase when the timer expires is given by
	\[\alpha'=\int_0^\infty \alpha e^{St} \delta e^{-\delta t} dt/(1-y) = 
	\alpha (\delta I - S)^{-1} \delta / \alpha (\delta I - S)^{-1}\textbf{1} \delta,\]
	where we have used \eqref{eq:ydef2} in the second equality. 
	The expression for the mean cycle length therefore follows by noting  that with probability $y^{m-1}$ the queue was empty
	just prior to entering level $m$ and therefore the time spend in level $m$
	equals the mean service time, which is $1$. While with probability $1-y^{m-1}$,
	the queue did not become empty because an exponential timer with parameter $\delta$ 
	expired during the service of a job 
	and this implies that the time in level $m$ has a PH distribution with
	parameters $(\alpha',S)$, which has a mean given by $\alpha' (-S)^{-1}\textbf{1}$.  
	
	Finally, in order to express the mean time in the set $S_{<i}$ during a cycle, we note that
	$y^{m-i}$ is the probability that the set $S_{<i}$ is visited once, while with probability
	$1-y^{m-i}$ the set $S_{<i}$ is not visited during a cycle. The mean sojourn time in 
	the set $S_{<i}$ is clearly $1/\delta$, which yields that $y^{m-i}/\delta$ is the mean time
	in the set $S_{<i}$ during a cycle. 
\end{proof}
The above result can also be deduced in an algebraic manner based on the balance equations.

\begin{lemma} \label{lem:Z-X}
Let $X$ be exponential with parameter $\delta$ and $Z$ a distribution on $\mathbb{R}^+$,
then
\begin{align}\label{eq:lemma}
1 +\delta  E[Z-X|Z>X] = \frac{\delta E[Z]}{P[Z>X]}.
\end{align}
\end{lemma}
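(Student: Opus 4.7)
The plan is to prove the identity by decomposing $Z$ as $Z = \min(Z,X) + (Z-X)^+$ and computing the expectation of each piece.

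First, I would take expectations of both sides of the decomposition, yielding
\begin{equation*}
E[Z] = E[\min(Z,X)] + E[(Z-X)^+].
\end{equation*}
The second term is handled immediately: by definition of conditional expectation,
\begin{equation*}
E[(Z-X)^+] = E[Z-X \mid Z > X]\, P[Z > X].
\end{equation*}

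Next I would evaluate $E[\min(Z,X)]$ using the tail-integral formula $E[T] = \int_0^\infty P[T > t]\,dt$ for nonnegative $T$. Since $\{\min(Z,X) > t\} = \{Z > t\} \cap \{X > t\}$ and $X$ is exponential with rate $\delta$ and independent of $Z$,
\begin{equation*}
E[\min(Z,X)] = \int_0^\infty P[Z > t]\, e^{-\delta t}\, dt.
\end{equation*}
The key observation—and really the only substantive step—is that this integral is, up to a factor of $\delta$, precisely $P[Z > X]$. Indeed, conditioning on $X$,
\begin{equation*}
P[Z > X] = \int_0^\infty P[Z > t]\, \delta e^{-\delta t}\, dt,
\end{equation*}
so $E[\min(Z,X)] = P[Z > X]/\delta$.

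Substituting both computations into the decomposition gives
\begin{equation*}
E[Z] = \frac{P[Z > X]}{\delta} + E[Z - X \mid Z > X]\, P[Z > X],
\end{equation*}
and dividing through by $P[Z > X]/\delta$ (which is positive as long as $P[Z > X] > 0$, the only nontrivial case) yields the claimed identity. There is no real obstacle here beyond bookkeeping; the trivial case $P[Z > X] = 0$ forces $Z = 0$ almost surely and the identity reduces to $1 = 1$ in the limit, so it can be treated separately or sidestepped by noting both sides are multiplied out before dividing.
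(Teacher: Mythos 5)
Your proof is correct, and it shares the paper's overall skeleton --- both start from the decomposition $E[Z] = E[\min(Z,X)] + P[Z>X]\,E[Z-X\mid Z>X]$ and reduce everything to the identity $E[\min(Z,X)] = P[Z>X]/\delta$. Where you diverge is in how that identity is established. The paper proves it without any integration by writing the symmetric decomposition $E[X] = E[\min(Z,X)] + P[X>Z]\,E[X-Z\mid X>Z]$ and invoking the memoryless property to replace $E[X-Z\mid X>Z]$ by $E[X]$, which immediately gives $E[\min(Z,X)] = (1-P[X>Z])E[X]$. You instead compute both $E[\min(Z,X)]$ and $P[Z>X]$ as explicit integrals against the exponential density via the tail-integral formula and observe that they differ by a factor of $\delta$. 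The two arguments use the exponential assumption through different doors --- memorylessness versus the explicit form of the density --- and are equally elementary; yours makes the required independence of $Z$ and $X$ visible in the computation, while the paper's is slightly slicker and avoids writing any integral. Your handling of the degenerate case $P[Z>X]=0$ (which forces $Z=0$ a.s., making the conditional expectation undefined and the statement vacuous) is a point the paper silently skips, and is a welcome addition.
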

\begin{proof}
We need to argue that $E[Z] = P[Z>X]E[X]+ E[Z-X|Z>X] P[Z>X]$ as $E[X]=1/\delta$.
We have $E[Z] = E[\min(Z,X)] + P[Z> X] E[Z-X|Z>X]$ and the result follows provided
that $E[\min(Z,X)] = P[Z>X]E[X]$. This holds for general $Z$ and $X$ exponential
as $E[X] = E[\min(Z,X)] + P[X>Z] E[X-Z|X>Z] =  E[\min(Z,X)] + P[X>Z] E[X]$.
\end{proof}

\begin{theorem}\label{cor:push}
For the push policy with arrival rate $\lambda$, probe rate $\delta$ and $y$ as in \eqref{eq:y}, we have $1 -  \lambda \in [\pi^{a\shortrightarrow}_0(\lceil \tilde m \rceil),
\pi^{a\shortrightarrow}_0(\lfloor \tilde m \rfloor)]$ for
\begin{align}\label{eq:push_m}
\tilde m = \frac{\log\left( \frac{1}{y} + \left( \frac{\lambda}{\delta (1-\lambda)} - 1 \right) \cdot \frac{1-y}{y} \right)}{\log(1/y)},
\end{align}
meaning $\lceil \tilde m \rceil$
is the maximum queue length for the queue at the cavity.
\end{theorem}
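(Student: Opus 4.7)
The plan is to exploit the monotonicity observations made just before the theorem together with the explicit formula from Proposition~\ref{th:pim}. Since it was argued that $\pi^{a\shortrightarrow}_0(m,\nu)$ is strictly decreasing in both $m$ and $\nu$ (with $\nu=0$ corresponding to estimated queue length $m$ and $\nu=\infty$ corresponding to $m+1$), the unique rate $\nu$ satisfying \eqref{eq:nu_check_push} is obtained by picking the smallest $m$ for which $\pi^{a\shortrightarrow}_0(m,0)\geq 1-\lambda$, and the associated maximum queue length is exactly that $m$. In other words, it suffices to locate the integer $m$ with $\pi^{a\shortrightarrow}_0(m)\leq 1-\lambda\leq \pi^{a\shortrightarrow}_0(m-1)$ (using the $\nu=0$ chain $Q^\shortrightarrow_{(0)}(m)$), and then the theorem reduces to solving $\pi^{a\shortrightarrow}_0(\tilde m)=1-\lambda$ over real $\tilde m$.

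First I would instantiate Proposition~\ref{th:pim} with $i=1$ to get
\[\pi^{a\shortrightarrow}_0(m)=\frac{y^{m-1}/\delta}{1/\delta + y^{m-1}+(1-y^{m-1})\,\alpha'(-S)^{-1}\mathbf{1}}.\]
The key observation is that $\alpha'(-S)^{-1}\mathbf{1}$ admits a probabilistic interpretation: by the derivation of $\alpha'$ given just after the proof of Proposition~\ref{th:pim}, $\alpha'$ is the phase distribution at the expiration of the exponential timer conditional on $\{Z>X\}$, so $\alpha'(-S)^{-1}\mathbf{1}=E[Z-X\mid Z>X]$. This is exactly the quantity featuring in Lemma~\ref{lem:Z-X}, which together with $E[Z]=1$ and $P[Z>X]=1-y$ gives
\[\alpha'(-S)^{-1}\mathbf{1}=\frac{1}{1-y}-\frac{1}{\delta}.\]

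Plugging this into the denominator and simplifying (the $1/\delta$ terms cancel) yields
\[\pi^{a\shortrightarrow}_0(m)=\frac{1}{\,1+\delta+\dfrac{\delta\bigl(y^{-(m-1)}-1\bigr)}{1-y}\,}.\]
Setting this equal to $1-\lambda$ and isolating $y^{-(m-1)}$ reduces, after elementary rearrangement, to the relation
\[y^{-m}=\frac{1}{y}+\Bigl(\frac{\lambda}{\delta(1-\lambda)}-1\Bigr)\frac{1-y}{y},\]
so taking logarithms and dividing by $\log(1/y)$ produces exactly $\tilde m$ in \eqref{eq:push_m}.

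Finally, since the explicit formula for $\pi^{a\shortrightarrow}_0(m)$ is manifestly a strictly decreasing continuous function of $m$ when $m$ is treated as a real variable (as $0<y<1$ makes $y^{-(m-1)}$ strictly increasing in $m$), and it agrees with the discrete $\pi^{a\shortrightarrow}_0(m)$ at integer values, monotonicity gives the inclusion $1-\lambda\in[\pi^{a\shortrightarrow}_0(\lceil\tilde m\rceil),\pi^{a\shortrightarrow}_0(\lfloor\tilde m\rfloor)]$. Combining this with the opening reduction, the maximum queue length of the cavity process equals $\lceil\tilde m\rceil$. The only potential obstacle is the algebraic simplification after substituting the formula from Lemma~\ref{lem:Z-X}; this is routine but must be carried out carefully so that the $1/\delta$ and $y^{m-1}$ contributions cancel cleanly to yield the closed form.
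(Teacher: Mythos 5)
Your proposal is correct and follows essentially the same route as the paper: instantiate Proposition~\ref{th:pim} at $i=1$, use Lemma~\ref{lem:Z-X} to replace $\alpha'(-S)^{-1}\mathbf{1}=E[Z-X\mid Z>X]$ by $\tfrac{1}{1-y}-\tfrac{1}{\delta}$, solve $\pi^{a\shortrightarrow}_0(m)=1-\lambda$ for $m$, and invoke the monotonicity of $\pi^{a\shortrightarrow}_0$ in $(m,\nu)$ established before the theorem. The algebra you flag as the only potential obstacle does indeed cancel cleanly to give \eqref{eq:push_m}, matching the paper's computation (which merely defers the Lemma~\ref{lem:Z-X} substitution to the last step).
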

\begin{proof}
When $\pi^{a\shortrightarrow}_0(m)=1-z \in (0,1)$, we have by Proposition \ref{th:pim} with $i=1$
\begin{align}\label{eq:push_lam}
z = 1 - \frac{1}{\delta-\delta \alpha'(-S)^{-1}\textbf{1}+(1+\delta \alpha'(-S)^{-1}\textbf{1})/y^{m-1}}.
\end{align}
Thus $z$ increases as a function of $m$ and  $z = \lambda$ if $m=\tilde m$ with
 \[ \tilde m = 1 - \left.\log\left(1+\frac{\lambda/(1- \lambda) - \delta  }
{1+\delta \alpha' (-S)^{-1}\textbf{1}}\right)\middle/\log(y)\right..\]
The result now follows from \eqref{eq:lemma} as $\alpha' (-S)^{-1}\textbf{1}=E[Z-X|Z>X]$, $P[Z<X]=y$
and $E[Z]=1$.
\end{proof}

When the job sizes are exponential with mean $1$, we have $P[Z<X]=1/(1+\delta)$.
This implies that $\tilde m=-\log(1- \lambda)/\log(1+\delta)$
and $\lambda = 1 - (1+\delta)^{-\tilde m}$, which are the expressions 
derived in \cite{van2019hyper} for the fixed point of a set of drift equations. Furthermore, it is easy to see that we still have vanishing waiting times whenever $\frac{\lambda}{(1-\lambda)} \leq \delta$ irrespective of the job size distribution.

\begin{remark}
	The proofs of Proposition \ref{th:pim} and Theorem \ref{cor:push} can easily be generalized to include any positive valued distribution. One finds that the value for $\tilde m$ obtained in \eqref{eq:push_m} holds for any job size distribution $Z$ and this allows one to generalize the results of Corollary \ref{th:lam_m}, Theorems \ref{th:mbounds}, \ref{th:qact} and Corrolary \ref{th:push_critical} for general job sizes with $y = P[Z < X]$.
\end{remark}

\begin{example}
When the job sizes follow an Erlang-$k$ distribution with mean $1$, we have
$y=(k/(k+\delta))^k$ and therefore $\tilde m = m_{Erl}(k)$ with
\begin{align}\label{eq:mErl}
m_{Erl}(k) = 1 - \left.\log\left(
1+ \frac{1}{\delta} \left(  \frac{\lambda}{1-\lambda} - \delta \right) 
\left( 1- \left( \frac{k}{k+\delta}\right)^k \right)\right)\middle/k \log
\left(\frac{k}{k+\delta}\right)\right.,
\end{align}
and 
\begin{align}\label{eq:det}
\lim_{k \shortrightarrow \infty} m_{Erl}(k) = \frac{1}{\delta} \log \left( 
1 +  \frac{1}{\delta}   \frac{ \lambda}{1-\lambda} (e^\delta-1)\right),
\end{align}
as $\lim_{k \shortrightarrow \infty} (k/(k+\delta))^k = e^{-\delta}$.
\end{example}
The expression for $\tilde m$ presented in \eqref{eq:push_m} is in general not an integer.
In order to find the proper $(m,\nu)$ pair for the queue at the cavity, we propose the following algorithm:
\begin{itemize}
	\item Set $m = \lfloor \tilde m \rfloor$, with $\tilde m$ as defined in \eqref{eq:push_m}.
	\item Determine the unique rate $\nu \geq 0$ such that $\pi^{a\shortrightarrow}_0( \lfloor \tilde m \rfloor,\nu)=1-\lambda$ using a bisection algorithm by repeatedly computing the stationary distribution of \eqref{Qpush}.
\end{itemize}
Given a probe rate $\delta> 0$ and a job size distribution, we can find the $\lambda$ values at which $\tilde m$ takes integer values (and vice versa we find the $\delta$ values given a fixed $\lambda$).

\begin{corollary}\label{th:lam_m}
In the same setting as Theorem \ref{cor:push}, we find that the maximum queue length of the queue at the cavity is equal to $m > 0$ for $\lambda \in (\lambda^{\shortrightarrow}_{m-1},\lambda^{\shortrightarrow}_m]$
with 
\begin{align}\label{eq:lam_m}
\lambda^{\shortrightarrow}_m = \frac{\delta y (1-y^m)}{\delta y (1-y^m)+y^m(1-y)}.
\end{align}
Further, the maximum queue length of the queue at the cavity is equal to $m > 0$ 
for $\delta \in [\delta^{\shortrightarrow}_{m},\delta^{\shortrightarrow}_{m-1})$
with
\begin{align}\label{eq:delta_m}
\delta^{\shortrightarrow}_m = \frac{y^{m-1} (1-y)}{1-y^m}
\frac{\lambda}{1-\lambda}.
\end{align}
\end{corollary}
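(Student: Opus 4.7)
The plan is to derive the corollary directly from Theorem \ref{cor:push}, which states that the maximum queue length of the queue at the cavity equals $\lceil \tilde m \rceil$, with $\tilde m$ given in \eqref{eq:push_m}. Since $\lceil \tilde m \rceil = m$ if and only if $\tilde m \in (m-1, m]$, the two claimed intervals will follow provided we (i) solve the equation $\tilde m = m$ explicitly for $\lambda$ and for $\delta$, and (ii) verify that $\tilde m$ is strictly monotone in the appropriate direction in each of these two parameters.

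For step (i), the idea is to just invert \eqref{eq:push_m}. Setting $\tilde m = m$ and exponentiating gives
\[
\frac{1}{y} + \Bigl( \frac{\lambda}{\delta(1-\lambda)} - 1\Bigr)\frac{1-y}{y} = y^{-m}.
\]
Multiplying by $y$ and isolating the $\lambda$-term yields
\[
\frac{\lambda}{1-\lambda} = \frac{\delta(1-y^m)}{y^{m-1}(1-y)},
\]
from which one solves for $\lambda$ (using $\lambda = A/(1+A)$ with $A$ the right-hand side) to obtain $\lambda^{\shortrightarrow}_m$ after multiplying numerator and denominator by $y$. The same identity, solved instead for $\delta$, immediately gives $\delta^{\shortrightarrow}_m$. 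This step is pure algebra; the reader can verify it in a few lines.

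For step (ii), the monotonicity claims are essentially structural. Fix $\delta > 0$ and $y \in (0,1)$: the numerator inside the outer logarithm in \eqref{eq:push_m} is a strictly increasing affine function of $\lambda/(1-\lambda)$ with positive slope $(1-y)/(\delta y)$, hence strictly increasing in $\lambda$, while the denominator $\log(1/y) > 0$ is constant. Therefore $\tilde m$ is strictly increasing in $\lambda$, so as $\lambda$ ranges over $(\lambda^{\shortrightarrow}_{m-1}, \lambda^{\shortrightarrow}_m]$, $\tilde m$ ranges over $(m-1, m]$, giving the first interval. For the second claim, one uses that a higher probe rate can only improve balancing: combining the fact that $y = P[Z < X]$ is strictly decreasing in $\delta$ (since the exponential timer $X$ of rate $\delta$ is stochastically decreasing in $\delta$) with the structure of \eqref{eq:push_m}, one shows $\tilde m$ is strictly decreasing in $\delta$. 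Then $\delta \in [\delta^{\shortrightarrow}_m, \delta^{\shortrightarrow}_{m-1})$ corresponds to $\tilde m \in (m-1, m]$.

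The only mild subtlety I anticipate is the monotonicity in $\delta$, because $\delta$ appears both explicitly in \eqref{eq:push_m} and implicitly through $y$. The cleanest way around this is probably to argue directly from the probabilistic meaning: increasing $\delta$ increases the probe rate in the original chain $Q^{\shortrightarrow}_{(0)}(m)$, which stochastically decreases the queue length of the cavity queue at $\nu = 0$, hence decreases the $m$ needed to achieve $\pi^{a\shortrightarrow}_0(m) = 1-\lambda$. Alternatively, the identity derived in step (i), $\lambda/(1-\lambda) = \delta(1-y^m)/(y^{m-1}(1-y))$, can be shown to be strictly increasing in $m$ at fixed $\delta$ and strictly increasing in $\delta$ at fixed $m$ (the latter reduces to showing that $\delta(1-y^m)/(y^{m-1}(1-y))$ is increasing in $\delta$, which follows by a short calculation since $y$ is itself monotone in $\delta$), and this yields both monotonicity claims simultaneously.
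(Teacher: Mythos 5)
Your proposal is correct and follows essentially the same route as the paper: the paper obtains $\lambda^{\shortrightarrow}_m$ by simplifying the expression \eqref{eq:push_lam} for $1-\pi^{a\shortrightarrow}_0(m)$ via Lemma \ref{lem:Z-X}, which is exactly the inversion of \eqref{eq:push_m} that you carry out (the lemma is already baked into \eqref{eq:push_m}), and it likewise gets $\delta^{\shortrightarrow}_m$ by solving the same identity $\lambda/(1-\lambda)=\delta(1-y^m)/(y^{m-1}(1-y))$ for $\delta$. You are in fact more explicit than the paper about the monotonicity in $\lambda$ and $\delta$ needed to convert the equation $\tilde m=m$ into the two interval statements; the paper leans on its earlier observation that $\pi^{a\shortrightarrow}_0$ is monotone in $m$.

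One small correction to your probabilistic aside on monotonicity in $\delta$: in the auxiliary chain $Q^{\shortrightarrow}_{(0)}(m)$ probe events push the queue \emph{up} to level $m$, so increasing $\delta$ stochastically \emph{increases} the queue length of that chain and \emph{decreases} $\pi^{a\shortrightarrow}_0(m)$; it is precisely this decrease that lets a smaller $m$ satisfy $\pi^{a\shortrightarrow}_0(m)=1-\lambda$. Your stated conclusion is right but the intermediate direction is flipped. Your algebraic alternative is sound as written: $\delta(1-y^m)/(y^{m-1}(1-y))$ is a product of $\delta$ and $1/\kappa(y)$ with $\kappa(y)=y^{m-1}(1-y)/(1-y^m)$ increasing in $y$ (as shown in the proof of Theorem \ref{th:mbounds}) and $y$ decreasing in $\delta$, so both factors increase with $\delta$.
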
 
\begin{proof}
If $\pi^{a\shortrightarrow}_0(m)=1-z \in (0,1)$, then \eqref{eq:push_lam} holds.
As \eqref{eq:lemma} corresponds to stating that $1+\delta\alpha' (-S)^{-1} \textbf{1} = \delta/(1-y)$,
we can use this equality twice in \eqref{eq:push_lam} to find that $z=\lambda^{\shortrightarrow}_m$. Hence, $1-\pi^{a\shortrightarrow}_0(m)=\lambda^{\shortrightarrow}_m$ and the maximum queue length increases by
one whenever $\lambda$ is such that $\pi^{a\shortrightarrow}_0(m) = 1-\lambda$ 
for some integer $m$. \eqref{eq:delta_m} is immediate from \eqref{eq:lam_m}
by setting $\lambda^{\shortrightarrow}_m =\lambda$, $\delta=\delta^{\shortrightarrow}_m$
and solving for $\delta^{\shortrightarrow}_m$.
\end{proof}
For exponential job sizes $y=1/(1+\delta)$ and $\lambda^{\shortrightarrow}_m$ simplifies to 
$1-(1+\delta)^{-m}$ and $\delta^{\shortrightarrow}_m$ becomes
$(1-\lambda)^{-1/m}-1$. 
For completeness we end this subsection by showing that 
\eqref{eq:nu_check_push} holds. 

\begin{proposition}\label{th:nu_eq_holds}
In the same setting as Theorem \ref{cor:push}, equation \eqref{eq:nu_check_push} holds for the steady state probabilities of the
Markov chain characterized by \eqref{Qpush} when
$\pi_0^{a\shortrightarrow}(m,\nu)=1-\lambda$.
\end{proposition}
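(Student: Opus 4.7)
The plan is to prove \eqref{eq:nu_check_push} as a conservation (flow balance) statement for the queue at the cavity. I would first argue that every arrival to the cavity queue falls into exactly one of two categories: (i) a direct assignment by the dispatcher, which by definition of the push policy occurs only when the estimated queue length equals $m$, and, under stationarity, contributes a rate $\nu \pi^{e\shortrightarrow}_m(m,\nu)$; (ii) a batch assignment triggered by a probe event. Since probes arrive at rate $\delta$ and, upon finding the actual queue length equal to $q\le m$, instantaneously inject $m-q$ jobs (and inject none when $q=m+1$), the contribution of this stream to the arrival rate is exactly $\delta\sum_{q=0}^{m}(m-q)\pi^{a\shortrightarrow}_q(m,\nu)$. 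Hence the total arrival rate at the cavity equals the right-hand side of \eqref{eq:nu_check_push} \emph{plus} $\nu \pi^{e\shortrightarrow}_m(m,\nu)$, i.e.\ I want to show the total arrival rate equals $\lambda$.

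For the second step I would compute the same arrival rate from the service side, using the stationarity of $Q^{\shortrightarrow}(m,\nu)$. Because service proceeds at unit speed exactly when $q\ge 1$ and the PH distribution $(\alpha,S)$ has mean $1$, the long-run throughput of the cavity queue equals $1-\pi^{a\shortrightarrow}_0(m,\nu)$; this is the standard busy-fraction identity, and can be checked directly from the balance equations by summing $\sum_{q\ge 1, j}\pi^{a\shortrightarrow}(q,e,j)\, s^*_j$ and telescoping using $s^*=-S\mathbf{1}$ together with the row-sum equations of the blocks $Q^{\shortrightarrow}_{q,q-1}$, $Q^{\shortrightarrow}_{q,q}$, $Q^{\shortrightarrow}_{q,q+1}$ in \eqref{Qpush}. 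Invoking the hypothesis $\pi^{a\shortrightarrow}_0(m,\nu)=1-\lambda$ gives throughput $\lambda$, and by stationarity the arrival rate equals the throughput.

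Combining the two steps yields $\nu \pi^{e\shortrightarrow}_m(m,\nu)+\delta\sum_{q=0}^{m}(m-q)\pi^{a\shortrightarrow}_q(m,\nu)=\lambda$, which is exactly \eqref{eq:nu_check_push}.

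The genuinely delicate point, and the step I would spend the most care on, is the decomposition in part (i): showing that the dispatcher cannot assign a job to the cavity queue except when its \emph{estimated} queue length equals $m$, and that, given this, arrivals form a Poisson-type stream of rate $\nu$ as defined in Section~\ref{sec:trans}. This is essentially the self-consistency of the cavity construction and is the reason $\nu$ must be determined implicitly; once it is granted, the proposition reduces to an accounting identity between the two representations of the arrival rate, with the busy-fraction identity supplying the bridge. Everything else is routine.
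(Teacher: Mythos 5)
Your proposal is correct and follows essentially the same route as the paper: both establish \eqref{eq:nu_check_push} as a job-conservation identity, equating the arrival rate (decomposed into the $\nu$-stream on servers with estimate $m$ plus the probe-triggered batches of size $m-q$) with the departure rate, and then showing the departure rate equals $1-\pi_0^{a\shortrightarrow}(m,\nu)=\lambda$ since the mean job size is one. The only organizational difference is that the paper derives the conservation identity by summing per-level crossing balances for $q=1,\dots,m+1$ and proves the throughput fact via the invariant vector $\beta$ of $S+s^*\alpha$ (with $\beta s^*=1$), which is the algebraic form of the busy-fraction identity you invoke.
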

\begin{proof}

Let $\pi^{\shortrightarrow}_{q,m}(m,\nu)$ be the probability that the
actual queue length equals $q$ and the estimated queue length equals $m$.
Let $\pi^{\shortrightarrow}_{q,e,j}(m,\nu)$ be the probability that the chain
characterized by \eqref{Qpush} is in state $(q,e,j)$.
The rate of making a jump from an actual queue length below $q$ to an actual queue
length of at least $q$, for $q=1,\ldots,m$, is given by
\[\delta \sum_{j=0}^{q-1} \pi_j^{a\shortrightarrow}(m,\nu) +
\pi^{\shortrightarrow}_{q-1,m}(m,\nu) \nu, \]
and this rate equals the rate of making a jump from an actual queue length
of $q$ to $q-1$ (as the queue length can only decrease by one), which is given by 
\[\sum_{e,j} \pi^{\shortrightarrow}_{q,e,j}(m,\nu) s_j^*.\]
Summing this equality for $q=1$ to $m$ yields
\begin{align}\label{eq:sumb}
	\delta \sum_{q=0}^{m-1} &(m-q) \pi^{a\shortrightarrow}_q(m,\nu) + 
	\nu (\pi^{e\shortrightarrow}_m(m,\nu)-\pi^{\shortrightarrow}_{m,m}(m,\nu)) = \sum_{q=1}^m \sum_{e,j} \pi^{\shortrightarrow}_{q,e,j}(m,\nu) s_j^*. 
\end{align} 
The rate of jumping from an actual queue length of $m$ to $m+1$ is given by
$\pi^{\shortrightarrow}_{m,m}(m,\nu) \nu$ and this rate equals the jump rate
from an actual queue length of $m+1$ to $m$ given by $\sum_j \pi^{\shortrightarrow}_{m+1,m+1,j}(m,\nu) s_j^*$. If we combine this equality with \eqref{eq:sumb}, we find that
\begin{align}\label{eq:sumb2}
	\delta \sum_{q=0}^{m-1} &(m-q) \pi^{a\shortrightarrow}_q(m,\nu) + 
	\nu \pi^{e\shortrightarrow}_m(m,\nu) = \sum_{q=1}^{m+1} \sum_{e,j} \pi^{\shortrightarrow}_{q,e,j}(m,\nu) s_j^*. 
\end{align} 
Equation \eqref{eq:nu_check_push} then follows provided that the right-hand side of the
above equality equals $\lambda$.

If we observe the queue when it is busy and focus only on the phase process,
we obtain a Markov chain with rate matrix $S+s^*\alpha$, therefore the right-hand side
of \eqref{eq:sumb2} equals $(1- \pi_0^{a\shortrightarrow}(m,\nu)) \beta s^*$, where
$\beta$ is the unique invariant vector of  $S+s^*\alpha$. As $\beta s^*$ is equal to the
mean service time of a job, which equals $1$, the right-hand side of \eqref{eq:sumb2} 
becomes $\lambda$ when $\pi_0^{a\shortrightarrow}(m,\nu)=1-\lambda$.
\end{proof}

\subsection{Performance bounds}
The result presented in Theorem \ref{cor:push} implies that the actual queue length of the
queue at the cavity is bounded by $\lceil \tilde m \rceil$ and this bound is sensitive
to the phase-type job size distribution characterized by $(\alpha,S)$
via the probability $y=P[Z<X]$. 
We now present tight upper and lower bounds on  $\lceil \tilde m \rceil$.

\begin{theorem}\label{th:mbounds}
In the same setting as Theorem \ref{cor:push}, the maximum  queue length $\lceil \tilde m \rceil$ for the queue at the cavity is  such that
\begin{align}\label{eq:mbounds}
\left\lceil \frac{1}{\delta} \log \left( 
1 +  \frac{1}{\delta}   \frac{ \lambda}{1-\lambda} (e^\delta-1)\right) \right\rceil
\leq 
\lceil \tilde m \rceil \leq 
\left\lceil \frac{\lambda}{(1-\lambda)\delta} \right\rceil, 
\end{align}
for any PH distribution and these bounds are tight.
\end{theorem}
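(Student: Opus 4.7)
The plan is to exploit the explicit formula for $\tilde m$ from Theorem~\ref{cor:push}: since $\tilde m$ depends on the PH distribution only through the scalar $y = P[Z<X]$, the task reduces to (i) determining the range of $y$ over mean-one PH distributions and (ii) analyzing how $\tilde m(y)$ varies with $y$, then evaluating $\tilde m$ at the two extremes of that range.

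For (i), conditioning on $Z$ gives $y = E[e^{-\delta Z}]$, so Jensen's inequality applied to the convex map $z \mapsto e^{-\delta z}$ yields $y \geq e^{-\delta E[Z]} = e^{-\delta}$, with equality iff $Z$ is deterministic at $1$; the strict bound $y < 1$ follows from $Z > 0$ a.s. For (ii), I would substitute $v = -\log y \in (0,\delta]$ and rewrite $\tilde m = 1 + h(v)/v$ with $h(v) = \log(1 + (c-1)(1 - e^{-v}))$ and $c := \lambda/((1-\lambda)\delta)$. A short computation gives $h''(v) = -c(c-1)e^{-v}/(c - (c-1)e^{-v})^2$, which has the sign of $1-c$; hence $h$ is concave when $c > 1$ and convex when $c < 1$. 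Since $h(0) = 0$, the standard fact that the chord slope $h(v)/v$ is monotone decreasing when $h$ is concave (and increasing when convex) implies $\tilde m$ is monotone in $y$: increasing in $y$ when $c > 1$ and decreasing when $c < 1$.

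Next I would evaluate $\tilde m$ at the two extreme values of $y$. The identity $1 + (c-1)(1 - e^{-\delta}) = e^{-\delta}(1 + c(e^\delta - 1))$ gives $\tilde m|_{y = e^{-\delta}} = \frac{1}{\delta}\log(1 + c(e^\delta - 1))$, which is precisely the argument of the lower-bound ceiling, and a Taylor expansion of $h(v)/v$ at $v=0$ gives $\lim_{y \to 1^-} \tilde m = c$, the argument of the upper-bound ceiling. Combining with monotonicity: when $c > 1$, $\tilde m(y)$ takes values in $\bigl[\tfrac{1}{\delta}\log(1 + c(e^\delta - 1)),\, c\bigr)$ and the ceiling bounds follow directly; when $c \leq 1$, convexity of $e^x$ gives $1 + c(e^\delta - 1) \leq e^{c\delta} \leq e^\delta$, so $\tilde m \leq 1$ for every admissible $y$, forcing all three ceilings to equal $1$ and the bounds to hold trivially.

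Tightness is exhibited by two families of PH distributions. The Erlang-$k$ family yields $y = (k/(k+\delta))^k \to e^{-\delta}$ as $k \to \infty$, which (as already used in the Erlang-$k$ example leading to~\eqref{eq:det}) realises the lower bound in the limit. For the upper bound, a two-phase hyperexponential whose dominant phase has near-zero mean and whose complementary small-probability phase has large mean (calibrated so that $E[Z] = 1$) drives $y \to 1^-$ and realises the upper bound. The main technical obstacle is the sign analysis of $h''(v)$ together with invoking the concavity-implies-decreasing-chord-slope fact; once monotonicity of $\tilde m$ in $y$ is in hand, the endpoint evaluations and the case split on $c>1$ versus $c\leq1$ reduce to routine algebra.
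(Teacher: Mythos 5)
Your proof is correct and follows the same overall strategy as the paper's: reduce everything to $y=P[Z<X]=E[e^{-\delta Z}]$, pin down the range $y\in[e^{-\delta},1)$ via Jensen's inequality, establish monotonicity of $\tilde m$ in $y$, evaluate at the two endpoints, and exhibit tightness with the Erlang-$k$ and two-phase hyperexponential families --- the latter two witnesses are exactly the ones the paper uses. The one place you genuinely diverge is the monotonicity step: the paper shows that the thresholds $\lambda^{\shortrightarrow}_m$ of Corollary~\ref{th:lam_m} are decreasing in $y$ for every integer $m\geq 1$ (equivalently, that $y^{m-1}(1-y)/(1-y^m)$ is increasing in $y$), which controls $\lceil\tilde m\rceil$ directly through the intervals $(\lambda^{\shortrightarrow}_{m-1},\lambda^{\shortrightarrow}_m]$, whereas you analyze $\tilde m(y)$ itself via the substitution $v=-\log y$ and the concavity/convexity of $h(v)=\log(1+(c-1)(1-e^{-v}))$, which forces the case split on $c=\lambda/((1-\lambda)\delta)$ versus $1$. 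Your route is slightly longer but arguably more careful: $\tilde m$ itself is in fact \emph{decreasing} in $y$ when $c<1$, a regime in which the paper's blanket claim that $\tilde m$ is minimized/maximized by the $Z$ minimizing/maximizing $y$ is not literally true of $\tilde m$ (harmlessly, since there $\lceil\tilde m\rceil\equiv 1$), while the paper's threshold argument buys a uniform treatment with no case split. One small correction in your $c\leq 1$ case: convexity of $e^x$ gives $e^{c\delta}\leq 1+c(e^\delta-1)$, the reverse of the intermediate inequality you wrote; but the inequality you actually need, $1+c(e^\delta-1)\leq e^\delta$, follows immediately from $c\leq 1$, so the conclusion stands.
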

\begin{proof}
We start by noting that any job size distribution $Z$ that maximizes 
$\lambda^{\shortrightarrow}_m$ for all $m$ minimizes $\tilde m$ and likewise any $Z$
that minimizes $\lambda^{\shortrightarrow}_m$ for all $m$ maximizes $\tilde m$.
We now show that $\lambda^{\shortrightarrow}_m$ decreases as a function of $y$
for $y \in (0,1)$, which is equivalent to showing that $\kappa(y) =y^{m-1}(1-y)/ (1-y^m)$ increases
in $y$. One readily
checks that $\kappa'(y) > 0$ if
$m(1-y)-(1-y^m)$ is positive, which holds for $y \in (0,1)$ and $m \geq 1$.
In other words,  $\tilde m$ is minimized/maximized by the distribution $Z$ that
minimizes/maximizes $y = P[Z < X]$. 

When $Z$ is deterministic we have $P[Z<X] = e^{-\delta}$ and
by Jensen's inequality we have for any $Z$ with $E[Z]=1$ that
\[e^{-\delta} = e^{-\delta E[Z]} \leq E[e^{-\delta Z} ]= \int_0^\infty e^{-\delta s} dP[Z\leq s] =
P[Z<X],\]
which implies that $y\geq e^{-\delta}$. 
Plugging $y=e^{-\delta}$ in \eqref{eq:push_m} and using
$1=\log(e^{\delta})/\delta$ yields the lower bound and its tightness follows from  \eqref{eq:det}.

To prove the upper bound and the fact that it is tight, 
consider the order $2$ hyperexponential distribution $Z(\varepsilon)$ 
with $p_1 = 1-\varepsilon$, $p_2=\varepsilon$, $\mu_1=(1-\varepsilon)/\varepsilon$ and
$\mu_2 = \varepsilon/(1-\varepsilon)$. We have $E[Z(\varepsilon)] = p_1/\mu_1+p_2/\mu_2 = 1$ and
\[P[Z(\varepsilon) < X] = \sum_{i=1}^2 p_i \frac{\mu_i}{\mu_i+\delta}=\frac{(1-\varepsilon)^2}{(1-\varepsilon)+\delta \varepsilon}+\frac{\varepsilon^2}{\varepsilon+(1-\varepsilon)\delta} ,\]
meaning $y=P[Z(\varepsilon)<X]$ tends to one as $\varepsilon$ tends to zero.
Using \eqref{eq:push_m} and the continuity and Taylor series expansion
of $\log(1+z) = \sum_{i=1}^\infty (-1)^{i+1} z^i/i$ in $z=0$, this yields
\[ \lim_{\varepsilon \shortrightarrow 0} \tilde m  = 
1 + \lim_{\varepsilon \shortrightarrow 0} \left[ \left.
\left(\frac{\lambda}{1- \lambda} - \delta \right)
\frac{P[Z(\varepsilon)> X]}{\delta } \middle/ (1-P[Z(\varepsilon)<X]) \right. \right]
= \frac{\lambda}{(1- \lambda)  \delta}. \]
\end{proof}

We can make the following observations.
\begin{enumerate}
\item The upper bound becomes $1$ when $\delta \geq \lambda/(1-\lambda)$.
Hence, for  $\delta \geq \lambda/(1-\lambda)$ we have vanishing wait for any 
PH job size
distribution.
\item While the upper bound on the maximum queue length grows as $1/(1-\lambda)$ for fixed 
$\delta$ when $\lambda$ tends to one, it should be noted that for any given PH
distribution we have $y < 1$, which implies that for a given
PH distribution  the
maximum queue length only grows as fast as $\log(1/(1-\lambda))$ 
when  $\lambda$ tends to one (as in the exponential case).
As such, the limits of $\lambda$ and $y$ tending to one cannot be
interchanged.  
\item The upper bound can also be established based on \eqref{eq:nu_check_push}, by
noting that
\begin{align*}
 0 \leq \nu \pi^{e\shortrightarrow}_m(m,\nu) &= \lambda - \delta \sum_{q=0}^m (m-q) \pi^{a\shortrightarrow}_q(m,\nu)\leq \lambda-\delta m (1-\lambda),
\end{align*}
as $\pi^{a\shortrightarrow}_q(m,\nu)\geq 0$ for $q > 0$ and $\pi_0^{a\shortrightarrow}(m,\nu)=1-\lambda$.
Hence, $m \leq \lambda/((1-\lambda)\delta)$ as required.
\end{enumerate}
 
From the proof of Theorem \ref{th:mbounds} we observe that the upper bound is even tight
for the class of order $2$ phase-type distributions. The lower bound however is not tight
if we restrict ourselves to order $k$ phase-type distributions. The next
result shows that for order $k$ phase-type distributions the lower bound corresponds
to Erlang-$k$ service times:

\begin{proposition}\label{th:mbounds_orderk}
In the same setting as Theorem \ref{cor:push}, the maximum  queue length $\lceil \tilde m \rceil$ for the queue at the cavity is  such that
$\left\lceil m_{Erl}(k) \right\rceil
\leq 
\lceil \tilde m \rceil$, 
for any order $k$ phase-type distribution, where $m_{Erl}(k)$ was defined in \eqref{eq:mErl}.
\end{proposition}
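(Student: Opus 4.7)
The plan is to leverage the monotonicity of $\tilde m$ in $y = P[Z < X]$ established in the proof of Theorem~\ref{th:mbounds}: there we saw that $\tilde m$ depends on the PH distribution only through $y$ and is monotone increasing in $y$. Since $m_{Erl}(k)$ is precisely $\tilde m$ evaluated at $y = (k/(k+\delta))^k$ (the value of $y$ for the Erlang-$k$ with mean one), the proposition reduces to the Laplace-transform inequality
\[
P[Z<X] = E[e^{-\delta Z}] \geq \left(\frac{k}{k+\delta}\right)^k
\]
for every order-$k$ phase-type distribution $Z$ with $E[Z]=1$.

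I would first settle this inequality within the subclass of generalized Erlang (hypo-exponential) distributions, namely $Z = X_1 + \cdots + X_k$ with the $X_i$ independent exponentials of means $\mu_i$ and $\sum_{i=1}^k \mu_i = 1$. For these one has $E[e^{-\delta Z}] = \prod_{i=1}^k (1+\delta\mu_i)^{-1}$, hence $-\log E[e^{-\delta Z}] = \sum_i \log(1+\delta\mu_i)$. Since $\mu \mapsto \log(1+\delta\mu)$ is concave, this sum is Schur-concave on the simplex $\{\mu_i\geq 0, \sum_i\mu_i=1\}$ and is therefore maximized exactly when $\mu_i = 1/k$ for all $i$, which corresponds to the Erlang-$k$ with mean one. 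This proves the bound with equality within the generalized Erlang class.

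The main obstacle will be extending the bound to every order-$k$ PH representation $(\alpha,S)$, since $(\alpha,S)$ may encode non-trivial initial mixtures, direct absorption from intermediate phases, and feedback transitions between phases. My plan is to cast the problem as the constrained minimization of $y(\alpha,S) = \alpha(\delta I - S)^{-1}(-S)\mathbf{1}$ subject to $\alpha(-S)^{-1}\mathbf{1} = 1$ and the usual PH constraints, and then to use a triangular (Coxian-like) reduction of the parameter space, together with the Schur-concavity already established, to show that the minimizer can be taken to be the Erlang-$k$ representation. As sanity checks, the small-$\delta$ regime follows from the moment bound $E[Z^2] \geq 1+1/k$ valid for order-$k$ PH with $E[Z]=1$ (Taylor-expand $e^{-\delta Z}$ to second order), while the large-$\delta$ regime follows from the fact that any order-$k$ PH has $E[e^{-\delta Z}] = \Theta(\delta^{-\ell})$ for some $\ell \leq k$, with the Erlang-$k$ uniquely saturating $\ell = k$. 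Combining these asymptotic regimes into one inequality uniform in $\delta > 0$ and in all admissible $(\alpha, S)$ is where the technical difficulty concentrates.
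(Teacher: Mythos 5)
Your reduction is exactly the one the paper uses: since $\tilde m$ depends on the job size distribution only through $y = P[Z<X] = E[e^{-\delta Z}]$ and is increasing in $y$, the proposition is equivalent to the Laplace--transform inequality $E[e^{-\delta Z}] \geq (k/(k+\delta))^k$ for every order-$k$ phase-type distribution with unit mean, and your Schur-concavity computation correctly settles this within the hypoexponential (generalized Erlang) subclass. The problem is that the extension to \emph{all} order-$k$ representations --- which is the actual content of the proposition --- is left as a plan rather than a proof, and the plan has a real gap: the proposed ``triangular (Coxian-like) reduction'' is not available in general, because an order-$k$ PH distribution need not admit an order-$k$ Coxian or triangular representation, so the constrained minimization of $y(\alpha,S)$ cannot simply be restricted to the hypoexponential simplex where your Schur-concavity argument applies. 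The small-$\delta$ and large-$\delta$ sanity checks do not combine into an inequality uniform in $\delta>0$ and in $(\alpha,S)$, as you yourself acknowledge.

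The missing idea is a convex-order (majorization) statement rather than a parametric optimization. Since $z \mapsto e^{-\delta z}$ is convex, it suffices to know that $E[\xi(Z)] \geq E[\xi(Z')]$ for every convex $\xi$, where $Z'$ is the Erlang-$k$ with the same mean; this is precisely Theorem 3 of O'Cinneide \cite{ocinneide2}, which states that any distribution with an order-$k$ phase-type representation majorizes the order-$k$ Erlang distribution with the same mean. That single citation closes the gap and is how the paper completes the proof. Without it, or an equivalent extremal characterization of Erlang-$k$ among order-$k$ PH distributions, your argument establishes the bound only on a strict subclass.
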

\begin{proof}
	Let $Z$ be a random variable with an order $k$ representation. From the
	proof of Theorem \ref{th:mbounds} it suffices to show that $P[Z < X]$, with $X$
	exponential with parameter $\delta$, is larger than the probability $P[Z' < X]$, where
	$Z'$ is an Erlang-$k$ random variable. As $P[Z < X] = E[e^{-\delta Z}]$,
	it suffices to show that $E[\xi(Z)] \geq E[\xi(Z')]$ holds for any convex
	function $\xi$. Theorem 3 in \cite{ocinneide2} shows that any PH distribution
	with an order $k$ representation {\it majorizes} the order $k$ Erlang distribution with
	the same mean, where a distribution $Z_1$ majorizes another distribution $Z_2$ exactly when $E[\xi(Z_1)] \geq E[\xi(Z_2)]$ for any convex function $\xi$ \cite{hardy1952}.
\end{proof}
Some remarks:
\begin{enumerate}
\item 
It is easy to check that $m_{Erl}(k)$ is decreasing in $k$, which is in agreement with
the fact that any PH distribution with an order $k$ representation also has an
order $k'$ representation for any $k' > k$. 
\item $Z_2$ majorizes $Z_1$ 
if $Z_1 \leq_{cx} Z_2$, where $\leq_{cx}$ is the usual convex ordering. This
is also equivalent to stating that $E[Z_1]=E[Z_2]$ and $E[\max(Z_1 - t,0)] \leq E[\max(Z_2 - t,0)]$ for any $t$. This allows us to show that if $Z_1 \leq_{cx} Z_2$ for two
job size distributions $Z_1$ and $Z_2$ (both with mean $1$), the maximum queue length for $Z_2$ is lower bounded by the maximum queue length for $Z_1$.
\end{enumerate} 
We proceed by presenting an explicit lower and upper bound on the mean queue length of the queue at the cavity:
 
\begin{theorem}\label{th:qact}
In the same setting as Theorem \ref{cor:push}, let $E[Q^{a\shortrightarrow}]$ be the mean queue length of the queue at the cavity, then

\[\lfloor \tilde m \rfloor-\lambda^{\shortrightarrow}_{\lfloor \tilde m \rfloor}/\delta \leq E[Q^{a\shortrightarrow}] \leq 
\lceil \tilde m \rceil-\lambda^{\shortrightarrow}_{\lceil \tilde m \rceil}/\delta,\] 
where
$\tilde m$ is given by \eqref{eq:push_m} and $\lambda^{\shortrightarrow}_m$ is given by \eqref{eq:lam_m}.
\end{theorem}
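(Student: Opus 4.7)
The plan is to extract an exact formula for $E[Q^{a\shortrightarrow}]$ as a function of $(m,\nu)$ from the flow-balance relation \eqref{eq:sumb2}, and then bracket the true value of $E[Q^{a\shortrightarrow}]$ between what one obtains at the extremes $\nu=0$ and $\nu\to\infty$, using stochastic monotonicity in $\nu$. First, starting from equation \eqref{eq:sumb2} (which holds whenever $\pi_0^{a\shortrightarrow}(m,\nu)=1-\lambda$), I would rewrite $\sum_{q=0}^{m-1}(m-q)\pi^{a\shortrightarrow}_q$ via $\sum_q \pi^{a\shortrightarrow}_q=1$ and $\sum_q q\pi^{a\shortrightarrow}_q=E[Q^{a\shortrightarrow}]$; after elementary bookkeeping this sum collapses to $m+\pi^{a\shortrightarrow}_{m+1}-E[Q^{a\shortrightarrow}]$, yielding the key identity
\[E[Q^{a\shortrightarrow}] = m + \pi^{a\shortrightarrow}_{m+1}(m,\nu) - \frac{\lambda - \nu \pi^{e\shortrightarrow}_m(m,\nu)}{\delta}. \quad (\star)\]

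Next, I would apply $(\star)$ at the two extremes of $\nu$ with $m=\lfloor\tilde m\rfloor$. At $\nu=0$ the states with $e=m+1$ are transient in $Q^{\shortrightarrow}(m,0)$, so $\pi^{a\shortrightarrow}_{m+1}(m,0)=0$ and $\nu\pi^{e\shortrightarrow}_m(m,0)=0$, while $1-\pi_0^{a\shortrightarrow}(m,0)=\lambda^{\shortrightarrow}_m$ by Corollary \ref{th:lam_m}; $(\star)$ thus gives $E[Q^{a\shortrightarrow}](m,0)=m-\lambda^{\shortrightarrow}_m/\delta$. In the limit $\nu\to\infty$ the $\nu$-jumps drain all mass out of $e=m$ instantaneously, and tracing through the remaining transitions shows that the effective chain on $e=m+1$ is isomorphic (via $(q,m+1,j)\leftrightarrow(q,j)$ and $(0,m+1)\leftrightarrow 0$) to the chain $Q^{\shortrightarrow}_{(0)}(m+1)$ of \eqref{eq:Qpush0}; reapplying $(\star)$ at $(m+1,0)$ then yields $\lim_{\nu\to\infty}E[Q^{a\shortrightarrow}](m,\nu) = (m+1)-\lambda^{\shortrightarrow}_{m+1}/\delta$.

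Finally, I would invoke the stochastic monotonicity argument already used in the proof of Theorem \ref{cor:push}: by \cite{busicMOR}[Theorem 1], since every $\nu$-transition increases the actual queue length in the lexicographic order, the steady-state distribution of $Q^{\shortrightarrow}(m,\nu)$ is stochastically increasing in $\nu$, so $\nu\mapsto E[Q^{a\shortrightarrow}](m,\nu)$ is nondecreasing. For non-integer $\tilde m$ the unique $\nu^*\in(0,\infty)$ matching the target $\lambda$ sandwiches $E[Q^{a\shortrightarrow}]=E[Q^{a\shortrightarrow}](m,\nu^*)$ between the two boundary values computed above, and $m=\lfloor\tilde m\rfloor$, $m+1=\lceil\tilde m\rceil$ deliver the stated bounds; when $\tilde m$ is an integer, $\nu^*=0$ so the actual value coincides with the lower bound, which in turn equals the upper bound. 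The main obstacle will be cleanly identifying the $\nu\to\infty$ limit with $Q^{\shortrightarrow}_{(0)}(m+1)$: one has to verify that every probe event originating from a state with $e=m+1$ ends up, after the cascade of instantaneous $\nu$-jumps and batch assignments, in the correct state of $Q^{\shortrightarrow}_{(0)}(m+1)$ at the correct rate, after which the remainder is a routine combination of $(\star)$ and monotonicity.
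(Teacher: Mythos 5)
Your proposal is correct, and it shares the paper's overall skeleton -- bracket $E[Q^{a\shortrightarrow}]$ between the mean queue lengths of the two boundary chains $(\lfloor \tilde m\rfloor,0)$ and $(\lceil \tilde m\rceil,0)$ via the \cite{busicMOR} monotonicity in $\nu$ -- but it computes the boundary values by a genuinely different route. The paper evaluates $\sum_{i=1}^m\bigl(1-\sum_{j=0}^{i-1}\pi^{a\shortrightarrow}_j(m)\bigr)$ directly from the renewal-cycle formula of Proposition \ref{th:pim} and then simplifies to $m-\lambda^{\shortrightarrow}_m/\delta$ by applying Lemma \ref{lem:Z-X} twice; you instead read the value off the rate-conservation identity behind \eqref{eq:nu_check_push}, rearranged into your $(\star)$. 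Your identity is valid: \eqref{eq:sumb2} holds for arbitrary $(m,\nu)$ with right-hand side $1-\pi^{a\shortrightarrow}_0(m,\nu)$ (the hypothesis $\pi^{a\shortrightarrow}_0=1-\lambda$ enters only at the last step of Proposition \ref{th:nu_eq_holds}), so substituting $1-\pi^{a\shortrightarrow}_0(m,0)=\lambda^{\shortrightarrow}_m$ at $\nu=0$ is legitimate, and your identification of the $\nu\to\infty$ chain with $Q^{\shortrightarrow}_{(0)}(m+1)$ is correct (each probe from a state with $e=m+1$ lands in $(\max(q,m),m,\cdot)$ and is then pushed by a single instantaneous $\nu$-jump to $(m+1,m+1,\cdot)$ with the phase preserved, exactly matching the last block column of \eqref{eq:Qpush0} at level $m+1$; note the paper's own upper bound implicitly relies on the same identification). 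The paper in fact flags your endpoint computation as an alternative in observation (3) following the theorem. What each approach buys: your $(\star)$ avoids Proposition \ref{th:pim} and Lemma \ref{lem:Z-X} entirely and makes the combinatorial identity $\lambda^{\shortrightarrow}_m=\delta(m-E[Q^{a\shortrightarrow}])$ at the transition loads transparent, at the cost of having to justify the $\nu\to\infty$ limit interchange; the paper's route stays within the already-established cycle formula and needs no limiting argument for the lower bound, only for the upper one. The one point you should make explicit is that $(\star)$ is being applied at parameter pairs where the empty-queue probability is \emph{not} $1-\lambda$, which requires the unconditional form of \eqref{eq:sumb2} rather than the statement of Proposition \ref{th:nu_eq_holds} as given.
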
 
\begin{proof}
Let $(\lfloor \tilde m \rfloor,\nu)$ be such that $\pi^{a\shortrightarrow}_0(\lfloor \tilde m \rfloor,\nu)=1-\lambda$ (with $\nu=0$ if $\tilde m$ is integer) and let
\[E[Q^{a\shortrightarrow}] = \sum_{i=1}^{\lceil \tilde m \rceil}
i \pi^{a\shortrightarrow}_i(\lfloor \tilde m \rfloor,\nu)= \sum_{i=1}^{\lceil \tilde m \rceil}
\sum_{j=i}^{\lceil \tilde m \rceil} \pi^{a\shortrightarrow}_j(\lfloor \tilde m \rfloor,\nu),\] 
denote the mean queue length. Due to \cite{busicMOR}, the probability to have an actual queue length
of at least $i$ grows with $\nu$, which implies 
\[ 
\sum_{i=1}^{\lfloor \tilde m \rfloor}\sum_{j=i}^{\lfloor \tilde m \rfloor} \pi^{a\shortrightarrow}_j(\lfloor \tilde m \rfloor,0) \leq
E[Q^{a\shortrightarrow}] \leq 
\sum_{i=1}^{\lceil \tilde m \rceil}\sum_{j=i}^{\lceil \tilde m \rceil} \pi^{a\shortrightarrow}_j(\lceil \tilde m \rceil,0).
\]
Hence, it suffices to derive an expression for 
\[
\sum_{i=1}^m \sum_{j=i}^m \pi^{a\shortrightarrow}_j(m,0)
= \sum_{i=1}^m \left( 1- \sum_{j=0}^{i-1} \pi^{a\shortrightarrow}_j(m)\right),
\] 
with $m>0$ an integer.
Proposition \ref{th:pim} yields
\begin{align*}
	\sum_{i=1}^m \left( 1- \sum_{j=0}^{i-1} \pi^{a\shortrightarrow}_j(m)\right)=m - \frac{(1-y^m)/(1-y)}{1+\delta y^{m-1} + \delta(1-y^{m-1})\alpha'(-S)^{-1}\textbf{1}}.
\end{align*}
Using \eqref{eq:lemma} twice, we find
\[
\sum_{i=1}^m \left( 1- \sum_{j=0}^{i-1} \pi^{a\shortrightarrow}_j(m)\right)=
m - \frac{(1-y^m)}{\delta(1-y^m)+y^{m-1}(1-y)} = m - \lambda^{\shortrightarrow}_m/\delta,
\]
which completes the proof.
\end{proof} 

The following observations are worth noting:
\begin{enumerate}
\item The difference between the upper and lower bound is less than $1$ as
$\lambda^{\shortrightarrow}_m$ is increasing in $m$. 
\item When the job size is exponential, the lower and upper bound are given by
$\lfloor \tilde m \rfloor - (1-(1+\delta)^{-\lfloor \tilde m \rfloor})/\delta$ 
and $\lceil \tilde m \rceil - (1-(1+\delta)^{-\lceil \tilde m \rceil})/\delta$, respectively,
as $\lambda^{\shortrightarrow}_m = 1-(1+\delta)^{-m}$, which are the bounds presented in \cite{van2019hyper}.
\item The expression for the upper and lower bound can also be derived using \eqref{eq:nu_check_push}. More specifically, when $\lambda=\lambda_m^{\shortrightarrow}$,
then $\nu =0$ and \eqref{eq:nu_check_push} yields that
\[\lambda = \delta \sum_{q=0}^{m} (m-q) P[Q^{a\shortrightarrow} = q]
= \delta m - \delta E[Q^{a\shortrightarrow}].\]
\item In order to make the bounds given in Theorem \ref{th:qact} independent of the job size distribution, one can note that $\lambda_m^\shortrightarrow$ (given by \eqref{eq:lam_m}) is decreasing in function of $y$ (for any $m \geq 1$). One can therefore take the limit $y \rightarrow 1^-$ and $y \rightarrow \left( e^{-\delta} \right)^+$ to obtain an upper and a lower bound. For the upper bound, we obtain the expression:
$$
E[Q^{a\shortrightarrow}] \leq 
\left. \delta \cdot \left \lceil \frac{\lambda}{(1-\lambda) \delta} \right \rceil ^2
\middle/
\left( 1+\delta \cdot \left \lceil \frac{\lambda}{(1-\lambda) \delta} \right \rceil \right) \right.
$$
\end{enumerate}

\subsection{Critically loaded system}
We now proceed our analysis of the push policy by considering the limiting regime $\lambda \shortrightarrow 1^-$ (see e.g.~\cite{hellemans2021mean, maguluri2016heavy, mitzenmacher2}). 
\begin{corollary} \label{th:push_critical}
	In the same setting as Theorem \ref{cor:push}, let $E[Q^{a\shortrightarrow}]$ resp.~$E[R^{a\shortrightarrow}]$ denote the mean queue length resp.~mean response time of the pull policy with arrival rate $\lambda$, then:
	\begin{equation}\label{eq:heavy_pull}
		\lim_{\lambda \shortrightarrow 1^-} \frac{E[R^{a \shortrightarrow}]}{\log\left(\frac{1}{1-\lambda}\right)}=
		\lim_{\lambda \shortrightarrow 1^-} \frac{E[Q^{a \shortrightarrow}]}{\log\left(\frac{1}{1-\lambda}\right)}=
		\lim_{\lambda \shortrightarrow 1^-} \frac{\tilde m}{\log\left(\frac{1}{1-\lambda}\right)}
		= \frac{1}{\log(1/y)}.
	\end{equation}
\end{corollary}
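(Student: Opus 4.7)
The plan is to handle the three limits in order: first the deterministic asymptotics of $\tilde m$, then squeeze $E[Q^{a\shortrightarrow}]$ between the bounds of Theorem~\ref{th:qact}, and finally pass from $E[Q^{a\shortrightarrow}]$ to $E[R^{a\shortrightarrow}]$ by Little's law.

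\textbf{Step 1: asymptotics of $\tilde m$.} Starting from the closed form \eqref{eq:push_m},
\[
\tilde m = \frac{\log\!\left(\frac{1}{y} + \left(\frac{\lambda}{\delta(1-\lambda)} - 1\right)\frac{1-y}{y}\right)}{\log(1/y)},
\]
I would observe that as $\lambda \shortrightarrow 1^-$ the term $\frac{\lambda}{\delta(1-\lambda)}$ diverges while $y \in (0,1)$ and $\delta$ stay fixed. Pulling out the dominant factor $\frac{1-y}{\delta y}\cdot\frac{1}{1-\lambda}$ inside the logarithm and using $\log(1+o(1)) = o(1)$ gives
\[
\log\!\left(\frac{1}{y} + \left(\frac{\lambda}{\delta(1-\lambda)} - 1\right)\frac{1-y}{y}\right) = \log\!\left(\frac{1}{1-\lambda}\right) + \log\!\left(\frac{1-y}{\delta y}\right) + o(1).
\]
Dividing by $\log(1/y) > 0$ and then by $\log(1/(1-\lambda))$ immediately yields the claimed limit $1/\log(1/y)$ for the $\tilde m$ quotient.

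\textbf{Step 2: $E[Q^{a\shortrightarrow}]$ via the sandwich.} By Theorem~\ref{th:qact},
\[
\lfloor \tilde m \rfloor - \lambda^{\shortrightarrow}_{\lfloor \tilde m \rfloor}/\delta \;\leq\; E[Q^{a\shortrightarrow}] \;\leq\; \lceil \tilde m \rceil - \lambda^{\shortrightarrow}_{\lceil \tilde m \rceil}/\delta .
\]
Since $\lambda^{\shortrightarrow}_m \in [0,1]$ by \eqref{eq:lam_m}, the correction $\lambda^{\shortrightarrow}_m/\delta$ is uniformly bounded by $1/\delta$, and since $\tilde m \shortrightarrow \infty$ from Step 1 we also have $\lfloor \tilde m \rfloor / \tilde m, \lceil \tilde m \rceil / \tilde m \shortrightarrow 1$. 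Dividing the sandwich by $\log(1/(1-\lambda))$ and invoking Step 1, both the lower and upper bound tend to $1/\log(1/y)$, so the middle term does too.

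\textbf{Step 3: mean response time.} The arrival rate to the cavity queue is $\lambda$, so Little's law gives $E[R^{a\shortrightarrow}] = E[Q^{a\shortrightarrow}]/\lambda$. Since $\lambda \shortrightarrow 1$, the two quantities have the same asymptotics and
\[
\frac{E[R^{a\shortrightarrow}]}{\log(1/(1-\lambda))} = \frac{1}{\lambda}\cdot\frac{E[Q^{a\shortrightarrow}]}{\log(1/(1-\lambda))} \shortrightarrow \frac{1}{\log(1/y)} .
\]

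\textbf{Main obstacle.} None of the steps is deep; the only care required is the logarithmic expansion in Step~1, where one must verify that the lower-order terms inside the log (the $\tfrac{1}{y}$ and the $-\tfrac{1-y}{y}$) are indeed absorbed into a $\log(1+o(1))=o(1)$ contribution, so that only the $\log(1/(1-\lambda))$ piece survives after normalisation. Everything else is bounding bookkeeping.
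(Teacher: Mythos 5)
Your proof is correct and follows essentially the same route as the paper: Little's law for the first equality, the sandwich from the mean-queue-length bounds of Theorem~\ref{th:qact} for the second, and the asymptotics of the closed form \eqref{eq:push_m} for the third (where you expand the logarithm directly instead of invoking l'H\^opital's rule, an immaterial difference).
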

\begin{proof}
	The first equality follows from Little's law, while the second equality easily follows by applying Theorem \ref{th:mbounds}. The last equality follows from \eqref{eq:push_m} after computing (using l'H\^opital's rule):
	$$
	\lim_{\lambda \shortrightarrow 1^-} \left. \log\left( 1 + \left( \frac{\lambda}{1-\lambda} - \delta \right) \frac{1-y}{\delta} \right) \middle/
	\log\left(\frac{1}{1-\lambda}\right)\right.  = 1 .
	$$
\end{proof}
	As $y\shortrightarrow 1^-$ we find that the right hand side in \eqref{eq:heavy_pull} goes to infinity. This corresponds to the fact that we found a (tight) upper bound in Theorem \ref{th:mbounds} which is of the order $\frac{1}{1-\lambda}$ rather than $\log\left( \frac{1}{1-\lambda} \right)$. In the next section, we perform some numerical experiments.

\subsection{Numerical Experiments}\label{subsec:pushnum}

For all numerical experiments we perform, job sizes are assumed to be hyperexponentially distributed of order 2 (and mean 1). This distribution is uniquely defined through two parameters, the Squared Coefficient of Variation (SCV $\in [1,\infty)$) and a shape parameter $f \in [0,1]$ (see e.g.~also \cite{hellemansSIG18}).

In Figure \ref{fig:push} (left) we show the expected response times together with the lower and upper bounds obtained from Theorem \ref{th:qact}. Here we set $f=1/2$, $SCV=10$, $\delta \in \{0.15,0.5\}$ and $\lambda \in [0.5,1]$. We clearly see that decreasing $\delta$ or increasing $\lambda$ increases the mean response time and the values of the bounds. Note that the mean response time is non-differentiable at the values where $\tilde m\in \mathbb{N}$. Furthermore, the proposed bounds become exact at these points (as is clear from the proof of Theorem \ref{th:qact}).

In Theorem \ref{cor:push} we showed that $\tilde m$ depends on the job size distribution through the value of $y$ as defined in \eqref{eq:y}. In Figure \ref{fig:push} (right) we plot 
$\tilde m$ for $f=1/SCV$, $SCV \in \{5,10,50,250\}$, $\lambda \in [0.5,1)$ and $\delta = 0.5$. We observe that as the SCV increases, the gap with the upper bound reduces (to zero as $y$ converges to $1$).
This is however not true in general, for instance, when $f=1/2$ one finds that
$\tilde m$ does not approach the upper bound when the $SCV$ tends to infinity (as $y$ does not
converge to $1$ in such case).

\begin{figure}
    \centering
    \begin{minipage}{0.5\textwidth}
        \centering
        \includegraphics[width=1\textwidth]{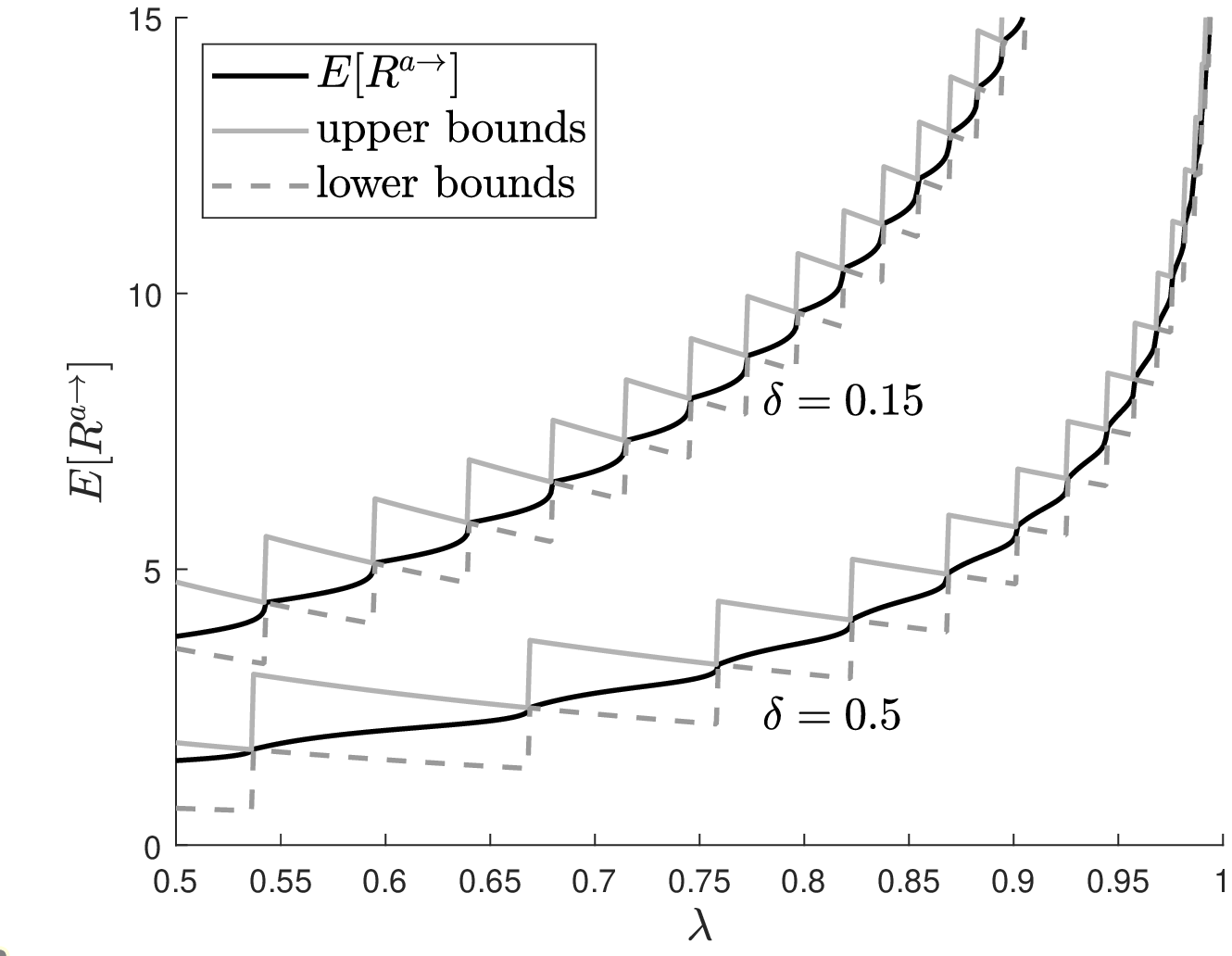}
    \end{minipage}\hfill
    \begin{minipage}{0.5\textwidth}
        \centering
        \includegraphics[width=1\textwidth]{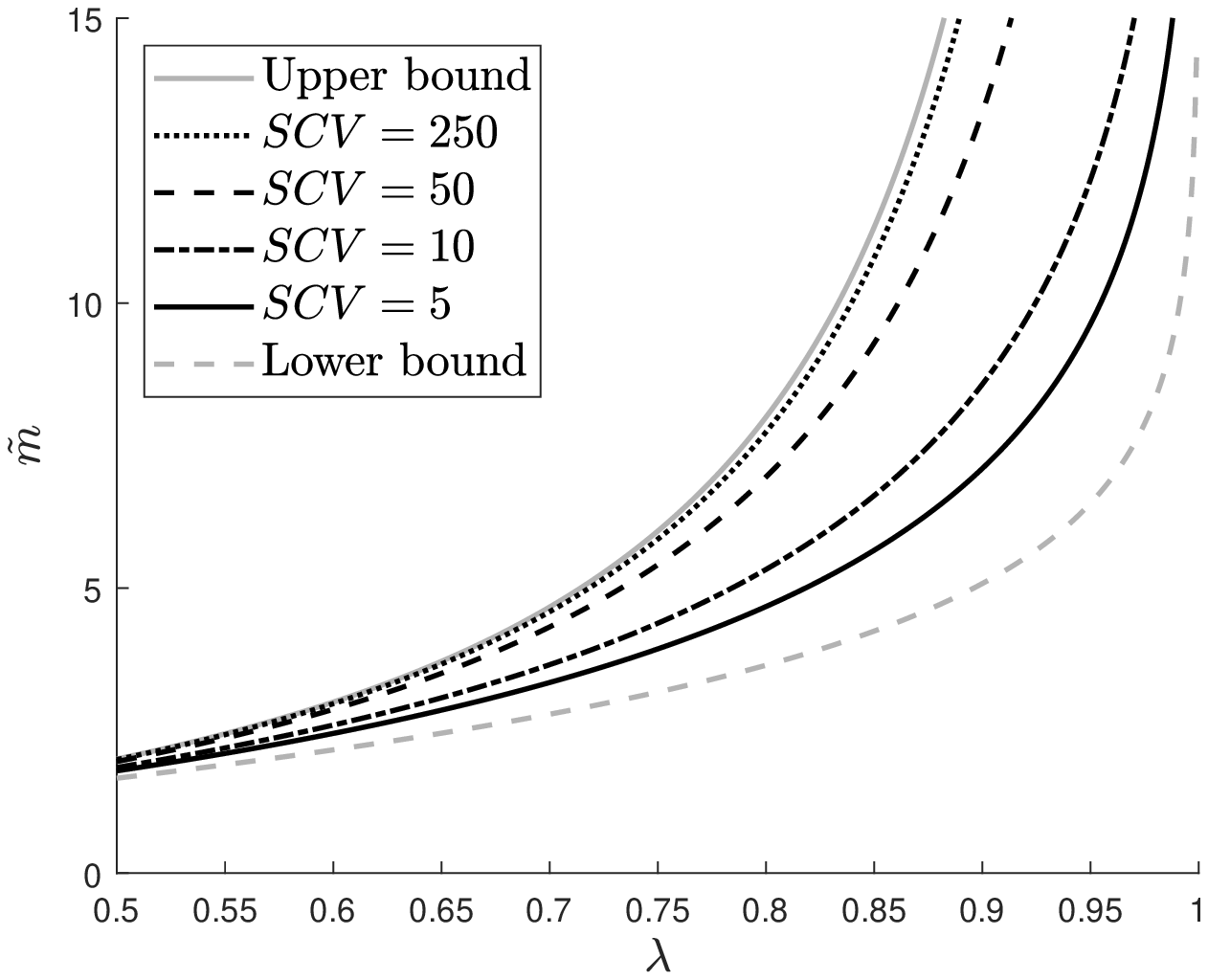}
    \end{minipage}
    \caption{Left: $E[R^{a{\shortrightarrow}}]$ in function of $\lambda$ with the lower and upper bounds from Theorem \ref{th:qact}. Right: $\tilde m$ in function of $\lambda$ for various values of the SCV with upper and lower bounds.}\label{fig:push}
\end{figure}

\section{Water filling} \label{sec:waterfilling}

In this section, we present the cavity approach for the \textit{water filling} policy introduced in \cite{ying2017power}. The accuracy of the cavity method for this policy is illustrated 
by simulation in Section \ref{app:simulation}. While the policy is quite different from the push policy, it turns out that its performance is quite similar. This similarity in performance was not even noted before in the exponential case.

Given a probe rate $\delta > 0$, each batch of jobs selects $\frac{\delta}{\lambda} M$ queues and the $M$ jobs are assigned using \textit{water filling} (with $M$ scaling as $\Theta(\log N)$). This entails that the overall probe rate is $\frac{\lambda}{M} \cdot \frac{\delta}{\lambda} M = \delta$. At any batch arrival, all selected queues are first filled up to some constant $m$ and some additional fraction of the selected servers get an additional arrival which raises their queue length to $m+1$. As the batch size scales with $N$, the cavity queue is characterized by two values, $m \in \mathbb{N}$ and $c \in [0,1]$. The cavity queue length jumps to $m$ at rate $\delta (1-c)$, while it jumps up to $m+1$ at rate $\delta c$. The state space of the queue at the cavity is therefore defined as:

\begin{equation}
	\Omega^w = \{0\} \cup \{(q,j) \mid q \in \{1,\dots,m+1\}, j \in \{1,\dots, n_s\}\},
\end{equation}
while the rate matrix is given by:
\begin{equation}\label{eq:water_filling_PH_Q}
	Q^w(m,c)
	=
	\begin{pmatrix}
		-\delta &  &  & & & \delta (1-c) \alpha & \delta c \alpha \\ 
		s^* & S-\delta I & & &  & \delta (1-c) I & \delta c I \\ 
		& s^* \alpha & S-\delta I & & & \delta (1-c) I & \delta c I \\ 
		&  & \ddots & \ddots & & \vdots & \vdots \\ 
		&  & & s^* \alpha & S-\delta I& \delta (1-c) I & \delta c I  \\ 
		&  &  &  & s^* \alpha & S - \delta c I & \delta c I \\ 
		&  &  &  & & s^* \alpha & S
	\end{pmatrix}.
\end{equation}
Let us denote by $\pi_k^w(m,c)$ the stationary probability that the queue length is equal to $k$ given the value of $m$ and $c$. In order to compute the stationary distribution we must
first determine $m$ and $c$ such that $\pi_0^w(m,c) = 1-\lambda$. We can again observe that by ordering the states lexicographically and applying \cite{busicMOR}[Theorem 1] that $\pi_0(m,c)$ is decreasing as a function of $c$.
Furthermore, setting $c = 0$, all states with a queue length of $m+1$ become transient, meaning that all queues have a queue length bounded by $m$. Setting $c = 1$, we observe that we always jump up to queue length $m+1$, this indicates that a system with parameters $(m+1, 0)$ is identical to a system with parameters $(m, 1)$. 

Combining these two observations, we find that if $m < m'$ or $m=m'$ and $c < c'$ we have: $\pi_0^w(m, c) > \pi_0(m', c')$. Therefore, there must exist a unique pair $(m, c)$ for each $\lambda < 1$ such that $\pi_0^w(m, c) = 1 - \lambda$.

For the \textit{push policy} we computed the value of $m$ by looking at the system with $\nu = 0$. Analogously, we can now look at the system with $c = 0$. Given the value of $\lambda$ and a PH distribution, we need to determine $m$ such that $\pi_0^w(m, 0) \geq \lambda 
\geq \pi_0^w(m,1)$. Taking a closer look, one observes that for $c = 0$ the transition matrix \eqref{eq:water_filling_PH_Q} is identical to the transition matrix for the push policy
with $\nu = 0$, see \eqref{eq:Qpush0}. This implies that the value of $m$ is given by $\lfloor \tilde m \rfloor$, with $\tilde m$ defined in \eqref{eq:push_m}. Therefore Propositions \ref{th:pim}, \ref{th:nu_eq_holds}, \ref{th:mbounds_orderk}, Theorems \ref{cor:push}, \ref{th:mbounds}, \ref{th:qact}  and Corollaries \ref{th:lam_m}, \ref{th:push_critical} also
hold for the water filling policy. Setting $y=\frac{1}{1 + \delta}$, we again find that $\tilde m = - \log(1-\lambda) / \log(1+\delta)$, which was also observed in \cite{ying2017power}[Theorem 3].

\begin{remark}
	The explicit formula for the stationary distribution in case of exponential job sizes in \cite{ying2017power}[Theorem 3] easily follows from setting $m = \lfloor \tilde m \rfloor$ and using the fact that $\pi Q^w(m,c) = 0$. Indeed, this yields the recursion $\pi_1^w(m,c) = \delta \pi_0^w(m,c)$ and $\pi_{k+1}^w(m,c) = (1+\delta) \pi_k^w(m,c)$ (for $ k \leq m-1$). Allowing us to conclude that $\pi_k^w(m,c) = (1+\delta)^{k-1} \delta (1-\lambda)$ for $k=1,\dots,m$. We can then compute:
	$$
	\pi_{m+1}^w(m,c) = 1-\sum_{k=0}^m \pi_k^w(m,c) = 1 - (1-\lambda) (1+\delta)^m.
	$$
\end{remark}

In order to compute the value of $c$ in case of PH job sizes, we have the following result:

\begin{theorem} \label{th:findc}
	For the \textit{water filling policy} with arrival rate $\lambda$, probe rate $\delta$ and $y$ as in \eqref{eq:y}, we have $m = \lfloor \tilde m \rfloor$ and $c \in [0,1)$ is the unique
	value such that:
	\begin{equation}
		1 - \lambda = \frac{y^{m-1}/\delta}{1/\delta + \kappa (-K)^{-1} \textbf{1}},
	\end{equation}
	with $\kappa = (1-c, c) \otimes (y^{m-1} \alpha + (1-y^{m-1}) \alpha')$, $\alpha' = \alpha (\delta I - S)^{-1} / \alpha (\delta I - S)^{-1} \textbf{1}$ and
	$$
	K = \begin{pmatrix}
		S - \delta c I & \delta I c\\
		\mu^* \alpha & S
	\end{pmatrix}.
	$$
\end{theorem}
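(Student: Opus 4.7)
The plan is to mimic the renewal-reward argument used in the proof of Proposition \ref{th:pim}, adapted to the two top levels $\{m, m+1\}$ that are present in the water filling chain. The uniqueness of the pair $(m,c)$ for each $\lambda < 1$ has already been established in the discussion preceding the statement via the monotonicity of $\pi_0^w(m,c)$ in $(m,c)$; combined with the observation that $Q^w(m,0)$ coincides with $Q^{\shortrightarrow}_{(0)}(m)$, so that $\pi_0^w(m,0) = \pi_0^{a\shortrightarrow}(m)$, Proposition \ref{th:pim} and Theorem \ref{cor:push} pin down $m = \lfloor \tilde m \rfloor$. What remains is to exhibit the explicit equation satisfied by $c$.

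I would define a renewal cycle to begin each time the chain leaves the top two levels, i.e., each time a service completion occurs while the queue length equals exactly $m$. By renewal-reward, $\pi_0^w(m,c)$ equals the expected time spent at state $0$ per cycle divided by the expected cycle length. A cycle decomposes into (i) the time spent strictly below the top two levels, which lasts until the next probe event and is therefore exponentially distributed with mean $1/\delta$, and (ii) the sojourn in the top two levels once the probe fires. Inside phase (i), the chain reaches state $0$ exactly when all $m-1$ service completions during the descent beat the exponential timer of rate $\delta$, an event of probability $y^{m-1}$ by the same argument as in Proposition \ref{th:pim}. Given that $0$ is reached, the time spent at $0$ is exponential with mean $1/\delta$, so the expected time at state $0$ per cycle is $y^{m-1}/\delta$.

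Next I would identify the initial distribution $\kappa$ on the top two levels at the instant phase (ii) begins. The probe independently places the queue at level $m$ with probability $1-c$ or at level $m+1$ with probability $c$, which yields the first Kronecker factor $(1-c,c)$. The service phase at that instant is either (a) fresh with distribution $\alpha$, if the cycle reached state $0$ during phase (i) (probability $y^{m-1}$), or (b) distributed as $\alpha'$, reflecting the phase of the job that was interrupted when the probe fired before reaching $0$ (probability $1-y^{m-1}$); this is exactly the argument underlying the mean-cycle-length formula in the proof of Proposition \ref{th:pim}. Crucially, because the probe's Bernoulli$(c)$ choice of target level does not depend on the service phase of the queue, these two factors are independent, giving $\kappa = (1-c, c) \otimes (y^{m-1} \alpha + (1-y^{m-1}) \alpha')$.

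The matrix $K$ is the sub-generator of the chain restricted to $\{m,m+1\}$ after deleting the exit transitions, the only way out being a service completion at level $m$; one checks directly that $-K\mathbf{1} = (s^*,0)^\top$. By the standard absorption-time identity for phase-type distributions, the expected sojourn in the top two levels starting from $\kappa$ equals $\kappa(-K)^{-1}\mathbf{1}$. Assembling the pieces yields
\begin{equation*}
\pi_0^w(m,c) = \frac{y^{m-1}/\delta}{1/\delta + \kappa(-K)^{-1}\mathbf{1}},
\end{equation*}
and equating this with $1-\lambda$ gives the theorem. The main obstacle I anticipate is a clean justification of the Kronecker-product structure of $\kappa$, i.e., the independence between the probe's level assignment and the running service phase at probe time; everything else is a direct transposition of the Proposition \ref{th:pim} argument, with the scalar mean $y^{m-1} + (1-y^{m-1})\alpha'(-S)^{-1}\mathbf{1}$ replaced by the matrix expectation $\kappa(-K)^{-1}\mathbf{1}$.
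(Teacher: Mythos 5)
Your proposal is correct and follows essentially the same renewal-reward argument as the paper: a cycle starts when the chain leaves $\{m,m+1\}$, the mean time away is $1/\delta$, the mean sojourn in the top two levels is $\kappa(-K)^{-1}\mathbf{1}$ with $\kappa$ built from the Bernoulli$(c)$ level choice and the $\alpha$-versus-$\alpha'$ phase distribution, and the mean time at $0$ per cycle is $y^{m-1}/\delta$. Your extra care in justifying the Kronecker structure of $\kappa$ and checking $-K\mathbf{1}=(s^*,0)^\top$ only fills in details the paper leaves implicit.
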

\begin{proof}
	We first note that the mean time away from $\{m, m+1\}$ is simply given by $1/\delta$, as we jump back to $\{m, m+1\}$ at rate $\delta$ from any other state. Next, we compute the time we stay in $\{m, m+1\}$, this time can be described by a PH distribution with $2 \cdot n_s$ states, where the first $n_s$ states correspond to having queue length $m$ (and the other states are for queue length $m+1$).

	We jump up to state length $m$ with probability $1-c$ while we jump up to $m+1$ with probability $c$. This entails that the initial vector when we arrive in $\{m, m+1\}$ is indeed given by $\kappa$. It is clear that the transition matrix $K$ represents the transitions in $\{m,m+1\}$, we therefore find that the mean time spent in $\{m,m+1\}$ is given by $\kappa (-K)^{-1} \textbf{1}$.
	
	Combining these two observations we find that the mean cycle length is given by $1/\delta + \kappa (-K)^{-1} \textbf{1}$, and it remains to find the mean time we remain in $0$ in one cycle. To this end, we notice that a jump from an empty system occurs when we have had $m-1$ job completions since the last renewal, which happens with probability $y^{m-1}$. Moreover, the time we stay in zero is (on average) $1/\delta$. This yields the result.
\end{proof}

\begin{remark}
	When job sizes are exponential, we find that $\kappa = (1-c, c)$, 
	$$
	K = \begin{pmatrix}
		-1 - c\delta & c \delta\\
		1 & -1
	\end{pmatrix},
	$$
	and $y = \frac{1}{1 + \delta}$. From this, it is not hard to see that we recover 
	the formula in \cite{ying2017power}:
	$$
	c = \frac{1}{\delta (1-\lambda)(1+\delta)^m} - \frac{1}{\delta}.
	$$
\end{remark}

\begin{theorem} \label{th:main_th_waterfilling}
	In the same setting as Theorem \ref{th:findc}, we find that
	$\pi_0^w(m,c)=1-\lambda$ and
	\begin{align}\label{eq:piw} 
		\pi^w_q(m,c) \textbf{1} = (1-\lambda) (1/y - 1)/ y^{q-1},
	\end{align} 
	for $q = 1, \ldots,m-1$. Further, 
	\begin{align}\label{eq:piw_mp1}  
		\pi_{m+1}^w(m,c) \textbf{1} =  1 - \frac{1}{c}\left( \lambda/\delta - \sum_{q=0}^{m-1} (m-q) \pi_q^w(m,c)\textbf{1}\right),
	\end{align} 
	and
	\begin{align}\label{eq:piw_m} 
		\pi_m^w(m,c) \textbf{1} =  1 - (1-\lambda)y^{1-m} - \pi_{m+1}^w(m,c)  \textbf{1}.
	\end{align} 
\end{theorem}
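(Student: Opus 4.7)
The plan is to prove the four identities in turn. The very first, $\pi_0^w(m,c)=1-\lambda$, is the defining property of the pair $(m,c)$ selected in the discussion preceding Theorem \ref{th:findc}, so there is nothing to do. For the interior formula \eqref{eq:piw}, I would read off $\pi Q^w(m,c)=0$ column by column. The column-$0$ equation collapses to $\pi_1^w s^*=\delta\pi_0^w=\delta(1-\lambda)$. Crucially, the $c$-dependent off-diagonal blocks of \eqref{eq:water_filling_PH_Q} live only in columns $m$ and $m+1$, so every column $q$ with $1\le q\le m-1$ receives contributions only from rows $q$ and $q+1$, yielding the $c$-free recursion $\pi_q^w(\delta I-S)=\pi_{q+1}^w s^*\alpha$. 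Post-multiplying by $(\delta I-S)^{-1}s^*$ and using $\alpha(\delta I-S)^{-1}s^*=y$ gives $\pi_q^w s^*=y\,\pi_{q+1}^w s^*$; iterating from the anchor $\pi_1^w s^*=\delta(1-\lambda)$ gives $\pi_q^w s^*=\delta(1-\lambda)/y^{q-1}$ for $q=1,\ldots,m$. Post-multiplying instead by $(\delta I-S)^{-1}\textbf{1}$ and using $\alpha(\delta I-S)^{-1}\textbf{1}=(1-y)/\delta$ (from \eqref{eq:ydef2}) gives $\pi_q^w\textbf{1}=\pi_{q+1}^w s^*(1-y)/\delta$, and substituting the closed form for $\pi_{q+1}^w s^*$ yields exactly $(1-\lambda)(1/y-1)/y^{q-1}$.

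For \eqref{eq:piw_mp1}, I would argue by rate conservation, in the spirit of Proposition \ref{th:nu_eq_holds}. Probes arrive at the cavity at rate $\delta$, and upon hitting a server with actual queue length $q$, water filling deposits in expectation $(m-q)+c$ jobs when $q<m$, $c$ jobs when $q=m$, and $0$ jobs when $q=m+1$. Since the per-server arrival rate must equal $\lambda$,
\[\lambda=\delta\sum_{q=0}^{m-1}(m-q)\pi_q^w\textbf{1}+\delta c\bigl(1-\pi_{m+1}^w\textbf{1}\bigr),\]
where I have used $\sum_{q=0}^{m}\pi_q^w\textbf{1}=1-\pi_{m+1}^w\textbf{1}$ to merge the $c$-contributions from levels $0,\ldots,m$. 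Solving for $\pi_{m+1}^w\textbf{1}$ gives \eqref{eq:piw_mp1}. Finally, \eqref{eq:piw_m} is pure normalization: the telescoping identity $\sum_{q=1}^{m-1}(1-\lambda)(1-y)/y^q=(1-\lambda)(y^{1-m}-1)$ inserted into $\pi_m^w\textbf{1}=1-\pi_0^w-\sum_{q=1}^{m-1}\pi_q^w\textbf{1}-\pi_{m+1}^w\textbf{1}$ collapses immediately to the claimed expression.

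Essentially all of the work is linear algebra on the block-tridiagonal structure of \eqref{eq:water_filling_PH_Q}, with the identities $\alpha(\delta I-S)^{-1}s^*=y$ and $\alpha(\delta I-S)^{-1}\textbf{1}=(1-y)/\delta$ doing the real lifting. The only step that requires genuine care is the rate-conservation bookkeeping used for \eqref{eq:piw_mp1}: one must correctly count the expected number of jobs a single probe deposits at each actual queue length, in particular handling the special cases $q=m$ (where only the extra job appears, with probability $c$) and $q=m+1$ (where the probe has no effect). Once that accounting is pinned down, the remainder is entirely mechanical.
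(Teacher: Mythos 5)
Your proof is correct, and for the core identity \eqref{eq:piw} it takes a genuinely different route from the paper. The paper argues probabilistically: it censors the chain on the states with $q<m$, defines renewal cycles starting at downward jumps from level $m$, and obtains the cumulative identity $\pi_0^w(m,c)+\sum_{j=1}^{i-1}\pi_j^w(m,c)\textbf{1}=y^{m-i}\bigl(1-\pi_m^w(m,c)\textbf{1}-\pi_{m+1}^w(m,c)\textbf{1}\bigr)$, from which \eqref{eq:piw} and \eqref{eq:piw_m} both drop out (the $i=1$ case giving \eqref{eq:piw_m} directly). You instead solve the balance equations $\pi Q^w(m,c)=0$ column by column, exploiting the fact that the $c$-dependent blocks sit only in columns $m$ and $m+1$ so that the interior recursion $\pi_q^w(\delta I-S)=\pi_{q+1}^w s^*\alpha$ is $c$-free, and then extract both $\pi_q^w s^*$ and $\pi_q^w\textbf{1}$ via the two contractions $\alpha(\delta I-S)^{-1}s^*=y$ and $\alpha(\delta I-S)^{-1}\textbf{1}=(1-y)/\delta$; you recover \eqref{eq:piw_m} by plain normalization rather than from the renewal identity. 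The paper explicitly acknowledges (after Proposition \ref{th:pim}) that such an algebraic derivation is possible, so this is a legitimate alternative; the renewal argument generalizes more transparently to general (non-PH) job sizes, while your computation is self-contained, purely mechanical, and yields the extra intermediate quantities $\pi_q^w s^*=\delta(1-\lambda)/y^{q-1}$ for free. For \eqref{eq:piw_mp1} you use essentially the same rate-conservation equation $\lambda=\delta\sum_{q=0}^{m}(m-q)\pi_q^w\textbf{1}+\delta c(1-\pi_{m+1}^w\textbf{1})$ as the paper, with the same per-probe deposit accounting, so that part coincides with the published argument.
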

\begin{proof}
	Consider the chain with rate matrix $Q^w(m,c)$ censored on the states with $q < m$.
	Let $S_{<i}$ be the set of states with $q < i$. Define renewal cycles for
	this censored chain such that the start of a cycle corresponds to the points in time
	that the original chain makes a jump from a state with $q=m$ to a state with $q=m-1$.

	The probability that the set $S_{<i}$ is reached during a cycle is clearly given
	by $y^{m-i}$ and the mean time that the censored chain stays in the set $S_{<i}$
	given that the set is reached 
	equals $1/\delta$. Note that the mean cycle length for the censored chain also
	equals $1/\delta$. This implies that
	\begin{align}\label{eq:piw1}  
		\pi_0^w(m,c) + \sum_{j=1}^{i-1} \pi_j^w(m,c) \textbf{1} = y^{m-i} (1-\pi_m^w(m,c)\textbf{1}-\pi_{m+1}^w(m,c)\textbf{1}). 
	\end{align}
	As $m$ and $c$ are such that $\pi_0^w(m,c)=1-\lambda$, the above with $i=1$ yields
	\begin{align}\label{eq:piw2} 
		1-\lambda = y^{m-1} (1-\pi_m^w(m,c)\textbf{1}-\pi_{m+1}^w(m,c)\textbf{1}),
	\end{align}
	which implies \eqref{eq:piw_m}. Combining \eqref{eq:piw1} and \eqref{eq:piw2}
	shows that
	\[ \pi_0^w(m,c) + \sum_{j=1}^i \pi_j^w(m,c) \textbf{1} = (1-\lambda)/ y^{i}, \]
	and \eqref{eq:piw} follows.
	Finally, \eqref{eq:piw_mp1} follows from 
	\[ \lambda = \delta \left( \sum_{i=0}^{m} (m-i) \pi_i^w(m,c)\textbf{1} + (1-\pi_{m+1}^w(m,c)\textbf{1}) c \right).\]
	In this equality, the left hand side corresponds to the total number of arrivals per unit of time, while the right hand side signifies the number of jobs assigned to servers per unit of time. Therefore, the equality can be proven in the same way as Proposition \ref{th:nu_eq_holds}.
\end{proof}

\subsection{Numerical Experiments}

In Figure \ref{fig:water} (left), we depict $E[R^w]$ as a function of $\delta$. We set $f=1/2$, $SCV = 10$, $\delta \in [0.3,1.2]$ and $\lambda \in \{0.6,0.7,0.8,0.9\}$. Clearly, increasing $\lambda$ or decreasing $\delta$ increases $E[R^w]$. We observe the same type of irregular behaviour as in Figure \ref{fig:push} (left), that is, the curve becomes non-differentiable at the values of $\delta$ for which $\tilde m \in \mathbb{N}$.

In Figure \ref{fig:water} (right) we illustrate the influence of $y$ on the mean response time. To this end we use the hyperexponential distributions $Z(\varepsilon)$ which was introduced in the proof of Theorem \ref{th:mbounds} (which also holds for the water filling strategy). We set $\delta = 0.5$ and $\lambda \in \{0.6,0.7,0.8,0.9\}$. With $\delta = 0.5$ we find that $y$ ranges from $2/3$ (for $\varepsilon = 1/2$) to 1 (for $\varepsilon \rightarrow 0^+$). As $y$ gets close to 1 the frequency of sudden increases in $E[R^w]$ increases. This is due to the fact that the maximal queue length increases more often as $y$ gets close to 1. However, we observe that the limiting value for $\E[R^w]$ with $y=1$ is still finite.

\begin{figure}
    \centering
    \begin{minipage}{0.5\textwidth}
        \centering
        \includegraphics[width=1\textwidth]{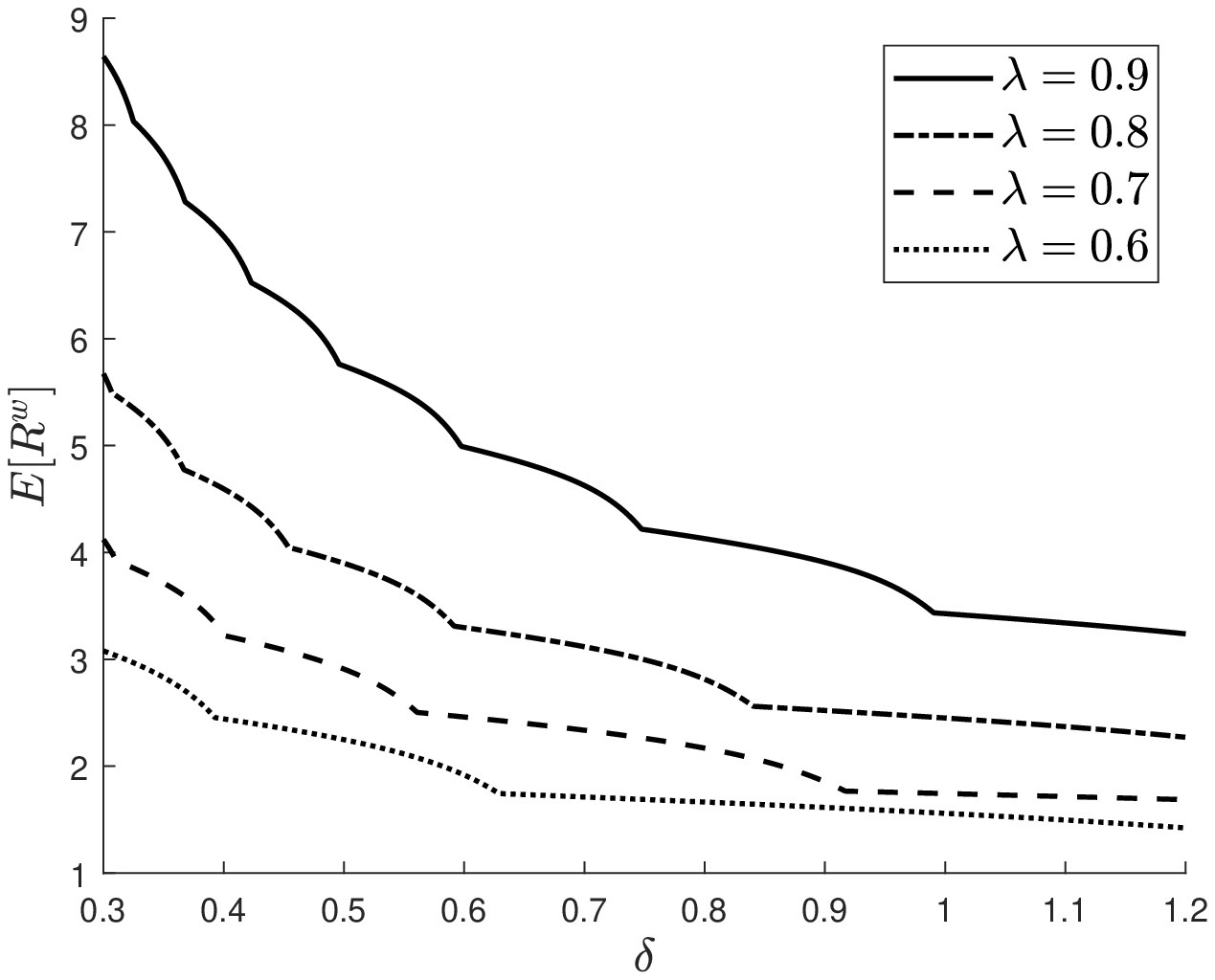}
    \end{minipage}\hfill
    \begin{minipage}{0.5\textwidth}
        \centering
        \includegraphics[width=1\textwidth]{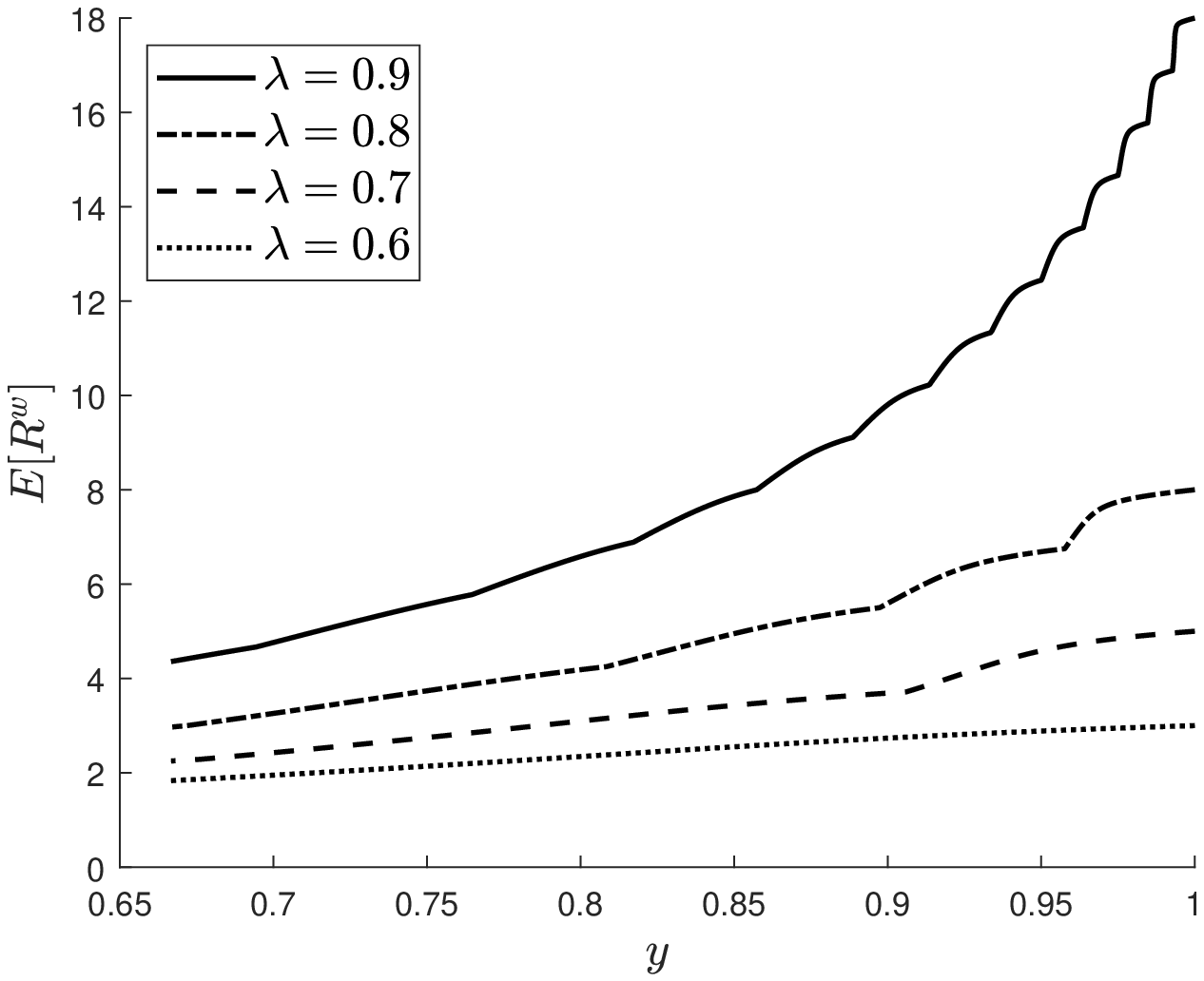}
    \end{minipage}
    \caption{$E[R^w]$ in function of $\lambda$ and $\delta$ (left) and in function of $\lambda$ and $y$ (right).}\label{fig:water}
\end{figure}

\section{Hyperscalable pull policy}\label{sec:pull}
In this section we study the queue at the cavity for the pull policy. Simulation results
that study the accuracy of the cavity method are presented in Section \ref{app:simulation}.  Recall that a server updates the dispatcher with its
current queue length information with probability $\delta_1$ when it completes service
of a job and at rate $\delta_0$ when it is idle. As the mean service time of a job is
equal to one and $1-\lambda$ is the faction of time that a server is idle, this means that the overall update rate is given by $\delta = \lambda \delta_1 + (1-\lambda)
\delta_0$. 

Given $\delta < \lambda$, the range of $\delta_1$ is given by $(0,\delta/\lambda)$.
When $\delta \geq \lambda$, we can set $\delta_1=1$ such that servers always update at service
completion times. This  implies that this policy reduces to the Join-Idle-Queue policy,
which has vanishing wait. If $\delta=\delta_0=\delta_1$, the overall update rate automatically
equals $\delta$, which means that there is no need for servers to know the arrival rate $\lambda$. However, when $\delta_1 \not= \delta$, then $\lambda$ must be known in order to set $\delta_0$
such that the overall update rate equals $\delta$.  

As jobs are assigned in a greedy manner based on the estimated queue lengths, we again find that
in the large-scale limit, all servers have an estimated queue length equal to $m$ or $m+1$ for some integer $m \geq 0$ and the state space for the queue at the cavity is the same as for the push policy, that is,  
\[ \Omega^{\shortleftarrow} = \{(0,m),(0,m+1)\} \cup \{(q,e,j) | e = m,m+1 ; q = 1,\ldots,e; j = 1,\ldots,n_s \},\]
where $e$ is the estimated queue length, $q$ the actual queue length and $j$ the service phase.
The rate matrix $Q^{\shortleftarrow}(m,\nu)$ for the pull policy has a similar structure as the rate
matrix $Q^{\shortrightarrow}(m,\nu)$ given by \eqref{Qpush}, where we replace the right arrows by
left arrows to indicate that we are discussing the pull policy.
For the pull policy, a service completion only leads to a decrease in the actual queue
length if the service completion is not accompanied by an update, thus
$Q^{\shortleftarrow}_{q,q-1} = (1-\delta_1) Q^{\shortrightarrow}_{q,q-1}$, for $q=1,\ldots,m$,
and $Q^{\shortleftarrow}_{m+1,m} = [s^*\alpha\delta_1 \ \ \ s^*\alpha (1-\delta_1)]$. 

If an update does occur at a service completion time, $(q,e)$ becomes $(m,m)$ similar
to a probe event for the push policy, hence
\[Q^{\shortleftarrow}_{q,m} = \begin{bmatrix}
	\delta_1 s^*\alpha & 0 \\ \delta_1 s^*\alpha & 0
\end{bmatrix},\mbox{ \ \ and \ \   }   
Q^{\shortleftarrow}_{0,m} = \begin{bmatrix}
	\delta_0 \alpha & 0 \\ \delta_0 \alpha & 0
\end{bmatrix},\] 
for $0 < q < m-1$. Arrivals that are
assigned to a server with an estimated queue length equal to $m$ still occur at some
rate $\nu$, hence $Q^{\shortleftarrow}_{q-1,q}(\nu)=Q^{\shortrightarrow}_{q-1,q}(\nu)$, for $q\not= m-1$
and
\[
Q^{\shortleftarrow}_{m-1,m}(\nu) = \begin{bmatrix}
	\delta_1 s^*\alpha & \nu I \\ \delta_1 s^*\alpha & 0
\end{bmatrix}.\] 
Note that \eqref{eq:nu_check_push} is no longer valid for the rate $\nu$. Instead we have
\begin{align}\label{eq:nu_check_pull} \nu \pi^{e\shortleftarrow}_m(m,\nu) = \lambda - 
	\delta_0 m \pi^{a\shortleftarrow}_0(m,\nu) - \delta_1 
	\sum_{q=1}^m \sum_{e=m}^{m+1} \sum_{j=1}^{n_s}  (m-q+1) \pi_{(q,e,j)}^{\shortleftarrow}(m,\nu) s^*_j ,
\end{align}
where  $\pi_{(q,e,j)}^{\shortleftarrow}(m,\nu)$ is the steady state probability to be in state $(q,e,j)$,
as idle servers update at rate $\delta_0$ and an update adds $m$ jobs to the server,
while a busy server with $q$ jobs in phase $j$ completes service and updates
at rate $s^*_j \delta_1$ and adds $m-q+1$ jobs to the server.
The proof of \eqref{eq:nu_check_pull} is similar to that of Proposition \ref{th:nu_eq_holds}.

The diagonal blocks capture changes in the service phase, thus
\[Q^{\shortleftarrow}_{q,q}(\nu) = \begin{bmatrix}
	S-\nu I & 0 \\ 0 & S 
\end{bmatrix}, \ \
Q^{\shortleftarrow}_{m,m}(\nu) = \begin{bmatrix}
	S-\nu I & 0 \\ \delta_1 s^* \alpha & S
\end{bmatrix}, \mbox{  \ \ and \ \    }   
Q^{\shortleftarrow}_{0,0}(\nu) = \begin{bmatrix}
	-(\nu+\delta_0) & 0 \\ 0 & -\delta_0 
\end{bmatrix},
\] 
for $q=1,\ldots,m-1$
and $Q^{\shortleftarrow}_{m+1,m+1} = Q^{\shortrightarrow}_{m+1,m+1} = S$.


\subsection{Finding $m$ and $\nu$}
To assess the performance of the queue at the cavity we need to determine the
unknowns $m$ and $\nu$. As in the push case, we can 
find $m$ by
studying the Markov chain with $\nu=0$ characterized by $Q^{\shortleftarrow}(m,0)$
and using a bisection algorithm to set $\nu$ once $m$ is known. 
When $\nu = 0$ the states with $e=m+1$ are transient and 
we can remove these states such that this chain evolves on the state space
$\Omega_{(0)}^{\shortleftarrow} = \Omega_{(0)}^{\shortrightarrow}$ and has rate matrix $Q^{\shortleftarrow}_{(0)}(m)$ given by
\[
Q^{\shortleftarrow}_{(0)}(m) = \begin{bmatrix}
-\delta_0  & & & & & \delta_0 \alpha \\
(1-\delta_1) s^* & S & &  & & \delta_1 s^* \alpha\\
 & (1-\delta_1) s^*\alpha & S &  & & \delta_1 s^* \alpha \\
&  &\ddots& \ddots & & \vdots   \\
& & & (1-\delta_1) s^*\alpha & S & \delta_1 s^* \alpha \\
&  &&  & (1-\delta_1) s^* \alpha &  S + \delta_1 s^* \alpha
\end{bmatrix}.
\]

\begin{proposition}\label{th:pim2}
The steady state probabilities of $Q^{\shortleftarrow}_{(0)}(m)$ are such that for $i=1,\ldots,m+1$
\begin{align}\label{pim2}
\sum_{q=0}^{i-1} \pi^{a\shortleftarrow}_q(m) = \frac{\delta_0(1-\delta_1)^{m-i+1} +(1-\delta_1)^m (\delta_1-\delta_0)}{\delta_0 + (1-\delta_1)^m (\delta_1-\delta_0)}. \end{align}
\end{proposition}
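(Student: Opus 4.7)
The plan is to mirror the renewal-cycle argument of Proposition \ref{th:pim}, adapted to the pull dynamics with $\nu=0$. I will declare a cycle to start at each epoch where the chain jumps into level $m$ with a fresh service, i.e.\ either at the $0\shortrightarrow m$ transition (rate $\delta_0$) or at an update-accompanied service completion from some level $q\in\{1,\ldots,m\}$, which happens with probability $\delta_1$ per completion. In both cases the new service starts with initial distribution $\alpha$, so these epochs are genuine renewal points, and by the renewal reward theorem $\sum_{q=0}^{i-1}\pi^{a\shortleftarrow}_q(m)=T_{<i}/L$, where $L$ is the mean cycle length and $T_{<i}$ the mean time per cycle spent in levels $\{0,\ldots,i-1\}$.

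The key simplification that drives the proof is that every visit to a busy level $q\geq 1$ consists of exactly one fresh PH service with mean $1$; this is what renders the ratio $T_{<i}/L$ independent of $(\alpha,S)$, explaining the insensitivity built into the statement. Starting from level $m$, the chain descends by one level at every service completion that is not accompanied by an update, and these are Bernoulli$(1-\delta_1)$ trials. Hence level $m-k$ is visited during a cycle with probability $(1-\delta_1)^k$ for $0\leq k\leq m-1$, level $0$ is reached with probability $(1-\delta_1)^m$, and once at $0$ the chain lingers for an Exp$(\delta_0)$ time before jumping back to $m$ to close the cycle.

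Collecting the per-cycle contributions yields
\begin{align*}
L &= \sum_{k=0}^{m-1}(1-\delta_1)^k + \frac{(1-\delta_1)^m}{\delta_0} = \frac{\delta_0+(1-\delta_1)^m(\delta_1-\delta_0)}{\delta_0\delta_1}, \\
T_{<i} &= \sum_{k=1}^{i-1}(1-\delta_1)^{m-k}+\frac{(1-\delta_1)^m}{\delta_0} = \frac{\delta_0(1-\delta_1)^{m-i+1}+(1-\delta_1)^m(\delta_1-\delta_0)}{\delta_0\delta_1},
\end{align*}
where the geometric sum in the second line is valid (and empty) also for $i=1$. Taking the ratio and cancelling the common factor $\delta_0\delta_1$ produces exactly \eqref{pim2}.

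The main obstacle is not in the algebra but in justifying the renewal structure cleanly. I need to read off from the off-diagonal blocks of $Q^{\shortleftarrow}_{(0)}(m)$, each of which has the form $s^*\alpha$ scaled by $\delta_0$, $\delta_1$, or $1-\delta_1$, that the service phase is reinitialised at every level-to-level transition, so the inter-level durations really are independent copies of a full PH service with mean $1$. Once this is established the argument reduces to counting geometric trials with success probability $\delta_1$ together with a single absorbing excursion at $0$ of mean length $1/\delta_0$, in direct parallel with the push-policy calculation but with $y$ replaced by $1-\delta_1$ and $\delta$ replaced by $\delta_1$ (resp.\ $\delta_0$ at the idle level).
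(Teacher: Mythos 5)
Your proof is correct and follows essentially the same renewal--reward argument as the paper's: the paper takes service-completion epochs at level $m$ as the renewal points while you take the entries into level $m$, but both choices yield exactly one full service at level $m$ per cycle and hence identical per-cycle quantities, the same mean cycle length $\sum_{k=0}^{m-1}(1-\delta_1)^k+(1-\delta_1)^m/\delta_0$, and the same ratio. Your justification that the phase is reinitialised to $\alpha$ at every level transition (so each busy level contributes one mean-$1$ service, making the ratio insensitive to $(\alpha,S)$) is precisely the point the paper leaves implicit.
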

\begin{proof}
We define a renewal cycle in the same manner as in the proof of Proposition \ref{th:pim}, that is,
a cycle starts whenever the chain leaves level $m$. The mean time in level $m$ is now
the same as the mean service time and thus equal to one. The mean time in states
of the form $(q,j)$ for  $0 < q < m$ per cycle is given by $(1-\delta_1)^{m-q}$, while
the mean time in state $0$ per cycle equals $(1-\delta_1)^{m}/\delta_0$.
This implies that the mean cycle length equals
\[\frac{1}{\delta_1} + (1-\delta_1)^m \left(\frac{1}{\delta_0}-\frac{1}{\delta_1}\right),\]
and the mean time in states with $q < i$ per cycle equals
\[ (1-\delta_1)^m/\delta_0 + \sum_{q=1}^{i-1} (1-\delta_1)^{m-q},\]
which yields the result.
\end{proof}

When $\delta_1=\delta_0=\delta$, the right hand side of \eqref{pim2} simplifies to $(1-\delta)^{m-i+1}$, while
letting $\delta_1$ tend to zero reduces it to $(i-1+1/\delta_0)/(m+1/\delta_0)$.
Setting $i=1$ in the previous result implies the following:
\begin{theorem}\label{cor:pull}
	For the \textit{pull policy} with arrival rate $\lambda \in [0,1)$, probe probability $\delta_1$ at job completions and probe rate $\delta_0$ at idle servers, we have $1 -  \lambda \in [\pi^{a\shortleftarrow}_0(\lceil \tilde m \rceil),
\pi^{a\shortleftarrow}_0(\lfloor \tilde m \rfloor)]$ for
\begin{align}\label{eq:pull_m}
\tilde m = \left.\log\left(1-\lambda \delta_1/\delta \right)\middle/\log(1-\delta_1)\right.,
\end{align}
with $\delta = \lambda \delta_1 + (1-\lambda)\delta_0$ the overall update rate.
Hence, $\lceil \tilde m \rceil$ represents the maximum queue length for the queue at the
cavity.
\end{theorem}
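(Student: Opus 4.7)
The plan is to mirror, almost verbatim, the two-step strategy that was used for the push policy in Theorem \ref{cor:push}. First, I would argue that there is a unique pair $(m,\nu)$ satisfying $\pi_0^{a\shortleftarrow}(m,\nu)=1-\lambda$ and that the integer $m$ is the largest one for which $\pi_0^{a\shortleftarrow}(m)\ge 1-\lambda$. Second, I would invoke Proposition \ref{th:pim2} at $i=1$ to get $\pi_0^{a\shortleftarrow}(m)$ in closed form and invert the resulting identity to extract \eqref{eq:pull_m}.

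For the monotonicity step, I would reuse the argument given above Proposition \ref{th:pim} (and applied again in the push case): ordering $\Omega^{\shortleftarrow}$ lexicographically, the only transitions that occur at rate $\nu$ move the state up, so \cite{busicMOR}[Theorem 1] implies that $\pi_0^{a\shortleftarrow}(m,\nu)$ is nonincreasing in $\nu$. With $\nu=0$ the states having $e=m+1$ are transient, while with $\nu=\infty$ the states having $e=m$ are transient and the chain coincides with the $\nu=0$ chain having parameter $m+1$. Consequently $(m,\nu)\mapsto \pi_0^{a\shortleftarrow}(m,\nu)$ is strictly decreasing lexicographically, so the equation $\pi_0^{a\shortleftarrow}(m,\nu)=1-\lambda$ has a unique solution and the integer $m$ is characterized by $\pi_0^{a\shortleftarrow}(m+1)\le 1-\lambda\le \pi_0^{a\shortleftarrow}(m)$.

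For the closed-form inversion, setting $i=1$ in Proposition \ref{th:pim2} and simplifying the numerator as $\delta_0(1-\delta_1)^m+(1-\delta_1)^m(\delta_1-\delta_0)=\delta_1(1-\delta_1)^m$ yields
\[
\pi_0^{a\shortleftarrow}(m)=\frac{\delta_1(1-\delta_1)^m}{\delta_0+(\delta_1-\delta_0)(1-\delta_1)^m}.
\]
Setting this equal to $1-\lambda$ and clearing denominators, the coefficient $\delta_1-(1-\lambda)(\delta_1-\delta_0)$ of $(1-\delta_1)^m$ collapses to $(1-\lambda)\delta_0+\lambda\delta_1=\delta$, leaving $\delta(1-\delta_1)^m=(1-\lambda)\delta_0$. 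Since $\delta-\lambda\delta_1=(1-\lambda)\delta_0$, this rearranges to $(1-\delta_1)^m=1-\lambda\delta_1/\delta$, and taking logarithms (recalling $\log(1-\delta_1)<0$) gives the expression \eqref{eq:pull_m} for $\tilde m$. Because the right-hand side of the displayed formula for $\pi_0^{a\shortleftarrow}(m)$ is continuous and strictly decreasing in the real variable $m$, $\tilde m$ is the unique real solution, and the stated sandwich $\pi_0^{a\shortleftarrow}(\lceil\tilde m\rceil)\le 1-\lambda\le \pi_0^{a\shortleftarrow}(\lfloor\tilde m\rfloor)$ is immediate.

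I do not expect any serious obstacle: all the nontrivial work has already been done in Proposition \ref{th:pim2} (the renewal argument giving $\pi_0^{a\shortleftarrow}(m)$) and in the monotonicity discussion preceding the push policy analysis. The corollary then reduces to a short algebraic simplification. The mildest subtlety is simply keeping track of the identity $\delta-\lambda\delta_1=(1-\lambda)\delta_0$ in order to recognize that $(1-\lambda)\delta_0/\delta$ and $1-\lambda\delta_1/\delta$ are the same quantity, so that the log-inverted expression matches the form stated in \eqref{eq:pull_m}.
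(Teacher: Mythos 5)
Your proposal is correct and follows essentially the same route as the paper: the paper's proof also sets $i=1$ in Proposition \ref{th:pim2} to obtain $\pi_0^{a\shortleftarrow}(m)=1-z$ with $z = 1 - \delta_1/(\delta_0(1-\delta_1)^{-m} + \delta_1-\delta_0)$, notes monotonicity in $m$, and solves $z=\lambda$ for $m=\tilde m$; your algebraic simplification via $\delta_1-(1-\lambda)(\delta_1-\delta_0)=\delta$ and $(1-\lambda)\delta_0=\delta-\lambda\delta_1$ is exactly the computation the paper leaves implicit. The monotonicity-in-$(m,\nu)$ preamble you include is likewise the argument the paper states for the push policy and reuses here.
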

\begin{proof}

When $\pi^{a\shortleftarrow}_0( m)=1-z\in (0,1)$, we have due to Proposition \ref{th:pim2}
\begin{align}\label{eq:pull_lam}
z = 1 - \frac{\delta_1}{\delta_0/(1-\delta_1)^{m} + (\delta_1 - \delta_0)},
\end{align}
which shows that $z$ increases as a function of $m$ and equals $\lambda$ for $m=\tilde m$.
\end{proof}

There are a number of interesting observations we can make based on this result:
\begin{enumerate}
\item The maximum queue length $\lceil \tilde m \rceil$ is \textit{insensitive} to
the job size distribution and whenever $\lambda$ is such that it is equal to
the right hand side of \eqref{eq:pull_lam}
for some integer $m$, the entire queue length distribution is insensitive to the
job size distribution.
\item  The derivative of $\tilde m$ with respect to $\delta_1$ is given
by 
\[\frac{d \tilde m}{d \delta_1} = \frac{\delta_1}{\delta \log(1-\delta) (\lambda
\delta_1/\delta - 1)} < 0, \]
for $\delta_1 \in (0,\delta/\lambda)$.
Therefore, the maximum queue length is minimized by 
setting $\delta_1 = 0$, that is, letting only idle servers
update at rate $\delta_0 = \delta/(1-\lambda)$. As
\[ \lim_{\delta_1 \shortrightarrow 0^+} \tilde m =  \lim_{\delta_1 \shortrightarrow 0^+}
 \left.\log\left(1- \lambda \delta_1/\delta \right)\middle/\log(1-\delta_1)\right. = 
\lambda/\delta,\]
we find that the maximum queue length simply reduces to $\lceil \lambda/\delta \rceil$.
\item When only the idle servers send updates, the rate $\delta_0$ must be set
equal to $\delta/(1-\lambda)$, which indicates that the arrival rate $\lambda$
must be known in order to achieve a target overall update rate $\delta$. Setting
$\delta = \delta_0 = \delta_1$ does not require knowledge of the arrival rate and
results in a maximum queue length of $\lceil \tilde m \rceil$ with
\[\tilde m = \log\left(1- \lambda \right)/\log(1-\delta). \]
\end{enumerate}

\begin{corollary}\label{th:lam_m_pull}
In the same setting as Theorem \ref{cor:pull}, the maximum queue length of the queue at the cavity is equal to $m > 0$ for $\lambda \in (\lambda^{\shortleftarrow}_{m-1},\lambda^{\shortleftarrow}_m]$
with 
\begin{align}\label{eq:lam_m_pull}
\lambda^{\shortleftarrow}_m = \frac{\delta_0 -\delta_0  (1-\delta_1)^m}
{\delta_0 - \delta_0 (1-\delta_1)^m + \delta_1 (1-\delta_1)^m}.
\end{align}
\end{corollary}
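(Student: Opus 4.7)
The plan is to mirror the argument used for Corollary \ref{th:lam_m} in the push setting. The starting point is already in place: in the proof of Theorem \ref{cor:pull}, equation \eqref{eq:pull_lam} characterizes, for any integer $m \geq 1$, the unique $z \in (0,1)$ for which $\pi^{a\shortleftarrow}_0(m) = 1-z$. Since the max queue length jumps from $m$ to $m+1$ exactly at the arrival rate at which $\pi^{a\shortleftarrow}_0(m)$ first equals $1-\lambda$, the transition value $\lambda^{\shortleftarrow}_m$ is nothing more than this $z$ rewritten in a cleaner form.

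Concretely, I would first rewrite \eqref{eq:pull_lam} over a common denominator,
\[
z \;=\; \frac{\delta_0/(1-\delta_1)^m - \delta_0}{\delta_0/(1-\delta_1)^m + (\delta_1 - \delta_0)},
\]
and then multiply both numerator and denominator by $(1-\delta_1)^m$ to obtain
\[
z \;=\; \frac{\delta_0 - \delta_0 (1-\delta_1)^m}{\delta_0 - \delta_0(1-\delta_1)^m + \delta_1 (1-\delta_1)^m},
\]
which is precisely the expression \eqref{eq:lam_m_pull} for $\lambda^{\shortleftarrow}_m$.

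The second step is the interval statement. Theorem \ref{cor:pull} already establishes that $z$ in \eqref{eq:pull_lam} is strictly increasing in $m$, equivalently $\pi_0^{a\shortleftarrow}(m)$ is strictly decreasing in $m$. Combined with the fact (from the discussion preceding Theorem \ref{cor:pull}) that $\pi_0^{a\shortleftarrow}(m,\nu)$ is also strictly decreasing in $\nu$, there is a unique pair $(m,\nu)$ with $\pi_0^{a\shortleftarrow}(m,\nu) = 1-\lambda$, and this $m$ equals the maximum queue length of the cavity queue. As $\lambda$ grows through the threshold $\lambda_m^{\shortleftarrow}$, the active integer $m$ jumps by one, so the max queue length equals $m$ exactly on $(\lambda^{\shortleftarrow}_{m-1}, \lambda^{\shortleftarrow}_m]$.

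There is really no significant obstacle: the only point worth verifying is that although the overall update rate $\delta = \lambda \delta_1 + (1-\lambda)\delta_0$ does depend on $\lambda$, the right-hand side of \eqref{eq:pull_lam}, and hence the expression for $\lambda^{\shortleftarrow}_m$, is written entirely in terms of $\delta_0$ and $\delta_1$. Thus no implicit equation in $\lambda$ needs to be solved, and the corollary follows directly from the algebraic identity above together with the monotonicity already used in Theorem \ref{cor:pull}.
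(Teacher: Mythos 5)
Your proposal is correct and follows exactly the paper's route: the paper also derives $\lambda^{\shortleftarrow}_m$ by solving \eqref{eq:pull_lam} for $z$ and invokes the monotonicity in $m$ to conclude that the maximum queue length increments precisely at these thresholds. You merely spell out the algebraic simplification that the paper declares ``immediate,'' and your check that the expression involves only $\delta_0$ and $\delta_1$ (so no implicit equation in $\lambda$ arises) is a sensible, if minor, addition.
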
 
\begin{proof}
The result is immediate by \eqref{eq:pull_lam} as the maximum queue length increases by
one whenever $\lambda$ is such that $\pi^{a\shortleftarrow}_0(m) = 1-\lambda$ 
for some integer $m$.
\end{proof}
When $\delta_0=\delta_1$, we have $\lambda^{\shortleftarrow}_m = 1-(1-\delta)^m$.
For $\delta_1$ tending to zero we find
\[ \lim_{\delta_1 \shortrightarrow 0^+} \lambda^{\shortleftarrow}_m  = 1-\frac{1}{\delta_0 m +1},\]
with $\delta_0 = \delta/(1-\lambda^{\shortleftarrow}_m)$. This means that if only
idle servers pull we have $\lambda^{\shortleftarrow}_m = \delta m$, which is in agreement
with the fact that the maximum queue length is bounded by $\lceil \lambda/\delta \rceil$.

\subsection{Performance Bounds}
As the maximum queue length $\lceil \tilde m  \rceil$ is insensitive 
to the job size distribution for the pull policy, there is no result similar
to Theorem \ref{th:mbounds}. However, we do obtain bounds on the average queue length.

\begin{theorem} \label{th:bounds_EQ_pull}
In the same setting as Theorem \ref{cor:pull}, let $E[Q^{a\shortleftarrow}]$ be the mean queue length of the queue at the cavity, then
$q^{\shortleftarrow}(\lfloor \tilde m \rfloor) \leq E[Q^{a\shortleftarrow}] \leq q^{\shortleftarrow}(\lceil \tilde m \rceil)$, where
$\tilde m$ is given by \eqref{eq:pull_m} and
\begin{align}
q^{\shortleftarrow}(m)=\frac{\delta_0(m+1)-\delta_0(1-(1-\delta_1)^{m+1})/\delta_1}
{\delta_0+(1-\delta_1)^m(\delta_1-\delta_0)}.
\end{align}
\end{theorem}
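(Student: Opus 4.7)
The plan is to mirror the proof of Theorem \ref{th:qact} very closely: first sandwich $E[Q^{a\shortleftarrow}]$ between the mean queue lengths of the two $\nu=0$ systems with parameter $m = \lfloor\tilde m\rfloor$ and $m=\lceil\tilde m\rceil$, and then evaluate these means explicitly using Proposition \ref{th:pim2}.

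For the sandwiching step I would order the states of $\Omega^{\shortleftarrow}$ lexicographically so that smaller actual queue lengths appear first. With this ordering, every transition whose rate depends on $\nu$, namely the arrival jumps $(q,m,j)\to(q+1,m+1,j)$ and $(0,m)\to(1,m+1,j')$, strictly increases the state, whereas all other rates do not depend on $\nu$. Consequently \cite{busicMOR}[Theorem 1] applies exactly as in the push case and gives that $P[Q^{a\shortleftarrow}\geq i]$ is nondecreasing in $\nu$ for every $i$, and a similar monotonicity argument (cf.\ the discussion following Proposition \ref{th:pim2}) shows monotonicity in $m$ at $\nu=0$. Because $\tilde m$ is defined by $\pi_0^{a\shortleftarrow}(\lfloor\tilde m\rfloor,0)\geq 1-\lambda\geq \pi_0^{a\shortleftarrow}(\lceil\tilde m\rceil,0)$, summing tail probabilities immediately yields $E[Q^{a\shortleftarrow}(\lfloor\tilde m\rfloor,0)]\leq E[Q^{a\shortleftarrow}]\leq E[Q^{a\shortleftarrow}(\lceil\tilde m\rceil,0)]$.

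For the explicit evaluation step I would write
\[
E[Q^{a\shortleftarrow}(m,0)] \;=\; \sum_{i=1}^m\Bigl(1-\sum_{j=0}^{i-1}\pi_j^{a\shortleftarrow}(m)\Bigr),
\]
substitute the formula of Proposition \ref{th:pim2}, and change variables $k=m-i+1$ to reduce the numerator to $\delta_0 m - \delta_0\sum_{k=1}^m(1-\delta_1)^k$. The geometric identity $\sum_{k=1}^m(1-\delta_1)^k = (1-(1-\delta_1)^{m+1})/\delta_1 - 1$ then immediately produces $q^{\shortleftarrow}(m)$ as stated, finishing the proof.

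The main obstacle is really just the first step: one has to check carefully that \cite{busicMOR}[Theorem 1] still applies to the pull chain, even though the service-driven probe transitions $(q,e,j)\to(m,m,j)$ at rate $s_j^*\delta_1$ (for $q<m$) look structurally different from anything in the push case. With the lexicographic ordering above they are also state-increasing, so the comparison goes through; after that the argument is just a geometric-sum computation.
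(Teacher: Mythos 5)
Your proposal is correct and follows essentially the same route as the paper, which simply states that the proof is identical to that of Theorem \ref{th:qact} with \eqref{pim2} replacing \eqref{pim}; your sandwiching via \cite{busicMOR} and the geometric-sum evaluation of $\sum_{i=1}^m\bigl(1-\sum_{j=0}^{i-1}\pi_j^{a\shortleftarrow}(m)\bigr)$ are exactly the intended steps, and the algebra does yield $q^{\shortleftarrow}(m)$. You have in fact spelled out more detail (the lexicographic ordering check for the pull chain's probe transitions) than the paper itself provides.
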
 
\begin{proof}
The proof is identical to the proof of Theorem \ref{th:qact}, except that we use
\eqref{pim2} instead of \eqref{pim}. 
\end{proof}
\begin{remark}
In the special case that $\delta_0=\delta_1=\delta$ we find that
\[q^{\shortleftarrow}(m) = (m+1)-(1-(1-\delta)^{m+1})/\delta = 
(m+1)-\lambda_{m+1}^{\shortleftarrow}/\delta.\]
as $\lambda_m^{\shortleftarrow} = 1-(1-\delta)^m$ in that case.

On the other hand, when $\delta_1$ tends to zero, we have
\[q^{\shortleftarrow}(m) = \frac{m(m+1)}{2} \frac{\delta_0}{\delta_0 m +1}
= 
\frac{(m+1) \lambda_m^{\shortleftarrow} }{2(1+ \lambda_m^{\shortleftarrow} -\lambda)},\]
as $\delta_0=\delta/(1-\lambda)$ and $\lambda_m^{\shortleftarrow} = \delta m$.
\end{remark}

\subsection{Critically loaded system}\label{subsec:pullcrit}
The limit $\lambda \shortrightarrow 1^-$ heavily depends on the chosen value for $\delta_1$. Indeed the scaling we require is given by $\log\left( \frac{1}{1-\lambda \frac{\delta_1}{\delta}} \right)$. In particular, if $\delta_1 = 0$, we find that the maximum queue length simply converges to $1/\delta$ for $\lambda \shortrightarrow 1^-$, meaning no scaling is required at all. For $\delta_1 > 0$, the limit we obtain with the proper scaling is given by: 
$$
\lim_{\lambda \shortrightarrow 1^-} \frac{E[R^{a\shortleftarrow}]}{\log\left( \frac{1}{1-\lambda\delta_1/\delta} \right)}=
\lim_{\lambda \shortrightarrow 1^-} \frac{E[Q^{a\shortleftarrow}]}{\log\left( \frac{1}{1-\lambda\delta_1/\delta} \right)}=
\lim_{\lambda \shortrightarrow 1^-} \frac{\tilde m}{\log\left( \frac{1}{1-\lambda\delta_1/\delta} \right)} = \frac{1}{\log\left(\frac{1}{1-\delta_1}\right)}.
$$ 
The proof of this statement is similar to the proof of Theorem \ref{th:push_critical}, except 
that the last equality follows directly from Corollary \ref{cor:pull}. 

\subsection{Numerical Experiments}\label{subsec:pullnum}
In Figure \ref{fig:pull} (left) we set $f=1/2$, $SCV=10$, $\delta \in \{0.15,0.5\}$, $\delta_1=0$ (i.e.~only idle servers pull) and $\lambda \in [0.5,1]$. The expected response times together with lower and upper bounds obtained from Theorem \ref{th:bounds_EQ_pull} are shown. Further, as $\lambda \rightarrow 1^-$, the mean response time stays finite, as was noted in Subsection \ref{subsec:pullcrit}.


In Figure \ref{fig:pull} (right) we plot $\frac{E[R^{{a}\shortleftarrow}]-1}{\tilde m}$ in function of $\lambda$  for $f=1/2$, $SCV=10$, $\delta \in \{0.15,0.5,0.7\}$, $\delta_1 = \delta$ and $\lambda \in \{0.1, 1-10^{-5}\}$. Note that $E[R^{{a}\shortleftarrow}]-1$ is the mean waiting
time and due to the bounds on $E[R^{{a}\shortleftarrow}]$, we know that the ratio $(E[R^{{a}\shortleftarrow}]-1)/\tilde m$ converges to one. 

\begin{figure}
    \centering
    \begin{minipage}{0.5\textwidth}
        \centering
        \includegraphics[width=1\textwidth]{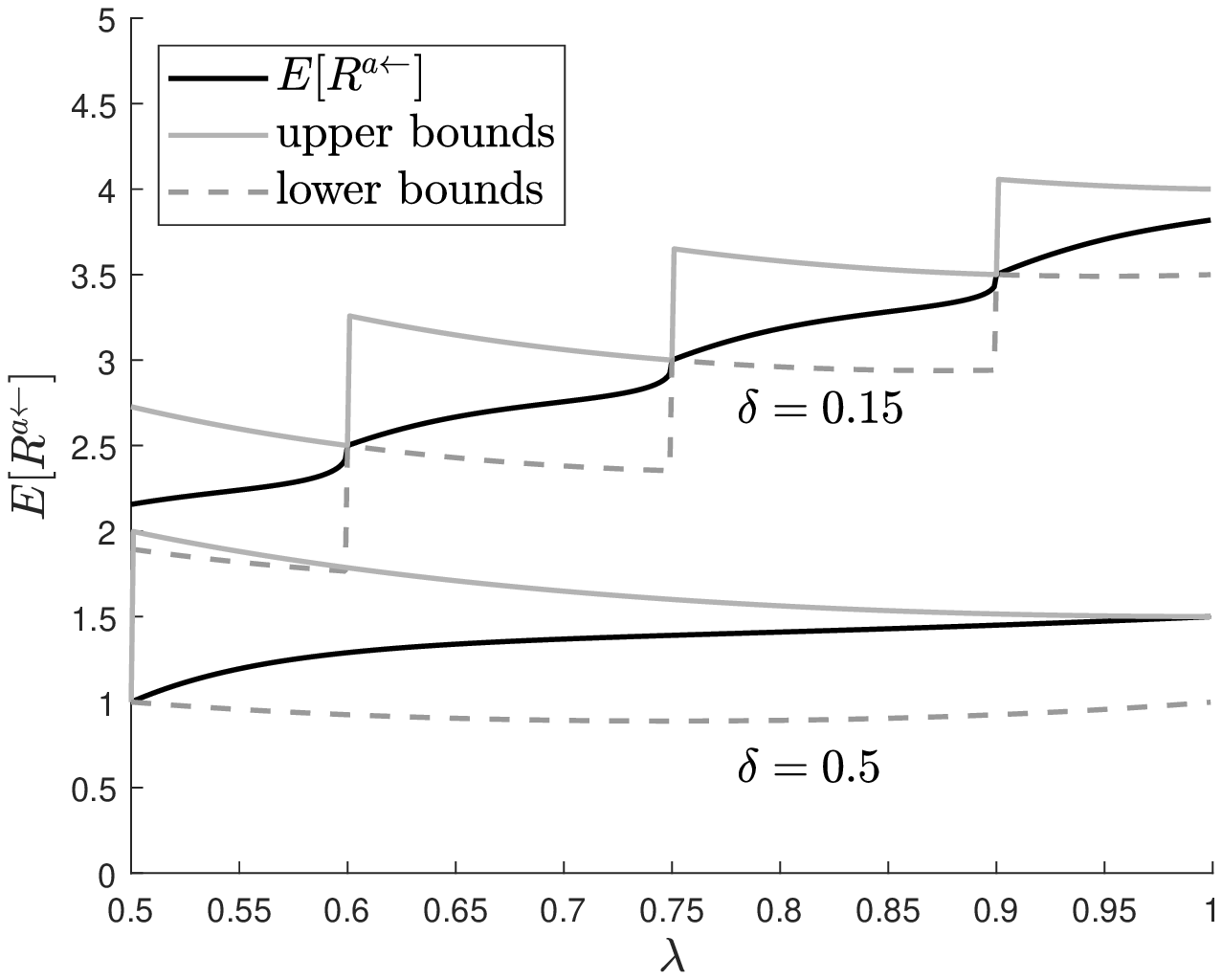}
    \end{minipage}\hfill
    \begin{minipage}{0.5\textwidth}
        \centering
        \includegraphics[width=1\textwidth]{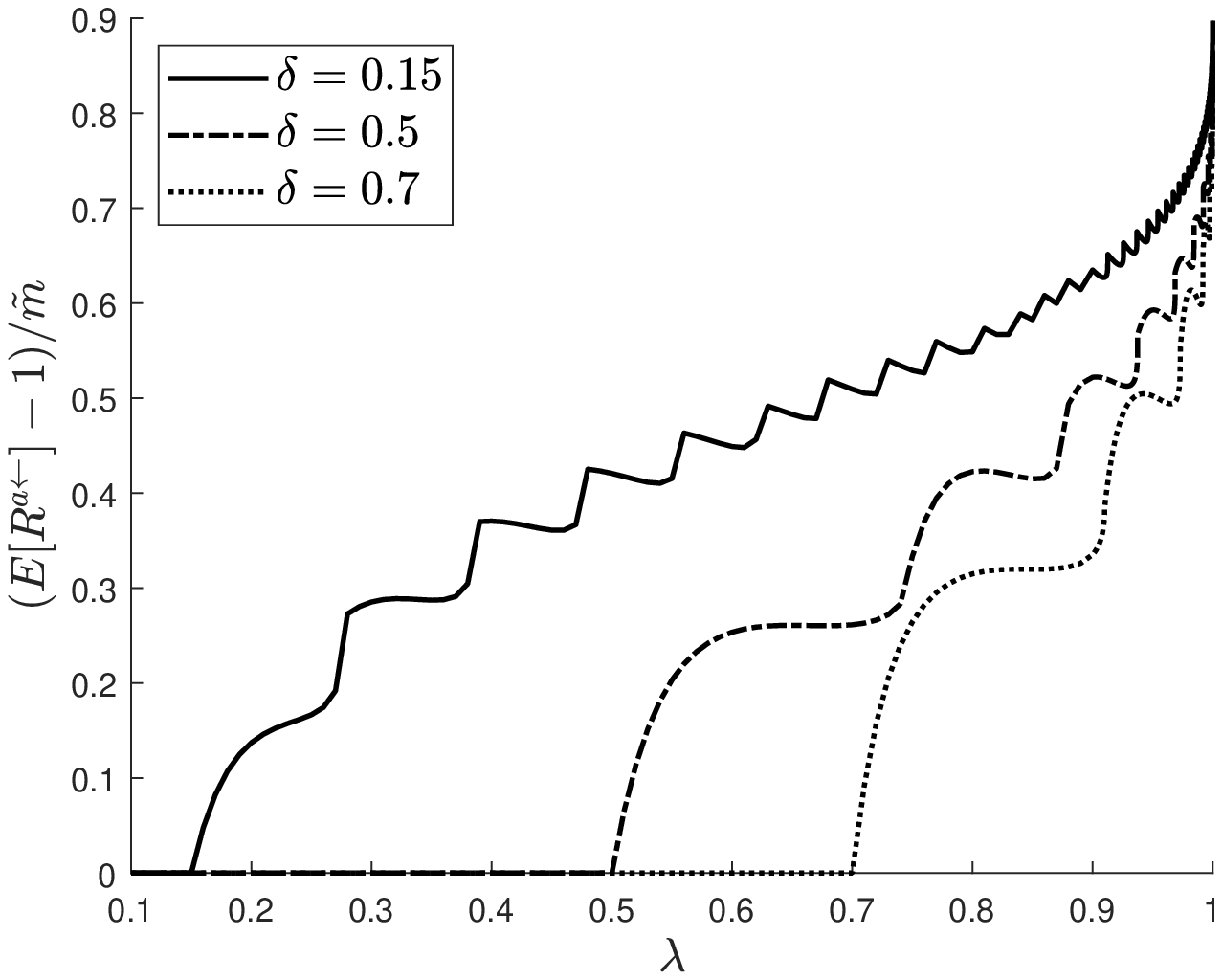}
    \end{minipage}
	\caption{$E[R^{a{\shortleftarrow}}]$ with lower and upper bounds (left) and $(E[R^{{a}\shortleftarrow}]-1)/\tilde m$ (right) in function of $\lambda$ and $\delta$.}\label{fig:pull}
\end{figure}

\section{On the power of (even a little) resource pooling} \label{sec:pooling}
In this section we study the queue at the cavity for the resource pooling policy of \cite{tsitsiklis2013power}. As for the other policies, simulation results
that demonstrate accuracy of the cavity method are presented in Section \ref{app:simulation}. When $\lambda > p$, the rate at which jobs leave the system is given by $(1-\pi_0) (1-p) N + pN$, with $\pi_0$ the fraction of idle servers. As the number of incoming jobs must equal the number of outgoing jobs this entails:
\begin{equation}
	(1-\pi_0) (1-p) + p = \lambda \Rightarrow \pi_0 = \frac{1-\lambda}{1-p}.
\end{equation}
In case $p \geq \lambda$, all the jobs are processed by the central server and
the cavity queue is idle with probability one. We generalize the analysis for exponential job sizes in \cite{tsitsiklis2013power} to the case of PH job sizes for $p < \lambda$. To this end, we note that the cavity queue is similar to an $M/PH/1$ queue with a maximal queue length given by $m+1$ and an adjusted departure rate from level $m+1$ to $m$, where $m$
depends on $\lambda$, $p$ and the job size distribution. Indeed, in the
large-scale limit the fraction of servers with more than $m+1$
jobs equals zero due to the presence of the centralized server.
As part of the capacity of the centralized server is consumed by processing jobs
that arrive in a queue with a length $> m$, the remaining capacity results in 
an additional service rate $\omega$ when the queue length equals $m+1$.
 In other words, as with the previous policies we have two unknowns: $m$ and $\omega$.
The state space for the cavity queue is given by:
\begin{align*}
	\Omega^r &= \{ 0 \} \cup \{ (q, j) \mid q=1,\dots,m+1 ; j=1,\dots, n_s \}.
\end{align*}

\subsection{State transitions}
For all queue lengths $q \leq m$, the job in service simply receives service at rate $(1-p)$. However, when the queue length $q$ equals $m+1$, a job from the cavity queue is
selected by the central server at some rate $\omega$ as noted above.
Denote by $\pi_{q}^r(m, \omega)$ the probability that the cavity queue has length $q$ given $m$ and $\omega$.
It is not hard to see that the rate $\omega$ must obey:
\begin{equation}
	\omega = p N \cdot  \frac{1}{N \pi_{m+1}^r(m,\omega)} \left( 1 - \frac{\lambda \pi_{m+1}^r(m,\omega) \cdot N}{p N} \right) = \frac{p - \lambda \pi_{m+1}^r(m,\omega)}{\pi_{m+1}^r(m,\omega) },
\end{equation}
as the centralized server generates tokens at a rate equal to $pN$, has a probability of $1/(N \pi_{m+1}^r(m,\omega))$ to pick the cavity queue given that it has length $m+1$ and the fraction of tokens devoted to queues with length $m+2$ is given by $\frac{\lambda \pi_{m+1}^r(m,\omega) \cdot N}{p N}$. We therefore find that the  matrix for the cavity queue is defined as \eqref{Qpooling}.

\begin{align}\label{Qpooling}
	&Q^{r}(m,\omega) = \nonumber \\
	& \ \ \ \begin{bmatrix}
		-\lambda & \lambda \alpha &  & & & \\
		(1-p) s^* & (1-p)S - \lambda I & \lambda I &  & & \\
		& (1-p) s^* \alpha & (1-p) S - \lambda I & \lambda I && & \\
		&  &\ddots & \ddots & \ddots & \\
		&  & & (1-p) s^* \alpha & (1-p) S - \lambda I & \lambda I \\
		&  & & & (1-p) s^* \alpha+\omega I & (1-p)S-\omega I
	\end{bmatrix}.
\end{align}

Note that when $\omega = 0$ (or $\omega=\infty$) this rate matrix is identical to the rate matrix
of a bounded M/PH/1 queue with room for $m+1$ (or $m$) jobs.

\subsection{Finding $m$ and $\omega$}
In order to analyze this policy, we should determine the value of the $m$ and $\omega$ parameters. We can again make use of \cite{busicMOR}[Theorem 1] to argue that the probability 
to have an idle   cavity queue (that is $\pi_0^r(m,\omega)$) is decreasing as a function of $m$ and increasing as a function of $\omega$. Here we find that as $\omega$ increases from $0$ to infinity, the value of $m$ jumps down by one, $\omega=0$ corresponds to having no additional transitions from $m+1$ to $m$, while $\omega=\infty$ means that the additional
transition rate from $m+1$ to $m$ is infinite making the states $(m+1,j)$ transient.

From these observations, we find that $\pi_0^r(m,\omega) > \pi_0^r(m', \omega')$ if $m < m'$ or $m=m'$ and $\omega > \omega'$, which implies the existence of a unique ($m, \omega$) for which $\pi_0^r(m, \omega) = (1-\lambda)/(1-p)$. We first derive a method which can be used to compute $m$, using the correct $m$ value we indicate how to compute $\omega$ and therefore also the stationary distribution of the cavity queue. We also show that our method allows us to recover the results for exponential job sizes presented in \cite{tsitsiklis2013power}.

In order to compute $m$, we may assume that $\omega=\infty$ and therefore $(m+1,j)$ are transient states. We can further restrict our attention to the case with $p < \lambda$, otherwise $m=0$
as noted earlier.
The rate matrix $Q^{r}(m,\infty)$ is identical to that of an M/PH/1/m queue and we can
therefore use the results in \cite[Section 3.2]{neuts2} to express its steady state
probabilities  as follows:
\begin{align}
	\pi_q^{r}(m,\infty) &= \pi_0^{r}(m,\infty) \alpha R^q,\\
	\pi_m^{r}(m,\infty) &= \pi_0^{r}(m,\infty) \alpha R^{m-1} (-\lambda ((1-p)S)^{-1}),\\
	\pi_0^{r}(m,\infty) &= \left( \alpha \left[ \sum_{i=0}^{m-1} R^i + \lambda R^{m-1} (-(1-p)S)^{-1} \right]\textbf{1} \right)^{-1},
\end{align}
for $q=1,\ldots,m-1$, where
\[R = \lambda (\lambda I - (1-p)S -\lambda \textbf{1} \alpha)^{-1}.\]
As $\pi_0^{r}(m,\infty)$ decreases as a function of $m$, 
and  $\pi_0^{r}(0,\infty) = 1 > (1-\lambda)/(1-p) $ (for $p < \lambda$), the value of $m$ is found
as the largest $m$ such that $\pi_0^{r}(m,\infty) > (1-\lambda)/(1-p)$.
In other words it is the smallest $m$ such that $\pi_0^{r}(m+1,\infty) < (1-\lambda)/(1-p)$.

For exponential job sizes the matrix $R$ becomes a scalar equal to $\lambda/(1-p)$ and
$\pi_0^{r}(m,\infty)$ simplifies to $(1-\lambda/(1-p))/(1-(\lambda/(1-p))^{m+1})$.
Solving $\pi_0^{r}(m,\infty) = (1-\lambda)/(1-p)$ yields that
$m+1 = \log(p/(1-\lambda))/\log(\lambda/(1-p))$, which is in agreement with the result presented in \cite{tsitsiklis2013power}. Unfortunately, for PH job sizes no simple explicit formula for $m$ seems to exist,
in contrast to the push, water filling and pull policies studied in this paper.

Having computed $m$, the unique value of $\omega$ can now be determined using a bisection algorithm as the steady state probability $\pi_0^r(m,\omega)$ increases as $\omega$ increases
and should match $(1-\lambda)/(1-p)$. Due to the structure of the rate matrix
$Q^r(m,\omega)$, its stationary distribution can be computed in $O(mn_s^3)$ time using the algorithm in 
\cite{gaver1}. We can however do even better using the following result:

\begin{theorem} \label{th:resource_pooling1}
	For the resource pooling policy with $0 < p <\lambda < 1$, we find that when $\omega$ is set such that $\pi_0^r(m,\omega)=(1-\lambda)/(1-p)$, one finds
	that $\pi^r_q(m,\omega)$ is independent of $\omega$ and given by
	\begin{align}\label{eq:pir} 
		\pi^r_q(m,\omega) = \frac{1-\lambda}{1-p} \alpha R^q,
	\end{align} 
	for $q = 1, \ldots,m-1$. Further, compute 
	\begin{align}\label{eq:omega}  
		(\pi^r_m(m,\omega),\pi^r_{m+1}(m,\omega)) = (-\lambda \pi_{m-1}^r(m,\omega),0)
		\begin{bmatrix}
			(1-p)S-\lambda I & \lambda I \\ (1-p)s^* \alpha + \omega I & (1-p)S-\omega I
		\end{bmatrix}^{-1},
	\end{align} 
	then $\omega$ is the unique value such that $\pi_0^r(m,\omega)+\sum_{q=1}^{m+1} \pi^r_q(m,\omega)\textbf{1}
	=1$.
\end{theorem}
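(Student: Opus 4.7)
The plan is to verify the claimed stationary distribution by directly checking the balance equations $\pi Q^r(m,\omega) = 0$ and then invoking monotonicity to pin down $\omega$. First I would establish the identity $R(1-p)s^* = \lambda \textbf{1}$ by post-multiplying the defining relation $R(\lambda I - (1-p)S - \lambda \textbf{1}\alpha) = \lambda I$ by $\textbf{1}$ and using $S\textbf{1} = -s^*$ and $\alpha \textbf{1} = 1$. A short calculation then also yields the M/PH/1 matrix quadratic $\lambda I + R((1-p)S - \lambda I) + R^2 (1-p) s^* \alpha = 0$ (this will drive the interior recursion), together with the scalar consequence $\alpha R (1-p) s^* = \lambda$.

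Next I would verify that the ansatz $\pi_q = \pi_0 \alpha R^q$ with $\pi_0 = (1-\lambda)/(1-p)$, $q = 1, \ldots, m-1$, satisfies the balance equations of $Q^r(m,\omega)$ at state $0$ and at states $(q,j)$ for $q = 1, \ldots, m-2$. The equation at state $0$ reduces to $\alpha R (1-p) s^* = \lambda$; the equations at $(q,j)$ for $q \geq 2$ collapse via the matrix quadratic; and the equation at $(1,j)$ follows from $\pi_1 = \pi_0 \alpha R$ together with the rearrangement $((1-p)S - \lambda I) + R (1-p) s^* \alpha = -\lambda R^{-1}$ of the matrix quadratic. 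Importantly, none of these equations involves $\omega$, which gives the claimed $\omega$-independence of $\pi_q^r$ for $q \leq m-1$.

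The balance equations at $(m, j)$ and $(m+1, j)$ then yield the linear system $(\pi_m, \pi_{m+1}) B = (-\lambda \pi_{m-1}, 0)$ with $B$ the block matrix appearing in the theorem statement, which gives the stated formula once $B$ is known to be invertible. For invertibility I would interpret $B$ as the subgenerator obtained by restricting $Q^r(m,\omega)$ to the states in $\{m, m+1\} \times \{1, \ldots, n_s\}$, and note that the positive exit rate $(1-p) s^*$ from level $m$ to level $m-1$ makes $-B$ a nonsingular M-matrix. Balance at $(m-1,j)$ is then automatic: post-multiplying $(\pi_m, \pi_{m+1}) B = (-\lambda \pi_{m-1}, 0)$ by $\textbf{1}$ and using $B\textbf{1} = (-(1-p) s^*, 0)^\top$ gives $\pi_m (1-p) s^* = \lambda \pi_{m-1} \textbf{1}$, which by $R(1-p)s^* = \lambda \textbf{1}$ equals $\pi_0 \alpha R^m (1-p) s^*$, precisely the scalar constraint to which the $n_s$ component-equations at $(m-1,j)$ reduce (they factor through the rank-one matrix $(1-p) s^* \alpha$).

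Finally, uniqueness of $\omega$ follows from the monotonicity of $\pi_0^r(m,\omega)$ in $\omega$ already established earlier in this section: since $\pi_0^r(m,\omega)$ is strictly increasing in $\omega$, the equation $\pi_0^r(m,\omega) = (1-\lambda)/(1-p)$, equivalent via the normalization $\pi_0^r(m,\omega) + \sum_{q=1}^{m+1} \pi_q^r(m,\omega) \textbf{1} = 1$, has a unique solution. The main technical step in the argument is the consistency check at level $m-1$; once the identity $R(1-p)s^* = \lambda \textbf{1}$ is in hand, everything else reduces to bookkeeping.
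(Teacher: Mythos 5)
Your proof is correct, but it takes a genuinely different route from the paper's. The paper disposes of \eqref{eq:pir} in one line by a censoring argument: the chain $Q^r(m,\omega)$ censored on the states with $q<m$ has exactly the same generator as an ordinary M/PH/1 queue (with service rate $1-p$) censored on that set, because every excursion above level $m-1$ re-enters level $m-1$ with phase distribution $\alpha$ irrespective of $\omega$; the matrix-geometric form $\alpha R^q$ is then imported wholesale from \cite[Section 3.2]{neuts2}, and \eqref{eq:omega} plus the normalization pins down the boundary. You instead verify the balance equations of $Q^r(m,\omega)$ directly: the identity $R(1-p)s^*=\lambda\mathbf{1}$ and the quadratic $\lambda I+R((1-p)S-\lambda I)+R^2(1-p)s^*\alpha=0$ (both correctly derived from the definition of $R$) handle states $0$ through $m-2$, the $2n_s\times 2n_s$ system at levels $m$ and $m+1$ is exactly \eqref{eq:omega}, and the consistency check at level $m-1$ reduces, via the rank-one down-transition block $(1-p)s^*\alpha$, to the scalar identity $\pi_m(1-p)s^*=\lambda\pi_{m-1}\mathbf{1}$, which you correctly extract by post-multiplying \eqref{eq:omega} by $\mathbf{1}$ and using $B\mathbf{1}=(-(1-p)s^*,0)^\top$. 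What your approach buys is self-containedness (you re-derive rather than cite the matrix-geometric solution) and a completely explicit demonstration of the $\omega$-independence, since none of the interior balance equations contains $\omega$; the consistency check at level $m-1$ and the M-matrix argument for the invertibility of the boundary block are steps the paper's censoring argument silently absorbs. What the paper's approach buys is brevity and a probabilistic explanation of \emph{why} the interior does not feel $\omega$. Your appeal to the monotonicity of $\pi_0^r(m,\omega)$ in $\omega$ for uniqueness matches the paper's earlier discussion and is sound.
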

\begin{proof}
	The result is immediate from the structure of $Q^r(m,\omega)$ and the fact that
	the chain when censored on the states with $q < m$ is identical to an ordinary
	M/PH/1 queue censored on the states with $q < m$.
\end{proof}

When the job sizes are exponential, we have $R=\lambda/(1-p)$ and one can  use the
above theorem to find that 
\begin{align*}
	\pi_q^r(m,\omega)&=  \frac{1-\lambda}{1-p} \left(\frac{\lambda}{1-p} 
	\right)^{q},\\
	\pi_{m+1}^r(m,\omega)&=  \frac{(1-\lambda) \left(\frac{\lambda}{1-p} 
		\right)^{m+1} - p}{1-\lambda - p},
\end{align*}
for $q=0,\ldots,m$, which is in agreement with the closed form results presented
in \cite{tsitsiklis2013power}. Furthermore, for the case of exponential job sizes, one finds that 
$\omega = \frac{\lambda\pi_m^r(m,\omega)}{\pi_{m+1}^r(m,\omega)} - (1-p)$.

\subsection{Performance bounds}
In this section we investigate whether we can find bounds on the maximal queue length $\lceil \tilde m\rceil$. The next theorem shows that depending on $p$ and $\lambda$,  either the
maximum queue length equals one, rendering the model insensitive to the job size distribution, or $m$ can be made arbitrarily
large by varying the job size distribution, meaning there is no upper bound on $m$
that is valid for all PH job size distributions.

\begin{proposition}\label{prop:resourceperf}
	In the same setting as Theorem \ref{th:resource_pooling1}, we find that (in case $p < \lambda$) we have:
	\begin{itemize}
		\item In case $\frac{1}{1 + \lambda / (1-p)} >
		\frac{1-\lambda}{1-p}$ the maximum queue length is unbounded as a function of the job size distribution.
		\item Otherwise, the model is insensitive to the job size distribution and 
		the maximum queue length equals $1$.
	\end{itemize}
\end{proposition}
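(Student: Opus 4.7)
My plan is to reduce both assertions to the behavior of the single quantity $\pi_0^r(1,\infty)$ and to use the rule (stated just before Theorem~\ref{th:resource_pooling1}) that the maximum cavity queue length equals $m+1$, where $m$ is the largest integer with $\pi_0^r(m,\infty)>(1-\lambda)/(1-p)$. Specializing the matrix--geometric formula for $\pi_0^r(m,\infty)$ preceding Theorem~\ref{th:resource_pooling1} to $m=1$ gives
\[
\pi_0^r(1,\infty) \;=\; \bigl(1+\lambda\,\alpha(-(1-p)S)^{-1}\textbf{1}\bigr)^{-1} \;=\; \frac{1}{1+\lambda/(1-p)},
\]
where the second equality uses $\alpha(-S)^{-1}\textbf{1}=E[Z]=1$. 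In particular this value depends only on $\lambda$ and $p$, and is therefore insensitive to the PH distribution.

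Suppose first the complementary condition $1/(1+\lambda/(1-p))\le(1-\lambda)/(1-p)$ holds. Then $\pi_0^r(1,\infty)\le(1-\lambda)/(1-p)$, while $\pi_0^r(0,\infty)=1>(1-\lambda)/(1-p)$ (using $p<\lambda$), so the rule forces $m=0$ and a maximum queue length of one. In this regime the cavity state space collapses to $\{0\}\cup\{(1,j):j=1,\ldots,n_s\}$, and pinning $\pi_0^r(0,\omega)=(1-\lambda)/(1-p)$ uniquely fixes the marginal distribution $P[Q=0]=(1-\lambda)/(1-p)$, $P[Q=1]=(\lambda-p)/(1-p)$, without reference to $(\alpha,S)$. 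This yields the claimed insensitivity.

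Now suppose $1/(1+\lambda/(1-p))>(1-\lambda)/(1-p)$. It suffices to exhibit, for every $M\in\mathbb{N}$, a PH distribution for which $\pi_0^r(M,\infty)>(1-\lambda)/(1-p)$, since the rule then forces $m\ge M$. I use the two-phase hyperexponential family $Z(\varepsilon)$ from the proof of Theorem~\ref{th:mbounds} (with $p_1=1-\varepsilon$, $p_2=\varepsilon$, $\mu_1=(1-\varepsilon)/\varepsilon$, $\mu_2=\varepsilon/(1-\varepsilon)$, mean one), and prove that for any fixed $M$,
\[
\lim_{\varepsilon\to0^+}\pi_0^r(M,\infty;\,Z(\varepsilon)) \;=\; \frac{1}{1+\lambda/(1-p)}.
\]
Since the right-hand side strictly exceeds $(1-\lambda)/(1-p)$ by hypothesis, $\pi_0^r(M,\infty;\,Z(\varepsilon))>(1-\lambda)/(1-p)$ for all sufficiently small $\varepsilon$, which is exactly what we need.

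Establishing this limit is the main technical obstacle. The intuition is a separation of time scales: as $\varepsilon\to0^+$ arrivals decompose into ``short'' jobs (probability $1-\varepsilon$, mean duration $\Theta(\varepsilon)$ after scaling by $1/(1-p)$) and ``long'' jobs (probability $\varepsilon$, mean duration $\Theta(1/\varepsilon)$); between consecutive accepted long jobs the queue empties almost instantly, while during a long job the queue saturates at $M$ within $O(1)$ time and stays there. A renewal--reward computation over the cycles delimited by successive accepted long jobs (acceptance rate $\lambda\varepsilon(1-\pi_M)$, cycle length $1/(\lambda\varepsilon(1-\pi_M))$, time at level $M$ per cycle of order $1/(\varepsilon(1-p))$) leads in the limit to the fixed-point equation $\pi_M=(\lambda/(1-p))(1-\pi_M)$, giving $\pi_0\to1/(1+\lambda/(1-p))$. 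To make this rigorous I would work with the matrix--geometric formula for $\pi_0^r(M,\infty)$ stated before Theorem~\ref{th:resource_pooling1} and track the $\varepsilon\to0^+$ asymptotics of $R(\varepsilon)=\lambda\bigl(\lambda I-(1-p)S(\varepsilon)-\lambda\textbf{1}\alpha(\varepsilon)\bigr)^{-1}$; the delicate point is that $S(\varepsilon)$ has eigenvalues of orders $1/\varepsilon$ and $\varepsilon$, so the entries of $R(\varepsilon)$ live on different scales and one has to verify that the ``short''-phase contributions vanish in the limit while the ``long''-phase entries produce the claimed fixed point.
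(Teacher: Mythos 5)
Your proposal is correct and follows essentially the same route as the paper: the insensitive computation $\pi_0^r(1,\infty)=1/(1+\lambda/(1-p))$ settles the ``otherwise'' case, and the hyperexponential family $Z(\varepsilon)$ with $\lim_{\varepsilon\to0^+}\pi_0^r(M,\infty)=1/(1+\lambda/(1-p))$ settles the unbounded case. If anything you go further than the paper, which simply asserts that limit, whereas you supply the time-scale-separation heuristic and a plan for verifying it from the matrix-geometric formula for $R(\varepsilon)$.
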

\begin{proof}
	First, we note that:
	$$
	\pi^r_0(1,\infty) = \frac{1}{1 + \lambda / (1-p)},
	$$	
	this shows that as long as $\frac{1-\lambda}{1-p} < \frac{1}{1 + \frac{\lambda}{1-p}}$, the maximal queue length is given by one.
	
	Otherwise, we take the PH distributions which we used to show Theorem \ref{th:mbounds}. That is, we define $Z(\varepsilon)$ as a PH distribution with transition matrix
	$$
	S = \begin{pmatrix}
		-\frac{(1-\varepsilon)}{\varepsilon} & 0\\
		0 & -\frac{\varepsilon}{1-\varepsilon}
	\end{pmatrix}
	$$
	and initial distribution $\alpha = (1-\varepsilon, \varepsilon)$. If we now fix $m \in \mathbb{N}$, we find that:
	$$
	\lim_{\varepsilon \rightarrow 0^+} \pi^r_0(m,\infty) = 1 / ( 1 + \lambda / (1-p) ) > \frac{1-\lambda}{1-p}.
	$$
	This shows that for any $m \in \mathbb{N}$ we can find an $\varepsilon > 0$ such that for $Z(\varepsilon)$ the maximal queue length exceeds $m$. This completes the proof.
\end{proof}

For the lower bound, we have the following result:
\begin{proposition}
		In the same setting as Theorem \ref{th:resource_pooling1} we find that the maximum queue length is minimized by having deterministic job sizes, while for PH distributions with $k$ phases the maximum queue length is minimized by having Erlang$-k$ job sizes.
Moreover, for deterministic job sizes, the maximal queue length  corresponds to the smallest $n \in \mathbb{N}$ for which:
		$$
		\frac{1}{1+ \rho \cdot \left( \sum_{k=0}^{n-1} \frac{(-1)^k}{k!} (n-1-k)^k e^{(n-1-k)\rho} \cdot \rho^k \right)} < \frac{1-\lambda}{1-p},
		$$
		with $\rho = \frac{\lambda}{1-p}$.
\end{proposition}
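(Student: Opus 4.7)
The plan is in two parts, corresponding to the two assertions.

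\textbf{Part 1 (deterministic and Erlang-$k$ minimize the maximum queue length).} By Theorem~\ref{th:resource_pooling1} and the discussion preceding it, the maximum queue length is the smallest integer $n$ with $\pi_0^r(n,\infty)<(1-\lambda)/(1-p)$, where $\pi_0^r(n,\infty)$ is the idle probability of an M/PH/1/$n$ queue with arrival rate $\lambda$ and service rate $1-p$. Splitting a renewal cycle of this queue into an exponential idle period (mean $1/\lambda$) and a busy period $B_n$ starting from one job gives
\[ \pi_0^r(n,\infty)=\frac{1}{1+\lambda\, E[B_n]}, \]
so the problem reduces to \emph{maximizing} $E[B_n]$ over admissible $Z$ with mean one. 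By Wald's identity, $E[B_n]=E[Z]\,E[N_n]/(1-p)$ where $N_n$ is the number of services in one busy period, so the goal becomes maximizing $E[N_n]$.

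The central claim is that $E[N_n]$ is decreasing in $Z$ with respect to the convex order: if $Z_1\leq_{cx} Z_2$ with equal means, then $E[N_n;Z_1]\geq E[N_n;Z_2]$. I would prove this by coupling the two M/PH/1/$n$ systems on a common Poisson$(\lambda)$ arrival stream using a martingale realization of $Z_2$ as a mean-preserving spread of $Z_1$; the extra variability concentrates more arrivals into the longer services, causing more of them to be blocked and hence fewer served jobs per busy period, and a careful induction on the embedded chain at service completions makes this precise. Given this monotonicity, Jensen's inequality forces the overall minimum onto deterministic $Z$, and O'Cinneide's majorization result (already invoked in Proposition~\ref{th:mbounds_orderk}) forces the minimum within order-$k$ PH onto the Erlang-$k$ law.

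\textbf{Part 2 (explicit formula for deterministic service).} Setting $Z\equiv 1$, each service has length $1/(1-p)$ and the number of arrivals during a service is Poisson$(\rho)$ with $\rho=\lambda/(1-p)$. Consider the embedded Markov chain at service completions, with state on $\{0,\ldots,n-1\}$ equal to the number in system, and transitions from $j\geq 1$ to $\min(j-1+A,n-1)$ for $A\sim\mathrm{Poisson}(\rho)$ (capped to respect the buffer limit). Writing $e_j$ for the expected number of services until absorption at $0$ starting from $j$, a triangular linear system yields $e_1,\ldots,e_{n-1}$; solving top-down and collapsing the resulting alternating sums produces
\[ E[N_n]=\sum_{k=0}^{n-1}\frac{(-1)^k}{k!}\,(n-1-k)^k\,e^{(n-1-k)\rho}\,\rho^k. \]
Plugging into $\pi_0^r(n,\infty)=1/(1+\rho E[N_n])$ yields the claimed formula, and the maximum queue length is the smallest $n$ for which this drops below $(1-\lambda)/(1-p)$.

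\textbf{Main obstacle.} The real difficulty lies in the convex-order monotonicity of $E[N_n]$. Classical convex-order results for M/G/1 handle the workload in the infinite-buffer case, but here the truncation at buffer size $n$ and the resulting state-dependent blocking couple the service times with the arrival stream in a nontrivial way, which the coupling/induction must absorb carefully. The remaining ingredients (Jensen, O'Cinneide, and the embedded-chain triangular linear system) are routine by comparison.
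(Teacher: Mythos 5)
Your reduction is sound and ends up in the same place as the paper, but the paper closes both halves by citation where you propose to reprove things, and the one step you reprove from scratch is exactly the one you have not actually proved. Concretely: the paper's proof of the first part is a one-line appeal to Miyazawa (1990), who showed that the loss probability of the $M/G/1/m$ queue is increasing in the convex order; since $\pi_0 = 1-\rho(1-P_{\mathrm{loss}})$ (equivalently, via your renewal/Wald decomposition, $\pi_0 = 1/(1+\rho E[N_n])$ with $1-P_{\mathrm{loss}} = E[N_n]/(1+\rho E[N_n])$), your claim that $E[N_n]$ is decreasing in the convex order is precisely a restatement of that cited result. Your identification of this as the ``main obstacle'' is accurate, and the coupling sketch you offer for it does not close the gap: a mean-preserving spread of the service time does not yield pathwise domination of the blocked/served counts on a common arrival stream (the truncation makes the sample paths non-comparable in general), which is why results of this type for finite-buffer loss systems are established by analytic or transform arguments rather than by direct coupling. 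So as a self-contained proof, Part 1 is missing its key lemma; as a proof that cites the lemma, it coincides with the paper's. The remaining ingredients are fine and match the paper's logic: Jensen / $E[Z] \leq_{cx} Z$ gives the deterministic case, O'Cinneide's majorization gives the Erlang-$k$ case, and your Part 2 is a correct (and verifiable, e.g.\ for $n=1,2,3$) rederivation of the $M/D/1/n$ idle probability via the embedded chain, where the paper instead cites Brun and Garcia's closed form. The only thing your route buys is self-containedness for Part 2; for Part 1 you should either cite the convex-ordering result for the loss probability or supply a genuinely rigorous proof of the monotonicity of $E[N_n]$, since the heuristic about variability concentrating arrivals into long services is not one.
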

\begin{proof}
	For the first part, it was proven in \cite{miyazawa1990complementary} that the loss probability of an $M/G/1/m$ queue is increasing in the convex ordering. This shows 
	that the deterministic resp.~Erlang-$k$ distributions provide the smallest maximum queue lengths (for general resp.~$k$-phase job size distributions). The second part is a simple application of \cite{brun2000analytical}[Theorem 1] which presents a closed form formula
	for the probability that the M/D/1/m queue is idle.
\end{proof}

\subsection{Numerical Experiments}

In Figure \ref{fig:resourcev2} (left) we show the expected response time as a function of $\lambda$ for the resource pooling policy with the parameter setting $SCV=10$, $f=1/2$, $p \in \{0.2,0.3,0.4,0.5\}$.  As expected, decreasing $p$ or increasing $\lambda$ increases $m$. Further, as $1-\lambda$ decreases exponentially, $m$ seems to increase linearly. In other words, this example indicates a $\Theta(\log(1/(1-\lambda)))$ growth of the maximal queue length. 

In Figure \ref{fig:resourcev2} (right), we fix $p=0.25$, $f =1/2$ and use $\lambda \in \{0.8,0.85,0.9\}$ and $SCV \in [1,1000]$.
The figure clearly illustrates that $m$ is unbounded as a function of the $SCV$. 
Note that the mean response time of the resource pooling system does not exhibit non-differentiable points like the mean response times of the other systems. This is due to the fact that the central server always works on a pending job in a queue with maximum queue length (unless the
maximum queue length equals one).


\begin{figure}
    \centering
    \begin{minipage}{0.5\textwidth}
        \centering
        \includegraphics[width=1\textwidth]{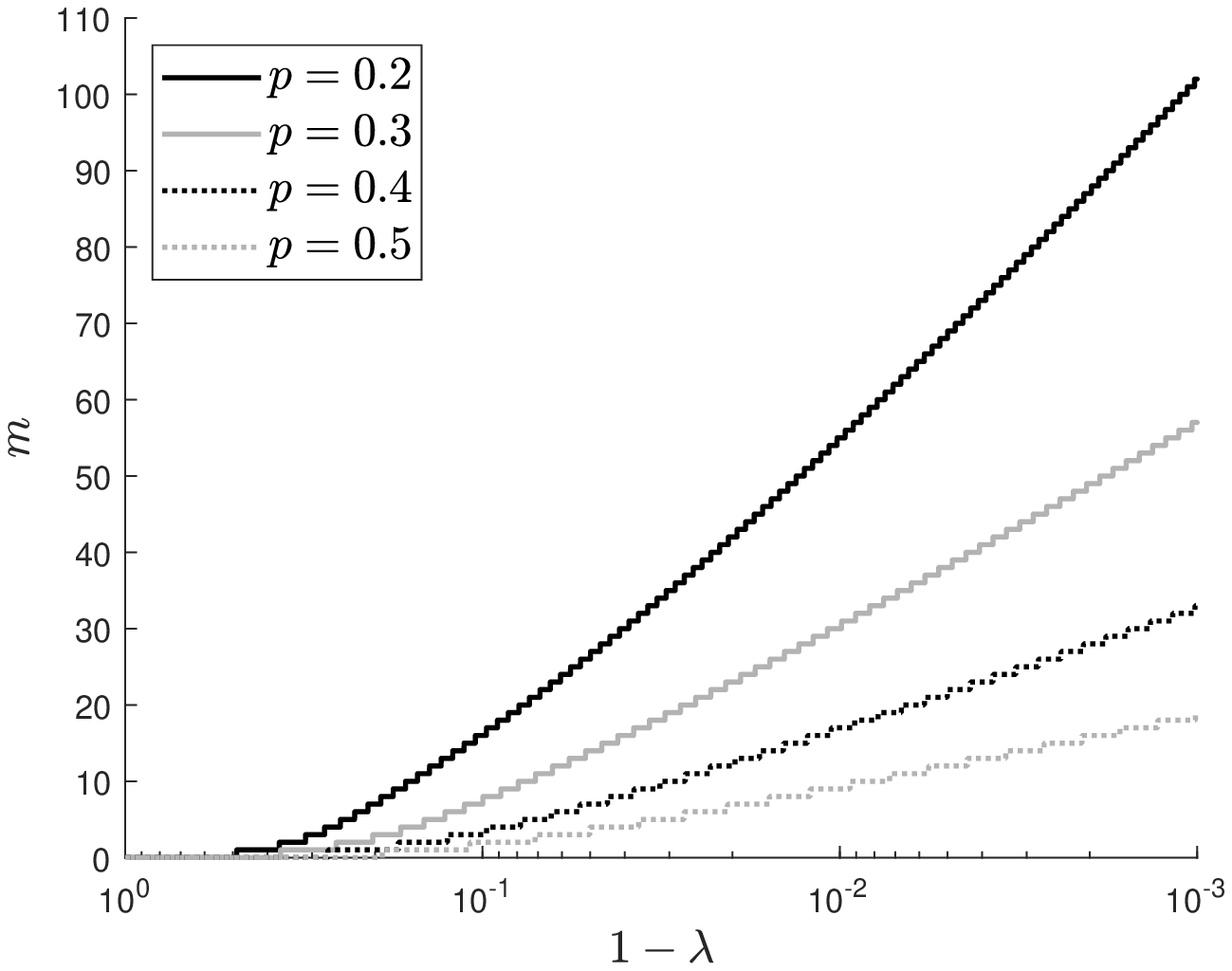}
    \end{minipage}\hfill
    \begin{minipage}{0.5\textwidth}
        \centering
        \includegraphics[width=1\textwidth]{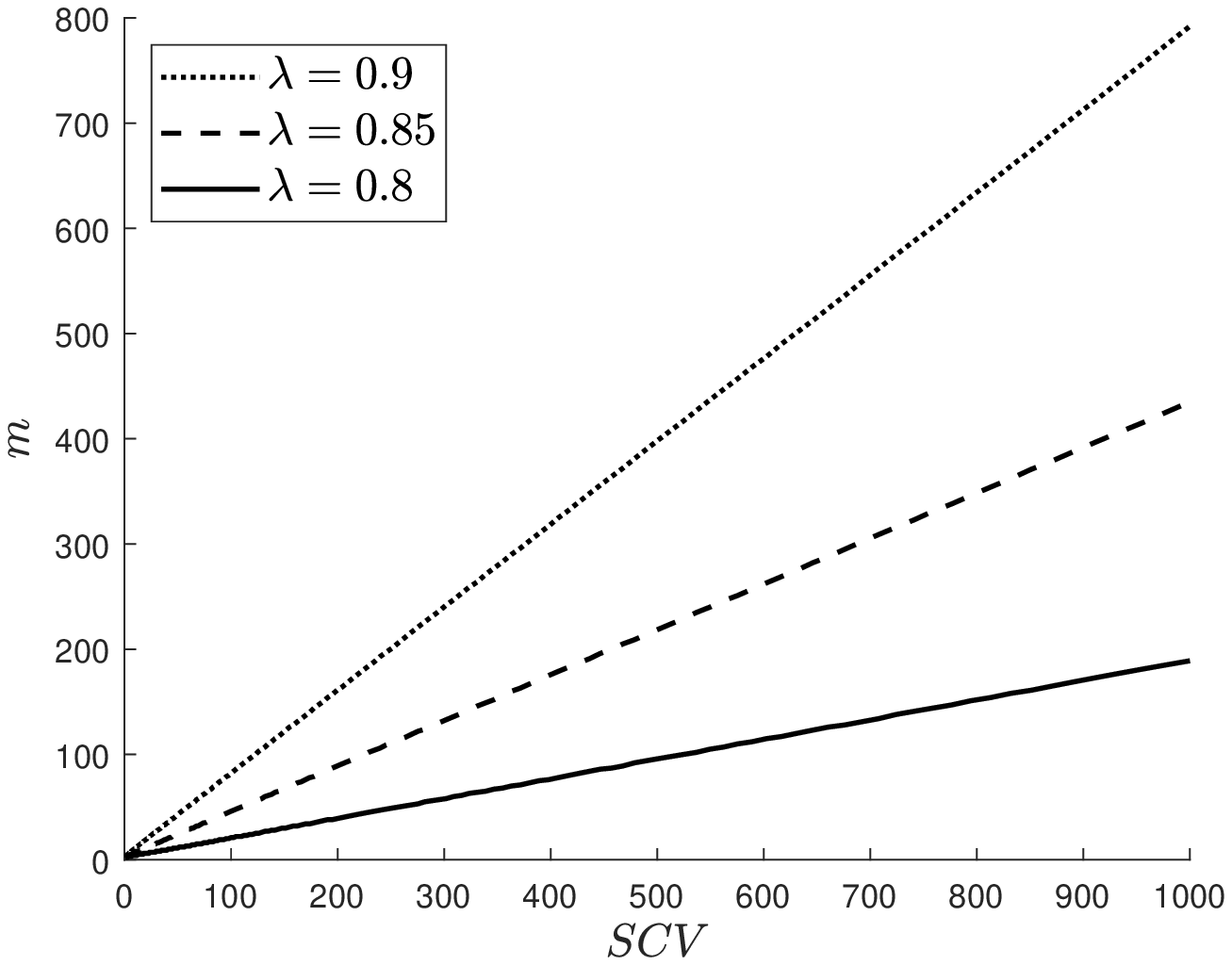}
    \end{minipage}
    \caption{Resource pooling: $m$ as a function of $\lambda$ for a variety of values of $p$ (left) and as a function function the $SCV$ for $3$ values of $\lambda$ (right).}\label{fig:resourcev2}
\end{figure}

\section{Simulation results} \label{app:simulation}
This section demonstrates that as $N$, that is the number of servers, becomes large 
the performance of the stochastic system seems to converge towards the
performance predicted by the cavity method. This suggests that the queue at the cavity
corresponds to the large-scale limit for the four policies considered. A formal proof
for this was presented in \cite{ying2017power} and \cite{tsitsiklis2013power} for
the water filling and resource sharing policies in the case of exponential job sizes.

For each policy we present simulation results for four different settings: a setting with exponential, hyperexponential, Erlang and Hyper-Erlang job sizes, with $N \in \{100,1000,10000,100000\}$. 

The job size distributions used for the experiments are all examples of PH distributions with mean 1. These are usually represented by $(\alpha,S)$, where $\alpha$ is the initial probability vector and $S$ a square matrix that records the rates of phase changes. The exponential distribution is obtained by setting $\alpha = 1$ and $S =-1$, while a hyperexponential distribution of order $2$ is found by setting $\alpha = (p, 1-p)$ for some probability $p$ and $S$ is a diagonal
matrix (with entries $-\mu_1$ and $-\mu_2$). A hyperexponential distribution of order 2 can be described using the mean of the distribution, the shape parameter $f$ and the squared coefficient of variation $SCV$, as indicated in \cite{hellemansSIG18}. The Erlang($k$) distribution is defined as a sum of $k$ exponential distributions (each with mean $1/k$), that is $\alpha_1 = 1$ and $S$ holds the values $-k$ on its main diagonal and $k$ on its upper diagonal. Let $S_{Erl(k)}$ denote the matrix $S$ of an Erlang($k$) distribution, the Hyper-Erlang($k,\ell$) distribution is then 
characterized by $\alpha_1 = p, \alpha_{k+1} = 1-p$ and 
\[S = \begin{bmatrix}
	S_{Erl(k)} & \\
	& S_{Erl(\ell)}
\end{bmatrix}.\]

The results presented  are only a small selection of the various settings we have simulated and 
are representative for other parameter settings as well. We ran the simulations starting from an empty system until $N\cdot 10^4$ arrivals  occurred, with a warm up period of 10\% of the jobs. The simulated average response times and 95\% confidence intervals are calculated based on 20 runs. The simulation results are given in Tables \ref{tab:simhyppush}-\ref{tab:simresourcepooling}.  The performance predicted by the cavity method is
found in the column  labelled $\infty$.

We see that in all considered cases the relative error typically decreases as $N$ increases,
with relative errors below $1\%$ for $N$ sufficiently large. We do note that for systems
of moderate size, e.g., $N=100$, the error can be substantial, exceeding $10\%$. 
We further note that while the cavity method often yields an optimistic prediction for
any finite $N$, this is not always the case here. This can be understood by noting that for a given arrival rate $\lambda$, we can set $\delta$ arbitrarily low such that the mean
response time is larger than the mean response time in an M/PH/1 queue, which corresponds to setting $N=1$. Hence, for $\delta$ small enough, the cavity method may yield a pessimistic prediction for finite $N$. Several such
examples can be seen for the push policy in Table \ref{tab:simhyppush}. 

Recall that for the water filling policy $M$ should grow as $\Theta(\log(N))$. In Table \ref{tab:simwaterfilling}, we set $M = C\cdot \log_{10}(N)$ (with the value of $C$ given in the table). Not surprisingly, we noted that the relative error of the cavity method  depends on the exact choice of the growth function.

\begin{table}[t]
	\centering
	\caption{Relative error of simulated mean response time for the hyperscalable push strategy based on 20 runs.}
	\label{tab:simhyppush}
	\begin{tabular}{ll:llllll}
		\toprule
		\multicolumn{2}{c:}{distribution} & \multicolumn{1}{c}{$\lambda$} & \multicolumn{1}{c}{$\delta$} &\multicolumn{1}{c}{$N$} &\multicolumn{1}{c}{sim. $\pm$ conf.} & \multicolumn{1}{c}{$\infty$} & \multicolumn{1}{c}{rel.err.\%} \\ \hline
		\multicolumn{2}{l:}{Exponential} & 0.9 &0.3& 100 & 5.8698 $\pm$ 2.11e-02 & 6.0081 &2.3028 \\
		\multicolumn{2}{l:}{} & 0.9 &0.3& 1000 & 6.0373 $\pm$ 6.70e-03 & 6.0081 & 0.4862\\
		& & 0.9 &0.3& 10000 & 6.0098 $\pm$ 1.39e-03 & 6.0081 & 0.0288\\
		& & 0.9 &0.3& 100000 & 6.0084 $\pm$ 6.30e-04 & 6.0081 & 0.0047\\\hline
		\multicolumn{2}{l:}{Hyperexponential(2)} & 0.85 &0.5& 100 & 4.7074 $\pm$ 4.67e-02 & 4.5862 & 2.6416\\
		\multicolumn{2}{l:}{$f=1/2, SCV =15$} & 0.85 &0.5& 1000 & 4.6229 $\pm$ 9.23e-03 & 4.5862 & 0.7996\\
		& & 0.85 &0.5& 10000 & 4.5877 $\pm$ 2.65e-03 & 4.5862 & 0.0314\\
		& & 0.85 &0.5& 100000 & 4.5867 $\pm$ 6.53e-04 & 4.5862 & 0.0106\\\hline
		\multicolumn{2}{l:}{Erlang(6)} & 0.8 &0.25& 100 & 4.0865 $\pm$ 1.02e-02 & 4.2206 & 3.1766\\
		& & 0.8 &0.25& 1000 & 4.2557 $\pm$ 6.43e-03 & 4.2206 & 0.8316\\
		& & 0.8 &0.25& 10000 & 4.2258 $\pm$ 1.71e-03 & 4.2206 & 0.1251\\
		& & 0.8 &0.25& 100000 & 4.2210 $\pm$ 4.53e-04 & 4.2206 & 0.0106\\\hline
		\multicolumn{2}{l:}{Hyper-Erlang(2,5)} & 0.85 &0.15& 100 & 7.9505 $\pm$ 1.77e-02 & 8.7304 & 8.9331\\
		\multicolumn{2}{l:}{$p=0.25$}& 0.85 &0.15& 1000 & 8.4868 $\pm$ 7.58e-03 & 8.7304 & 2.7905\\ 
		& & 0.85 &0.15& 10000 & 8.6962 $\pm$ 1.06e-03 & 8.7304 & 0.3923\\
		& & 0.85 &0.15& 100000 & 8.7266 $\pm$ 2.36e-04 & 8.7304 & 0.0431\\ \bottomrule
	\end{tabular}
\end{table}

\begin{table}[t]
	\centering
	\caption{Relative error of simulated mean response time for water filling strategy based on 20 runs.}
	\label{tab:simwaterfilling}
	\begin{tabular}{ll:llllllll}
		\toprule
		\multicolumn{2}{c:}{distribution} & \multicolumn{1}{c}{$\lambda$} & \multicolumn{1}{c}{$\delta$} &\multicolumn{1}{c}{$N$} &\multicolumn{1}{c}{$C$}&\multicolumn{1}{c}{$M$} &\multicolumn{1}{c}{sim. $\pm$ conf.} & \multicolumn{1}{c}{$\infty$} & \multicolumn{1}{c}{rel.err.\%} \\ \hline
		\multicolumn{2}{l:}{Exponential} & 0.8 &0.4& 100 & 20& 40 & 3.8973 $\pm$ 4.43e-02 & 3.5136 & 10.9205\\
		\multicolumn{2}{l:}{} & 0.8 &0.4& 1000 &20 & 60 & 3.5840 $\pm$ 1.49e-02 & 3.5136 & 2.0040\\
		& & 0.8 &0.4& 10000 &20 & 80 & 3.5446 $\pm$ 3.95e-03 & 3.5136 & 0.8812\\
		& & 0.8 &0.4& 100000 &20 & 100 & 3.5315 $\pm$ 1.68e-03 & 3.5136 & 0.5093\\\hline
		\multicolumn{2}{l:}{Hyperexponential(2)} & 0.8 &0.4 & 100 &40 & 80 & 5.5115 $\pm$ 1.09e-01 & 4.5947 & 19.9529 \\
		\multicolumn{2}{l:}{$f=1/2, SCV =10$} & 0.8 &0.4 & 1000 &40 &120 & 4.7841 $\pm$ 3.52e-02 & 4.5947 & 4.1217 \\
		& & 0.8 &0.4 & 10000 & 40& 160 & 4.6580 $\pm$ 9.20e-03 & 4.5947 & 1.3775\\
		& & 0.8 &0.4 & 100000 &40 & 200 & 4.6239 $\pm$ 2.61e-03 & 4.5947 & 0.6346\\\hline
		\multicolumn{2}{l:}{Erlang(3)} & 0.75 &1.2& 100&30 & 60 & 1.4877 $\pm$ 1.43e-02 & 1.4968 & 0.6059\\
		& & 0.75 &1.2& 1000 & 30& 90 & 1.5511 $\pm$ 6.14e-03 & 1.4968 & 3.6306\\
		& & 0.75 &1.2& 10000 & 30& 120 & 1.4975 $\pm$ 2.52e-03 & 1.4968 & 0.0502 \\
		& & 0.75 &1.2& 100000 & 30& 150 & 1.4963 $\pm$ 9.06e-04 & 1.4968 & 0.0298 \\\hline
		\multicolumn{2}{l:}{Hyper-Erlang(3,5)} & 0.8 &1.2& 100 &30 & 60 & 1.6386 $\pm$ 1.53e-02 & 1.5708 & 4.3178\\
		\multicolumn{2}{l:}{$p=0.6$} & 0.8 &1.2& 1000 & 30& 90 & 1.6993 $\pm$ 8.02e-03 & 1.5708 & 8.1847\\
		& & 0.8 &1.2& 10000 & 30& 120 & 1.5986 $\pm$ 2.74e-03 & 1.5708 &1.7696 \\
		& & 0.8 &1.2& 100000 & 30& 150 & 1.5756 $\pm$ 8.06e-04 & 1.5708 &0.3098 \\ \bottomrule
	\end{tabular}
\end{table}

\begin{table}[t]
	\centering
	\caption{Relative error of the simulated mean response time for the hyperscalable pull strategy based on 20 runs.}
	\label{tab:simhyppull}
	\begin{tabular}{ll:llllll}
		\toprule
		\multicolumn{2}{c:}{distribution} & \multicolumn{1}{c}{$\lambda$} & \multicolumn{1}{c}{$\delta$} &\multicolumn{1}{c}{$N$} &\multicolumn{1}{c}{sim. $\pm$ conf.} & \multicolumn{1}{c}{$\infty$} & \multicolumn{1}{c}{rel.err.\%} \\ \hline
		\multicolumn{2}{l:}{Exponential} & 0.7 &0.2& 100 & 2.0198 $\pm$ 3.70e-03 & 2.0816 & 2.9688\\
		\multicolumn{2}{l:}{} & 0.7 &0.2& 1000 & 2.0707 $\pm$ 1.28e-03 & 2.0816 &0.5237 \\
		& & 0.7 &0.2& 10000 & 2.0803 $\pm$ 3.78e-04 & 2.0816 & 0.0654\\
		& & 0.7 &0.2& 100000 & 2.0815 $\pm$ 9.90e-05 & 2.0816 & 0.0037\\\hline
		\multicolumn{2}{l:}{Hyperexponential(2)} & 0.9 &0.4& 100 & 2.5316 $\pm$ 4.76e-02 & 1.8726 & 35.1893\\
		\multicolumn{2}{l:}{$f=1/2, SCV =20$} & 0.9 &0.4& 1000 & 1.8590 $\pm$ 8.45e-03 & 1.8726 & 0.7271\\
		& & 0.9 &0.4& 10000 & 1.8540 $\pm$ 3.07e-03 & 1.8726 & 0.9965\\
		& & 0.9 &0.4& 100000 & 1.8711 $\pm$ 7.10e-04 & 1.8726 & 0.0836\\\hline
		\multicolumn{2}{l:}{Erlang(3)} & 0.75 &0.15& 100 & 2.6126 $\pm$ 6.58e-03 & 3.0000 & 12.9117\\
		& & 0.75 &0.15& 1000 & 2.7894 $\pm$ 2.75e-03 & 3.0000 & 7.0205\\
		& & 0.75 &0.15& 10000 & 2.8719 $\pm$ 3.54e-03 & 3.0000 & 4.2689\\
		& & 0.75 &0.15& 100000 & 2.9198 $\pm$ 4.00e-03 & 3.0000 & 2.6733\\\hline
		\multicolumn{2}{l:}{Hyper-Erlang(2,5)} & 0.75 &0.5& 100 & 1.2417 $\pm$ 1.99e-03 & 1.1839 & 4.8781\\
		\multicolumn{2}{l:}{$p=0.75$} & 0.75 &0.5& 1000 & 1.1888 $\pm$ 4.91e-04 & 1.1839 & 0.4126\\
		& & 0.75 &0.5& 10000 & 1.1845 $\pm$ 1.54e-04 & 1.1839 & 0.0536\\
		& & 0.75 &0.5& 100000 & 1.1839 $\pm$ 6.97e-05 & 1.1839 & 0.0038\\ \bottomrule
	\end{tabular}
\end{table}

\begin{table}[t]
	\centering
	\caption{Relative error of the simulated mean response time for the resource pooling strategy based on 20 runs.}
	\label{tab:simresourcepooling}
	\begin{tabular}{ll:llllll}
		\toprule
		\multicolumn{2}{c:}{distribution} & \multicolumn{1}{c}{$\lambda$} & \multicolumn{1}{c}{$p$} &\multicolumn{1}{c}{$N$} &\multicolumn{1}{c}{sim. $\pm$ conf.} & \multicolumn{1}{c}{$\infty$} & \multicolumn{1}{c}{rel.err.\%} \\ \hline
		\multicolumn{2}{l:}{Exponential} & 0.8 &0.3& 100 & 1.4774 $\pm$ 5.42e-03 & 1.3958 & 5.8454 \\
		\multicolumn{2}{l:}{} & 0.8 &0.3& 1000 & 1.4153 $\pm$ 1.27e-03 & 1.3958 & 1.4007\\
		& & 0.8 &0.3& 10000 & 1.3976 $\pm$ 5.90e-04 & 1.3958 & 0.1325\\
		& & 0.8 &0.3& 100000 & 1.3958 $\pm$ 2.35e-04 & 1.3958 & 0.0046\\\hline
		\multicolumn{2}{l:}{Hyperexponential(2)} & 0.7 &0.3& 100 & 1.0469 $\pm$ 7.06e-03 & 1.0699 & 2.1500 \\
		\multicolumn{2}{l:}{$f=1/2, SCV =5$} & 0.7 &0.3& 1000 & 1.0726 $\pm$ 1.59e-03 & 1.0699 & 0.2493\\
		& & 0.7 &0.3& 10000 & 1.0702 $\pm$ 4.63e-04 & 1.0699 & 0.0252\\
		& & 0.7 &0.3& 100000 & 1.0700 $\pm$ 1.91e-04 & 1.0699 & 0.0094\\\hline
		\multicolumn{2}{l:}{Erlang(7)} & 0.9 &0.5& 100 & 1.2995 $\pm$ 4.94e-03 & 1.2588 & 3.2315 \\
		& & 0.9 &0.5& 1000 & 1.2607 $\pm$ 1.33e-03 & 1.2588 & 0.1566\\
		& & 0.9 &0.5& 10000 & 1.2589 $\pm$ 4.05e-04 & 1.2588 & 0.0112\\
		& & 0.9 &0.5& 100000 & 1.2587 $\pm$ 1.01e-04 & 1.2588 & 0.0035\\\hline
		\multicolumn{2}{l:}{Hyper-Erlang(3,5)} & 0.8 &0.1& 100 & 2.0725 $\pm$ 4.47e-03 & 2.0320 & 1.9956 \\
		\multicolumn{2}{l:}{$p=0.6$} & 0.8 &0.1& 1000 & 2.0351 $\pm$ 1.66e-03 & 2.0320 & 0.1544\\
		& & 0.8 &0.1& 10000 & 2.0322 $\pm$ 3.88e-04 & 2.0320 & 0.0134\\
		& & 0.8 &0.1& 100000 & 2.0321 $\pm$ 1.51e-04 & 2.0320 & 0.0054\\ \bottomrule
	\end{tabular}
\end{table}

\section{Conclusion and future work} \label{sec:future_work}
Using the cavity approach, we studied four distinct load balancing policies which have a finite maximum queue length: the push \cite{van2019hyper}, water-filling
\cite{ying2017power}, pull and resource pooling \cite{tsitsiklis2013power} policies. 
Our main objective was to study the impact of the job size distribution as
prior work was limited to exponential job sizes. We found that in order to study the queue at the cavity for these policies two unknowns must be
determined: the maximum queue length and some rate or probability. 

For all the policies considered the maximum queue length can be studied using a simple
finite state Markov chain, often yielding closed form expressions (except for
resource pooling).  For most cases this maximum queue length scales as $\log\left( \frac{1}{1-\lambda} \right)$. The unknown rate or probability was determined next,
yielding an efficient way to compute the stationary distribution for the queue at the cavity.
Simulation results which show that the queue at the cavity corresponds to the large-scale
limit were presented in Section \ref{app:simulation}.

One significant pitfall of the push, pull and water filling policies is the fact that as $\delta$ decreases to zero, the maximum queue length increases to infinity (irrespective of the arrival rate $\lambda$). This entails that servers may suddenly receive many jobs in a short time period when their
queue length is updated. Interesting future work would be to 
adapt these policies to avoid such behavior.

The policies considered were studied in the context  of a single dispatcher. As the problem of having multiple dispatchers is becoming more and more relevant, one could try to generalize/adjust these policies in the presence of multiple dispatchers. Policies that operate  in such a setting have recently been studied in \cite{zhou2021asymptotically, vargaftik20}.



\bibliographystyle{ACM-Reference-Format}
\bibliography{thesis}


\begin{thebibliography}{28}


\ifx \showCODEN    \undefined \def \showCODEN     #1{\unskip}     \fi
\ifx \showDOI      \undefined \def \showDOI       #1{#1}\fi
\ifx \showISBNx    \undefined \def \showISBNx     #1{\unskip}     \fi
\ifx \showISBNxiii \undefined \def \showISBNxiii  #1{\unskip}     \fi
\ifx \showISSN     \undefined \def \showISSN      #1{\unskip}     \fi
\ifx \showLCCN     \undefined \def \showLCCN      #1{\unskip}     \fi
\ifx \shownote     \undefined \def \shownote      #1{#1}          \fi
\ifx \showarticletitle \undefined \def \showarticletitle #1{#1}   \fi
\ifx \showURL      \undefined \def \showURL       {\relax}        \fi
\providecommand\bibfield[2]{#2}
\providecommand\bibinfo[2]{#2}
\providecommand\natexlab[1]{#1}
\providecommand\showeprint[2][]{arXiv:#2}

\bibitem[\protect\citeauthoryear{Bramson, Lu, and Prabhakar}{Bramson
  et~al\mbox{.}}{2010}]%
        {bramsonLB}
\bibfield{author}{\bibinfo{person}{M. Bramson}, \bibinfo{person}{Y. Lu}, {and}
  \bibinfo{person}{B. Prabhakar}.} \bibinfo{year}{2010}\natexlab{}.
\newblock \showarticletitle{Randomized load balancing with general service time
  distributions}. In \bibinfo{booktitle}{\emph{{ACM SIGMETRICS} 2010}}.
  \bibinfo{pages}{275--286}.
\newblock
\urldef\tempurl%
\url{https://doi.org/10.1145/1811039.1811071}
\showDOI{\tempurl}


\bibitem[\protect\citeauthoryear{Bramson, Lu, and Prabhakar}{Bramson
  et~al\mbox{.}}{2012}]%
        {bramsonLB_QUESTA}
\bibfield{author}{\bibinfo{person}{M. Bramson}, \bibinfo{person}{Y. Lu}, {and}
  \bibinfo{person}{B. Prabhakar}.} \bibinfo{year}{2012}\natexlab{}.
\newblock \showarticletitle{Asymptotic independence of queues under randomized
  load balancing}.
\newblock \bibinfo{journal}{\emph{Queueing Syst.}} \bibinfo{volume}{71},
  \bibinfo{number}{3} (\bibinfo{year}{2012}), \bibinfo{pages}{247--292}.
\newblock
\urldef\tempurl%
\url{https://doi.org/10.1007/s11134-012-9311-0}
\showDOI{\tempurl}


\bibitem[\protect\citeauthoryear{Bramson, Lu, and Prabhakar}{Bramson
  et~al\mbox{.}}{2013}]%
        {bramsonAAP}
\bibfield{author}{\bibinfo{person}{M. Bramson}, \bibinfo{person}{Y. Lu}, {and}
  \bibinfo{person}{B. Prabhakar}.} \bibinfo{year}{2013}\natexlab{}.
\newblock \showarticletitle{Decay of tails at equilibrium for FIFO join the
  shortest queue networks}.
\newblock \bibinfo{journal}{\emph{Ann. Appl. Probab.}} \bibinfo{volume}{23},
  \bibinfo{number}{5} (\bibinfo{date}{10} \bibinfo{year}{2013}),
  \bibinfo{pages}{1841--1878}.
\newblock
\urldef\tempurl%
\url{https://doi.org/10.1214/12-AAP888}
\showDOI{\tempurl}


\bibitem[\protect\citeauthoryear{Brun and Garcia}{Brun and Garcia}{2000}]%
        {brun2000analytical}
\bibfield{author}{\bibinfo{person}{O. Brun} {and} \bibinfo{person}{J.-M.
  Garcia}.} \bibinfo{year}{2000}\natexlab{}.
\newblock \showarticletitle{Analytical solution of finite capacity M/D/1
  queues}.
\newblock \bibinfo{journal}{\emph{Journal of Applied Probability}}
  \bibinfo{volume}{37}, \bibinfo{number}{4} (\bibinfo{year}{2000}),
  \bibinfo{pages}{1092--1098}.
\newblock


\bibitem[\protect\citeauthoryear{Busic, Vliegen, and Scheller-Wolf}{Busic
  et~al\mbox{.}}{2012}]%
        {busicMOR}
\bibfield{author}{\bibinfo{person}{A. Busic}, \bibinfo{person}{I. Vliegen},
  {and} \bibinfo{person}{A. Scheller-Wolf}.} \bibinfo{year}{2012}\natexlab{}.
\newblock \showarticletitle{Comparing Markov Chains: Aggregation and Precedence
  Relations Applied to Sets of States, with Applications to Assemble-to-Order
  Systems}.
\newblock \bibinfo{journal}{\emph{Mathematics of Operations Research}}
  \bibinfo{volume}{37}, \bibinfo{number}{2} (\bibinfo{year}{2012}),
  \bibinfo{pages}{259--287}.
\newblock
\urldef\tempurl%
\url{https://doi.org/10.1287/moor.1110.0533}
\showDOI{\tempurl}


\bibitem[\protect\citeauthoryear{Delgado, Didona, Dinu, and Zwaenepoel}{Delgado
  et~al\mbox{.}}{2016}]%
        {delgado2016job}
\bibfield{author}{\bibinfo{person}{P. Delgado}, \bibinfo{person}{D. Didona},
  \bibinfo{person}{F. Dinu}, {and} \bibinfo{person}{W. Zwaenepoel}.}
  \bibinfo{year}{2016}\natexlab{}.
\newblock \showarticletitle{Job-aware scheduling in eagle: Divide and stick to
  your probes}. In \bibinfo{booktitle}{\emph{Proceedings of the Seventh ACM
  Symposium on Cloud Computing}}. \bibinfo{pages}{497--509}.
\newblock


\bibitem[\protect\citeauthoryear{Delgado, Dinu, Kermarrec, and
  Zwaenepoel}{Delgado et~al\mbox{.}}{2015}]%
        {delgado2015hawk}
\bibfield{author}{\bibinfo{person}{P. Delgado}, \bibinfo{person}{F. Dinu},
  \bibinfo{person}{A.-M. Kermarrec}, {and} \bibinfo{person}{W. Zwaenepoel}.}
  \bibinfo{year}{2015}\natexlab{}.
\newblock \showarticletitle{Hawk: Hybrid datacenter scheduling}. In
  \bibinfo{booktitle}{\emph{2015 $\{$USENIX$\}$ Annual Technical Conference
  ($\{$USENIX$\}$$\{$ATC$\}$ 15)}}. \bibinfo{pages}{499--510}.
\newblock


\bibitem[\protect\citeauthoryear{Gaver, Jacobs, and Latouche}{Gaver
  et~al\mbox{.}}{1984}]%
        {gaver1}
\bibfield{author}{\bibinfo{person}{D.P. Gaver}, \bibinfo{person}{P.A. Jacobs},
  {and} \bibinfo{person}{G. Latouche}.} \bibinfo{year}{1984}\natexlab{}.
\newblock \showarticletitle{Finite {Birth-and-Death} models in randomly
  changing environments}.
\newblock \bibinfo{journal}{\emph{Adv. in Appl. Probab.}}  \bibinfo{volume}{16}
  (\bibinfo{year}{1984}), \bibinfo{pages}{715--731}.
\newblock


\bibitem[\protect\citeauthoryear{G.H.~Hardy and Pólya}{G.H.~Hardy and
  Pólya}{1952}]%
        {hardy1952}
\bibfield{author}{\bibinfo{person}{J.E.~Littlewood G.H.~Hardy} {and}
  \bibinfo{person}{G. Pólya}.} \bibinfo{year}{1952}\natexlab{}.
\newblock \bibinfo{booktitle}{\emph{Inequalities}}.
\newblock \bibinfo{publisher}{2nd edition, Cambridge University Press}.
\newblock


\bibitem[\protect\citeauthoryear{Hellemans and Van~Houdt}{Hellemans and
  Van~Houdt}{2018}]%
        {hellemansSIG18}
\bibfield{author}{\bibinfo{person}{T. Hellemans} {and} \bibinfo{person}{B.
  Van~Houdt}.} \bibinfo{year}{2018}\natexlab{}.
\newblock \showarticletitle{On the Power-of-d-choices with Least Loaded Server
  Selection}.
\newblock \bibinfo{journal}{\emph{Proc. ACM Meas. Anal. Comput. Syst.}}
  (\bibinfo{date}{June} \bibinfo{year}{2018}).
\newblock
\showISSN{2476-1249}


\bibitem[\protect\citeauthoryear{Hellemans and Van~Houdt}{Hellemans and
  Van~Houdt}{2021}]%
        {hellemans2021mean}
\bibfield{author}{\bibinfo{person}{T. Hellemans} {and} \bibinfo{person}{B.
  Van~Houdt}.} \bibinfo{year}{2021}\natexlab{}.
\newblock \showarticletitle{Mean Waiting Time in Large-Scale and Critically
  Loaded Power of d Load Balancing Systems}.
\newblock \bibinfo{journal}{\emph{Proceedings of the ACM on Measurement and
  Analysis of Computing Systems}} \bibinfo{volume}{5}, \bibinfo{number}{2}
  (\bibinfo{year}{2021}), \bibinfo{pages}{1--34}.
\newblock


\bibitem[\protect\citeauthoryear{Kriege and Buchholz}{Kriege and
  Buchholz}{2014}]%
        {Kriege2014}
\bibfield{author}{\bibinfo{person}{J. Kriege} {and} \bibinfo{person}{P.
  Buchholz}.} \bibinfo{year}{2014}\natexlab{}.
\newblock \bibinfo{booktitle}{\emph{PH and MAP Fitting with Aggregated Traffic
  Traces}}.
\newblock \bibinfo{publisher}{Springer International Publishing},
  \bibinfo{address}{Cham}, \bibinfo{pages}{1--15}.
\newblock
\showISBNx{978-3-319-05359-2}
\urldef\tempurl%
\url{https://doi.org/10.1007/978-3-319-05359-2_1}
\showDOI{\tempurl}


\bibitem[\protect\citeauthoryear{Latouche and Ramaswami}{Latouche and
  Ramaswami}{1999}]%
        {latouche1}
\bibfield{author}{\bibinfo{person}{G. Latouche} {and} \bibinfo{person}{V.
  Ramaswami}.} \bibinfo{year}{1999}\natexlab{}.
\newblock \bibinfo{booktitle}{\emph{Introduction to Matrix Analytic Methods and
  stochastic modeling}}.
\newblock \bibinfo{publisher}{SIAM}, \bibinfo{address}{Philadelphia}.
\newblock


\bibitem[\protect\citeauthoryear{Maguluri and Srikant}{Maguluri and
  Srikant}{2016}]%
        {maguluri2016heavy}
\bibfield{author}{\bibinfo{person}{S.~T. Maguluri} {and} \bibinfo{person}{R.
  Srikant}.} \bibinfo{year}{2016}\natexlab{}.
\newblock \showarticletitle{Heavy traffic queue length behavior in a switch
  under the MaxWeight algorithm}.
\newblock \bibinfo{journal}{\emph{Stochastic Systems}} \bibinfo{volume}{6},
  \bibinfo{number}{1} (\bibinfo{year}{2016}), \bibinfo{pages}{211--250}.
\newblock


\bibitem[\protect\citeauthoryear{Mitzenmacher}{Mitzenmacher}{2001}]%
        {mitzenmacher2}
\bibfield{author}{\bibinfo{person}{M. Mitzenmacher}.}
  \bibinfo{year}{2001}\natexlab{}.
\newblock \showarticletitle{The Power of Two Choices in Randomized Load
  Balancing}.
\newblock \bibinfo{journal}{\emph{IEEE Trans. Parallel Distrib. Syst.}}
  \bibinfo{volume}{12} (\bibinfo{date}{October} \bibinfo{year}{2001}),
  \bibinfo{pages}{1094--1104}.
\newblock
Issue 10.
\showISSN{1045-9219}


\bibitem[\protect\citeauthoryear{Miyazawa}{Miyazawa}{1990}]%
        {miyazawa1990complementary}
\bibfield{author}{\bibinfo{person}{M. Miyazawa}.}
  \bibinfo{year}{1990}\natexlab{}.
\newblock \showarticletitle{Complementary generating functions for the
  MX/GI/1/k and GI/MY/1/k queues and their application to the comparison of
  loss probabilities}.
\newblock \bibinfo{journal}{\emph{Journal of applied probability}}
  \bibinfo{volume}{27}, \bibinfo{number}{3} (\bibinfo{year}{1990}),
  \bibinfo{pages}{684--692}.
\newblock


\bibitem[\protect\citeauthoryear{Neuts}{Neuts}{1981}]%
        {neuts2}
\bibfield{author}{\bibinfo{person}{M.F. Neuts}.}
  \bibinfo{year}{1981}\natexlab{}.
\newblock \bibinfo{booktitle}{\emph{Matrix-Geometric Solutions in Stochastic
  Models, An Algorithmic Approach}}.
\newblock \bibinfo{publisher}{John Hopkins University Press}.
\newblock


\bibitem[\protect\citeauthoryear{O'Cinneide}{O'Cinneide}{1991}]%
        {ocinneide2}
\bibfield{author}{\bibinfo{person}{C. O'Cinneide}.}
  \bibinfo{year}{1991}\natexlab{}.
\newblock \showarticletitle{Phase-type distributions and majorizations}.
\newblock \bibinfo{journal}{\emph{Annals of Applied Probability}}
  \bibinfo{volume}{1}, \bibinfo{number}{2} (\bibinfo{year}{1991}),
  \bibinfo{pages}{219--227}.
\newblock


\bibitem[\protect\citeauthoryear{Ousterhout, Wendell, Zaharia, and
  Stoica}{Ousterhout et~al\mbox{.}}{2013}]%
        {Sparrow}
\bibfield{author}{\bibinfo{person}{K. Ousterhout}, \bibinfo{person}{P.
  Wendell}, \bibinfo{person}{M. Zaharia}, {and} \bibinfo{person}{I. Stoica}.}
  \bibinfo{year}{2013}\natexlab{}.
\newblock \showarticletitle{Sparrow: Distributed, Low Latency Scheduling}. In
  \bibinfo{booktitle}{\emph{Proceedings of the Twenty-Fourth ACM Symposium on
  Operating Systems Principles}} \emph{(\bibinfo{series}{SOSP '13})}.
  \bibinfo{publisher}{ACM}, \bibinfo{address}{New York, NY, USA},
  \bibinfo{pages}{69--84}.
\newblock
\showISBNx{978-1-4503-2388-8}
\urldef\tempurl%
\url{https://doi.org/10.1145/2517349.2522716}
\showDOI{\tempurl}


\bibitem[\protect\citeauthoryear{Panchenko and Th\"{u}mmler}{Panchenko and
  Th\"{u}mmler}{2007}]%
        {panchenko1}
\bibfield{author}{\bibinfo{person}{A. Panchenko} {and} \bibinfo{person}{A.
  Th\"{u}mmler}.} \bibinfo{year}{2007}\natexlab{}.
\newblock \showarticletitle{Efficient Phase-type Fitting with Aggregated
  Traffic Traces}.
\newblock \bibinfo{journal}{\emph{Perform. Eval.}} \bibinfo{volume}{64},
  \bibinfo{number}{7-8} (\bibinfo{date}{Aug.} \bibinfo{year}{2007}),
  \bibinfo{pages}{629--645}.
\newblock
\showISSN{0166-5316}
\urldef\tempurl%
\url{https://doi.org/10.1016/j.peva.2006.09.002}
\showDOI{\tempurl}


\bibitem[\protect\citeauthoryear{Tsitsiklis and Xu}{Tsitsiklis and Xu}{2012}]%
        {tsitsiklis2012power}
\bibfield{author}{\bibinfo{person}{J.~N. Tsitsiklis} {and} \bibinfo{person}{K.
  Xu}.} \bibinfo{year}{2012}\natexlab{}.
\newblock \showarticletitle{On the power of (even a little) resource pooling}.
\newblock \bibinfo{journal}{\emph{Stochastic Systems}} \bibinfo{volume}{2},
  \bibinfo{number}{1} (\bibinfo{year}{2012}), \bibinfo{pages}{1--66}.
\newblock


\bibitem[\protect\citeauthoryear{Tsitsiklis and Xu}{Tsitsiklis and Xu}{2013}]%
        {tsitsiklis2013power}
\bibfield{author}{\bibinfo{person}{J.~N. Tsitsiklis} {and} \bibinfo{person}{K.
  Xu}.} \bibinfo{year}{2013}\natexlab{}.
\newblock \showarticletitle{On the power of (even a little) resource pooling}.
\newblock \bibinfo{journal}{\emph{Stochastic Systems}} \bibinfo{volume}{2},
  \bibinfo{number}{1} (\bibinfo{year}{2013}), \bibinfo{pages}{1--66}.
\newblock


\bibitem[\protect\citeauthoryear{van~der Boor, Borst, and van
  Leeuwaarden}{van~der Boor et~al\mbox{.}}{2019}]%
        {van2019hyper}
\bibfield{author}{\bibinfo{person}{Mark van~der Boor}, \bibinfo{person}{Sem
  Borst}, {and} \bibinfo{person}{Johan van Leeuwaarden}.}
  \bibinfo{year}{2019}\natexlab{}.
\newblock \showarticletitle{Hyper-scalable JSQ with sparse feedback}.
\newblock \bibinfo{journal}{\emph{Proceedings of the ACM on Measurement and
  Analysis of Computing Systems}} \bibinfo{volume}{3}, \bibinfo{number}{1}
  (\bibinfo{year}{2019}), \bibinfo{pages}{1--37}.
\newblock


\bibitem[\protect\citeauthoryear{van~der Boor, Borst, and van
  Leeuwaarden}{van~der Boor et~al\mbox{.}}{2021}]%
        {van2021optimal}
\bibfield{author}{\bibinfo{person}{Mark van~der Boor}, \bibinfo{person}{Sem
  Borst}, {and} \bibinfo{person}{Johan van Leeuwaarden}.}
  \bibinfo{year}{2021}\natexlab{}.
\newblock \showarticletitle{Optimal hyper-scalable load balancing with a strict
  queue limit}.
\newblock \bibinfo{journal}{\emph{Performance Evaluation}}
  (\bibinfo{year}{2021}), \bibinfo{pages}{102217}.
\newblock


\bibitem[\protect\citeauthoryear{Vargaftik, Keslassy, and Orda}{Vargaftik
  et~al\mbox{.}}{2020}]%
        {vargaftik20}
\bibfield{author}{\bibinfo{person}{S. Vargaftik}, \bibinfo{person}{I.
  Keslassy}, {and} \bibinfo{person}{A. Orda}.} \bibinfo{year}{2020}\natexlab{}.
\newblock \showarticletitle{LSQ: Load Balancing in Large-Scale Heterogeneous
  Systems With Multiple Dispatchers}.
\newblock \bibinfo{journal}{\emph{IEEE/ACM Trans. Netw.}} \bibinfo{volume}{28},
  \bibinfo{number}{3} (\bibinfo{date}{June} \bibinfo{year}{2020}),
  \bibinfo{pages}{1186–1198}.
\newblock
\showISSN{1063-6692}
\urldef\tempurl%
\url{https://doi.org/10.1109/TNET.2020.2980061}
\showDOI{\tempurl}


\bibitem[\protect\citeauthoryear{Vvedenskaya, Dobrushin, and
  Karpelevich}{Vvedenskaya et~al\mbox{.}}{1996}]%
        {vvedenskaya3}
\bibfield{author}{\bibinfo{person}{N.D. Vvedenskaya}, \bibinfo{person}{R.L.
  Dobrushin}, {and} \bibinfo{person}{F.I. Karpelevich}.}
  \bibinfo{year}{1996}\natexlab{}.
\newblock \showarticletitle{Queueing System with Selection of the Shortest of
  Two Queues: an Asymptotic Approach}.
\newblock \bibinfo{journal}{\emph{Problemy Peredachi Informatsii}}
  \bibinfo{volume}{32} (\bibinfo{year}{1996}), \bibinfo{pages}{15--27}.
\newblock


\bibitem[\protect\citeauthoryear{Ying, Srikant, and Kang}{Ying
  et~al\mbox{.}}{2017}]%
        {ying2017power}
\bibfield{author}{\bibinfo{person}{L. Ying}, \bibinfo{person}{R. Srikant},
  {and} \bibinfo{person}{X. Kang}.} \bibinfo{year}{2017}\natexlab{}.
\newblock \showarticletitle{The power of slightly more than one sample in
  randomized load balancing}.
\newblock \bibinfo{journal}{\emph{Mathematics of Operations Research}}
  \bibinfo{volume}{42}, \bibinfo{number}{3} (\bibinfo{year}{2017}),
  \bibinfo{pages}{692--722}.
\newblock


\bibitem[\protect\citeauthoryear{Zhou, Shroff, and Wierman}{Zhou
  et~al\mbox{.}}{2021}]%
        {zhou2021asymptotically}
\bibfield{author}{\bibinfo{person}{X. Zhou}, \bibinfo{person}{N. Shroff}, {and}
  \bibinfo{person}{A. Wierman}.} \bibinfo{year}{2021}\natexlab{}.
\newblock \showarticletitle{Asymptotically optimal load balancing in
  large-scale heterogeneous systems with multiple dispatchers}.
\newblock \bibinfo{journal}{\emph{Performance Evaluation}}
  \bibinfo{volume}{145} (\bibinfo{year}{2021}), \bibinfo{pages}{102146}.
\newblock


\end{thebibliography}

\end{document}